\def\input@path{{./}}\makeatother
\newcommand{\colr}{}
\def\bs{\boldsymbol}
\def\doubleR{\mathbb{R}}
\def\ci{\perp\!\!\!\!\perp}
\def\distiid{\overset{iid}{\sim}}
\newtheorem{definition}{Definition}
\newtheorem{proposition}{Proposition}
\newtheorem{assumption}{Assumption}
\newtheorem{lemma}{Lemma}
\newcommand*\patchAmsMathEnvironmentForLineno[1]{
  \expandafter\let\csname old#1\expandafter\endcsname\csname #1\endcsname
  \expandafter\let\csname oldend#1\expandafter\endcsname\csname end#1\endcsname
  \renewenvironment{#1}
     {\linenomath\csname old#1\endcsname}
     {\csname oldend#1\endcsname\endlinenomath}}
\newcommand*\patchBothAmsMathEnvironmentsForLineno[1]{
  \patchAmsMathEnvironmentForLineno{#1}
  \patchAmsMathEnvironmentForLineno{#1*}}
\title{Design and Analysis Considerations for Causal Inference under Two-Phase Sampling in Observational Studies}
\author[1]{Kazuharu Harada}
\author[1]{Masataka Taguri}
\affil[1]{Tokyo Medical University, Tokyo, Japan}
\begin{document}
\maketitle

\begin{abstract}
Two-phase sampling is a simple and cost-effective estimation strategy in survey sampling and is widely used in practice. Because the phase-2 sampling probability typically depends on low-cost variables collected at phase 1, naive estimation based solely on the phase-2 sample generally results in biased inference. This issue arises even when estimating causal parameters such as the average treatment effect (ATE), and there has been growing interest in recent years in the proper estimation of such parameters under complex sampling designs (e.g., Nattino et al., 2025).

In this paper, we derive the semiparametric efficiency bound for a broad class of weighted average treatment effects (WATE), which includes the ATE, the average treatment effect on the treated/untreated (ATT/ATU), and the average treatment effect on the overlapped population (ATO), under two-phase sampling. In addition to straightforward weighting estimators based on the sampling probabilities, we also propose estimators that can attain strictly higher efficiency under suitable conditions. In particular, under outcome-dependent sampling, we show that substantial efficiency gains can be achieved by appropriately incorporating phase-1 information. We further conduct extensive simulation studies, varying the choice of phase-1 variables and sampling schemes, to characterize when and to what extent leveraging phase-1 information leads to efficiency gains.
\end{abstract}


\section{Introduction}

Many statistical methods are developed under the assumption that an unbiased random sample can be obtained from the target population. This assumption ensures that standard estimators such as sample means, regression coefficients, and causal effects possess desirable properties such as unbiasedness and consistency.
However, in practice, simple random sampling is often infeasible due to technological, logistical, or financial constraints. Consequently, researchers often rely on sampling schemes that depart from simple random sampling, allowing inclusion probabilities to differ across subpopulations according to feasibility considerations.

Survey sampling is a well-established field of statistics that systematically investigates how sampling design affects estimation and inference. Foundational works \citep[e.g.,][]{Cochran1977-bs, Sarndal2003-ol, Lohr2021-dl} have extensively discussed the influence of design features, such as stratification, clustering, and unequal-probability selection, on the estimation of population quantities, and have developed principles for constructing unbiased or efficient estimators under these complex sampling schemes.

Among these designs, two-phase sampling has played a particularly important role in both survey and biomedical research \citep{Neyman1938-dj, Hansen1946-dr, Prentice1986-sc}. In a typical two-phase design, inexpensive or routinely available variables are collected at phase 1, possibly for the entire database, and a subsample is then selected at phase 2 to obtain additional variables that are costly or time-consuming to measure. This flexibility allows researchers to oversample rare or underrepresented strata based on phase 1 variables, thereby improving efficiency and feasibility compared with uniform random sampling.

Such designs are especially valuable in observational causal studies, where adjustment for confounding is essential but measuring all relevant covariates for the entire population is often infeasible. {\colr A well-known example is the European Prospective Investigation into Cancer and Nutrition (EPIC) cohort \citep{Riboli2002-ep}, which enrolled over 500{,}000 participants and collected baseline questionnaires and biospecimens at phase 1, while costly assays of circulating hormones, metabolomics panels, or genotypes are performed at phase 2 only on cases and a randomly selected subcohort. Within this cohort, numerous causal analyses have addressed the effects of dietary, hormonal, or genetic exposures on chronic disease risk \citep[e.g.,][]{Forouhi2014-fa}.}

Consequently, increasing attention has been paid to the interplay between sampling designs and causal inference in recent years \citep[e.g.,][]{Nattino2024-zf, Yukang2025-da}. These studies reveal that the choice of sampling scheme can substantially affect the efficiency and potential bias of causal effect estimators.

This paper focuses on a broad and practically important class of causal estimands known as the weighted average treatment effects (WATEs), which include many commonly used parameters such as the average treatment effect (ATE), average treatment effect on the treated/untreated (ATT/ATU), and average treatment effect on the overlapped population (ATO). We derive the semiparametric efficiency bound for WATE and examine characteristics of several consistent estimators of WATE.

The present study is closely related to several key works.
First, \citet{Wang2009-kn} derived the semiparametric efficiency bound for the ATE under two-phase sampling that allows outcome variables to be included in phase 1, assuming a parametric model for the propensity score. They proposed a locally efficient estimator that attains this bound asymptotically, as well as a simplified sub-efficient estimator for practical implementation.
\citet{Zhang2019-dx} discussed efficient ATE estimation when sampling depends solely on treatment assignment.
\citet{Song2022-hc} considered stratified two-phase sampling based on the binary treatment and finite and discrete covariates, establishing efficiency bounds and efficient estimators for both ATE and ATT.

Our contributions can be summarized as follows:{\colr
\begin{enumerate}
    \item We derive the efficient influence function and semiparametric efficiency bound for the entire WATE class under two-phase sampling with no structural restrictions on the nuisance models.
    \item We propose two consistent estimators, an inverse probability of sampling weighted (IPSW) estimator and an enriched doubly robust (EDR) estimator, show that the latter attains the efficiency bound, and obtain a closed-form decomposition of the enrichment gain into residual variance and effect-heterogeneity components.
    \item We provide a double-machine-learning formulation with cross-fitting that preserves $\sqrt{n}$-asymptotic normality and the efficiency bound under data-adaptive nuisance estimation.
    \item Combining the theoretical decomposition with extensive simulations, we identify the phase-1 variables that yield the largest efficiency gains, notably the outcome under outcome-dependent sampling and effect-modifying covariates, giving concrete design guidance for two-phase observational causal studies.
\end{enumerate}
}

\section{Preliminaries}
\subsection{Causal Inference under Two-Phase Sampling} \label{sec:prelim1}
In this section, we introduce the identification and basic estimators of causal effects under two-phase sampling within the potential outcomes framework \citep{Rubin1974-zp, Holland1986-ut}.
Two-phase sampling is a cost-efficient strategy that aims to improve estimation efficiency while reducing data collection costs.
Specifically, in phase 1, inexpensive variables are collected for a sample of size $n$, and in phase 2, a subsample of size $m~(< n)$ is selected to collect additional variables that are costly to measure.

Throughout this paper, we assume that low-cost covariates $V$ and the binary treatment variable $A\in\{0,1\}$ are always observed in phase 1.
The outcome variable $Y$, which may be binary or continuous, may or may not be collected in phase 1; we consider both scenarios, theoretically in a unified manner and separately in the simulation study.
Let $S$ denote the collection of all variables observed in phase 1.
Additional high-cost covariates collected in phase 2 are denoted by $W$, and the full set of covariates is expressed as {\colr $X=(V, W)$}.
Let $\delta\in\{0,1\}$ be the indicator denoting whether a unit is selected into phase 2, and let the known (designed) sampling probability be $q(S=s) = P(\delta=1\mid S=s)$.

When $S$ is a discrete variable with finite support, a common sampling design is simple random sampling without replacement (SRSWOR) within strata defined by $S$.
In this case, the phase 1 sample is partitioned into strata, and a predetermined number of individuals are chosen for each stratum.
The sampling probability is then given by $q(S=k) = m_{k}/n_{k}$
where $n_{k}$ and $m_{k}$ denote the phase-1 and phase-2 sample sizes within stratum $S=k$, respectively.
Alternatively, one may predefine $q(S=k)$ and perform Bernoulli sampling independently for each $i$ with probability $q(S_i)$; this design is referred to as Poisson sampling.

Let $Y^1$ and $Y^0$ denote the potential outcomes that would be observed under treatment and control.
We assume consistency between observed and potential outcomes, i.e., $Y = AY^1 + (1-A)Y^0$.
Since only one of $(Y^1, Y^0)$ is observed for each individual, the individual treatment effect is not identifiable, and we instead focus on estimating an average treatment effect at the population level.
In observational studies, treatment $A$ is not randomized, so confounding bias must be addressed.
Instead of $Y^a \ci A~(\text{for }a=0,1)$ as in randomized experiments, we assume conditional unconfoundedness given covariates $X$:
\begin{gather*}
    Y^a\ci A\mid X~~\text{for }a=0,1.
\end{gather*}
We also assume positivity; that is, there exists a constant $c \in (0,1)$ such that $\pi(x) \in (c, 1-c)$ for all $x$ in the support of $X$, where $\pi(x) = P(A=1 \mid X=x)$ denotes the propensity score. When necessary, we also write $\pi_a(x) = P(A=a \mid X=x)$ for $a \in \{0,1\}$.
Positivity ensures that every individual has a nonzero probability of receiving each treatment level.
Although this assumption is not always required for identification or estimation of certain target parameters in this work, we adopt it for notational and conceptual uniformity \citep[][]{Li2018-cq, Matsouaka2024-ht, Matsouaka2024-vl}.

The target parameter of this paper is WATE, defined as the weighted average of the difference between the potential outcomes under treatment and control in the population. Specifically, we consider the case in which the weight is a function $h:[0,1]\to\mathbb{R}_+$ of the propensity score. Then, WATE is given by
\begin{gather*}
\tau_{h} = C_h^{-1}E[h\{\pi_1(X)\}(Y^1 - Y^0)],
\end{gather*}
where $C_h = E[h\{\pi_1(X)\}]$.

WATE is a general class of treatment effects that includes many practically useful parameters, such as
{\colr
\begin{itemize}
    \item ATE with $h\{\pi_1(x)\}=1$,
    \item ATT with $h\{\pi_1(x)\}=\pi_1(x)$,
    \item ATU with $h\{\pi_1(x)\}=1-\pi_1(x)$, and
    \item ATO with $h\{\pi_1(x)\}=\pi_1(x)(1-\pi_1(x))$.
\end{itemize}
}
For further examples belonging to this class, see \citet{Matsouaka2024-ht, Matsouaka2024-vl, Yiming2025-do}.

Let $\mu_a(x) = E[Y^a\mid X=x]$ $(a=0,1)$ denote the conditional expectation of the potential outcome given covariates $X=x$, and define the conditional average treatment effect (CATE) as
\begin{gather*}
\tau(x) = \mu_1(x) - \mu_0(x) = E[Y^1 - Y^0\mid X=x].
\end{gather*}
Then, WATE can be equivalently expressed as
\begin{gather*}
\tau_{h}
= C_h^{-1}E[h\{\pi_1(X)\}\{\mu_1(X) - \mu_0(X)\}]
= C_h^{-1}E[h\{\pi_1(X)\}\tau(X)].
\end{gather*}
For notational simplicity, we let $q_i := q(S_i)$, $\pi_{ai} := \pi_a(X_i)$ for $a=0,1$, $\mathbf{1}^A_{ai} := \mathbf{1}\{A_i=a\}$, $\mu_{ai} := \mu_a(X_i)$, $\tau_i := \tau(X_i)$, and $h_i := h(\pi_{1i})$.
A hat ($\hat{\ }$) on these symbols denotes the corresponding estimator.

Under two-phase sampling, we consider a basic inverse probability of treatment weighted (IPTW) estimator of H\'{a}jek type \citep{Hajek1964-km} for ATE $\tau_{h=1}$, for example:
\begin{gather*}
    \hat\tau^{\text{iptw}}_{h=1} = \left[\sum_{i=1}^n\frac{\delta_i\mathbf{1}^A_{1i}}{q_i\hat \pi_{1i}}\right]^{-1}
    \left[\sum_{i=1}^n\frac{\delta_i\mathbf{1}^A_{1i}}{q_i\hat \pi_{1i}}Y_i\right] - 
    \left[\sum_{i=1}^n\frac{\delta_i\mathbf{1}^A_{0i}}{q_i\hat\pi_{0i}}\right]^{-1}
    \left[\sum_{i=1}^n\frac{\delta_i\mathbf{1}^A_{0i}}{q_i\hat\pi_{0i}}Y_i\right].
\end{gather*}

Since the propensity score is a function of $X$, which includes the phase-2 variables $W$, it is necessary to account for the two-phase sampling design when estimating the propensity score.
For example, suppose that a parametric logistic regression model $\pi(x;\alpha)$ is specified for the propensity score, with parameter $\alpha$, and let $u_\pi(a,x;\alpha)$ denote the corresponding score function.
Then, under correct model specification and known sampling probabilities, the solution $\hat\alpha$ to the following weighted estimating equation yields a consistent estimator of the true value of $\alpha$:
\begin{gather*}
    \sum_{i=1}^n \frac{\delta_i}{q_i} u_\pi(A_i,X_i;\hat\alpha) = 0.
\end{gather*}
If the propensity score is consistently estimated, then $\hat\tau^{\mathrm{iptw}}_{h=1}$ is a consistent estimator of the ATE.

\subsection{Semiparametric Theory} \label{sec:prelim2}
The statistical model considered in this paper is a semiparametric model that assumes only a finite-dimensional target parameter and the identification assumptions required for it (e.g., conditional unconfoundedness), without placing restrictions on the functional forms of quantities such as the propensity score $\pi(x)$ or the CATE $\tau(x)$.
This section provides an overview of the theory for efficient estimation of a target parameter under a semiparametric model.

Let $\mathcal{M}$ denote a semiparametric model, and suppose that we observe independent samples $Z_1,\ldots,Z_n$ drawn from a distribution $P\in\mathcal{M}$.
For simplicity, we assume that $Z$ is continuously distributed with density $p$.
Consider a regular one-dimensional parametric submodel $P_\varepsilon\in\mathcal{M}$ satisfying $P_\varepsilon = P$ when $\varepsilon = 0$, and let $u = \partial_\varepsilon \log{p_\varepsilon}|_{\varepsilon=0}$ denote the score function with respect to $\varepsilon$.
The tangent space $\mathcal{T}$ of $\mathcal{M}$ is then defined as the closure of the linear span of the score functions of all regular parametric submodels.
Under the regularity of the model, any element of $\mathcal{T}$ has mean zero and is square-integrable with respect to $P$.

Meanwhile, the target parameter is expressed as a functional $\theta(P)$ of the distribution $P$. Differentiating the target parameter along a regular one-dimensional parametric submodel $P_\varepsilon$ yields the infinitesimal change of the target parameter when the distribution is locally perturbed around the truth. There exists a unique $\phi\in\mathcal{T}$ satisfying
\begin{gather*}
    \dot\theta(P_\varepsilon) = \left.\frac{d}{d\varepsilon}\theta(P_\varepsilon)\right|_{\varepsilon=0}
        = E[\phi(Z)u(Z)]
\end{gather*}
for any regular submodel.
This $\phi$ is called the efficient influence function (EIF), and the semiparametric efficiency bound for estimating the parameter $\theta(P)$ is given by $\text{Var}(\phi(Z))$.
In other words, the EIF is the unique element in the tangent space that consistently represents the infinitesimal change in the parameter, for any regular perturbation of the distribution, as an inner product with the corresponding score function.

Let $\hat\theta_n$ be an estimator of $\theta(P)$.
If it admits the representation
\begin{gather*}
    \sqrt{n}(\hat\theta_n - \theta(P)) = n^{-1/2}\sum_{i=1}^{n} \tilde\phi(Z_i) + o_p(1)
\end{gather*}
then $\hat\theta_n$ is said to be regular asymptotically linear (RAL). Here, $\tilde\phi(Z_i)$ is a mean-zero, square-integrable function with respect to $P$, called the influence function (IF).
By the central limit theorem, a RAL estimator satisfies
\begin{gather*}
    \sqrt{n}(\hat\theta_n - \theta(P)) \overset{d}{\rightarrow} N(0,\text{Var}(\tilde\phi(Z)))
\end{gather*}
and thus its efficiency can be evaluated through the variance of its IF.
For any RAL estimator of $\theta(P)$, the IF satisfies $\dot\theta(P_\varepsilon) = E[\tilde\phi(Z)u(Z)]$.
The EIF can also be characterized as the projection of any IF onto the tangent space $\mathcal{T}$.
Among all RAL estimators, those whose IF is equal to the EIF achieve the semiparametric efficiency bound and are therefore efficient estimators.

Under simple random sampling, several consistent estimators for WATE have been developed \citep[e.g.,][]{Li2018-cq,Tao2019-db,Yiming2025-do}. As one example, we consider an estimator defined as the solution to an estimating equation based on the EIF under a semiparametric model where the nuisance parameters, $\pi$, $\mu_1$, and $\mu_0$, are left non-parametric.

Let all nuisance parameters be collectively denoted by $\eta$.
The EIF for $\tau_h$ is given by
\begin{align*}
    \phi(O_i;\tau_h,\eta)
    =&~ C_h^{-1}h_i\left\{\frac{\mathbf{1}^A_{1i}}{\pi_{1i}}(Y_i - \mu_{1i}) - \frac{\mathbf{1}^A_{0i}}{\pi_{0i}}(Y_i - \mu_{0i})\right\} \\
    &~ + C_h^{-1}\{h_i + h'(\pi_{1i})(A_i-\pi_{1i})\}(\tau_i - \tau_h),
\end{align*}
where $O_i = (A_i, X_i, Y_i)$ for $i\in\{1,\ldots,n\}$, and $h'$ denotes the derivative of $h$ with respect to its scalar argument.

The RAL estimator $\hat\tau_h$ with this EIF can be constructed as the solution to the following estimating equation, in which the nuisance components $\eta$ are replaced by their estimators:
\begin{align*}
    \sum_{i=1}^n \left[
        \hat h_i\left\{\frac{\mathbf{1}^A_{1i}}{\hat\pi_{1i}}(Y_i - \hat\mu_{1i}) - \frac{\mathbf{1}^A_{0i}}{\hat\pi_{0i}}(Y_i - \hat\mu_{0i})\right\}
        + \{\hat h_i + h'(\hat\pi_{1i})(A_i-\hat\pi_{1i})\}(\hat\tau_i - \hat\tau_h)
    \right] = 0.
\end{align*}

Typically, the nuisance estimators are obtained using parametric working models.
If $\pi$ is consistently estimated, this estimator is consistent for $\tau_h$.
Moreover, when $h$ is linear in the propensity score (as in the ATE or ATT), the estimator enjoys double robustness: it is consistent if either the propensity score model $\pi$ or the outcome regression models $(\mu_1,\mu_0)$ are correctly specified \citep{Bang2005-sc,Tao2019-db}. 
When all working models are correctly specified, this estimator is semiparametrically efficient.

When the nuisance functions are estimated nonparametrically using flexible machine learning methods, the double machine learning framework is often employed \citep[DML;][]{Chernozhukov2017-yn, Chernozhukov2018-cc, Yiming2025-do}. The DML procedure partitions the sample into independent folds, estimating the nuisance components on one subsample and evaluating the target estimating equation on another. We introduce the DML estimator of WATE under two-phase sampling in Section \ref{sec:DML}.

For further details on semiparametric theory, see, for example, \citet{Bickel1998-el} and \citet{Tsiatis2006-sz}; for semiparametric theory in the context of statistical causal inference, see \citet{Kennedy2017-pc, Fisher2021-nf, Hines2022-nh}.

\section{Semiparametric Efficiency Bound}
In this section, we derive the semiparametric efficiency bound for the WATE under the two-phase sampling.

We denote by $Z^\star=(A,V,W,Y^1,Y^0)$ the complete data.
Under consistency, the corresponding fully observed data are $Z=(A,V,W,Y)$.
Under the actual two-phase sampling design, the observed data take one of the following forms:
{\colr
\begin{gather*}
    O_{\mathrm{ODS}}=(A,V,\delta W,Y), \qquad O_{\mathrm{IS}}=(A,V,\delta W,\delta Y).
\end{gather*}
}
Here, $O_{\mathrm{ODS}}$ corresponds to outcome-dependent sampling (ODS), in which the outcome $Y$ is available for all individuals at phase 1 and the phase-2 subsampling determines whether $W$ is additionally observed. In contrast, $O_{\mathrm{IS}}$ corresponds to outcome-independent sampling, in which neither $W$ nor $Y$ is fully observed at phase 1 and both are observed only for sampled individuals.
When no confusion arises, we write $O$ to denote either $O_{\mathrm{ODS}}$ or $O_{\mathrm{IS}}$.
The phase-1 variable set $S\subset O$ is given by $S=(A,V,Y)$ in the former case and $S=(A,V)$ in the latter.

Under the assumptions required for identifying the treatment effect introduced in Section 2, and when the phase 2 sampling probability depends only on $S$, the tangent space of the observed-data model can be characterized as follows.

\begin{proposition}\label{prop:tangent}
    The score function spanning the tangent space of the model for the observed data is given by
    \begin{align*}
        u(o) =&~ \delta \left\{u_{F}(z) - E[u_{F}(z)\mid S=s]\right\} + u_{S}(s) \\
        u_{F}(z) =&~ a u_1(y\mid x) + (1-a)u_0(y\mid x) + b(x)(a-\pi(x)) + u_X(x),
    \end{align*}
    where $\int u_a(y\mid x)p_a(y\mid x)dy = 0$ for any $x$ with the conditional density of $Y^a$ given $X$ ($a\in\{0,1\}$), $u_X, u_S$ are arbitrary mean zero square-integrable functions, and $b$ is an arbitrary square-integrable function.
\end{proposition}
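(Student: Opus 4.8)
The plan is to derive the full-data tangent space for $O_1$, transport it through the two-phase coarsening mechanism, and then reparametrize the resulting observed-data scores into the stated form. The work splits into three steps, the last of which is the crux.

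First I would analyze $O_1=(A,V,W,Y)$ through the factorization $p(O_1)=p(Y\mid A,X)\,p(A\mid X)\,p(X)$ with $X=(V^\top,W^\top)^\top$. Since conditional unconfoundedness and positivity are identifying assumptions that impose no testable restriction on the law of $O_1$, the model for $O_1$ is nonparametric, and its tangent space is the orthogonal direct sum of the scores of the three factors. The score of $p(Y\mid A,X)$ is an arbitrary function with conditional mean zero given $(A,X)$, which I write as $Au_1(Y\mid X)+(1-A)u_0(Y\mid X)$ with $\int u_a(y\mid x)p_a(y\mid x)\,dy=0$; the binary $p(A\mid X)$ contributes $b(X)(A-e(X))$; and $p(X)$ contributes an arbitrary mean-zero $u_X(X)$. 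This reproduces the stated $u_F(O_1)$.

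Next I would transport this through the sampling mechanism. Writing $O_1=(S,R)$ with $R=W$ for $O_{2a}$ and $R=(W,Y)$ for $O_{2b}$, and using $\delta\ci R\mid S$ with known $q(S)$, the log-likelihood of one unit is $\delta\{\log p(S,R)+\log q(S)\}+(1-\delta)\{\log p(S)+\log(1-q(S))\}$. Because $q$ carries no model parameter and $\partial_\varepsilon\log p_\varepsilon(S)|_{\varepsilon=0}=E[u_F\mid S]$ follows by differentiating $p(S)=\int p(S,R)\,dR$, differentiating a regular submodel yields $u_{\text{obs}}=\delta u_F+(1-\delta)E[u_F\mid S]$, which I rearrange as $\delta\{u_F-E[u_F\mid S]\}+E[u_F\mid S]$.

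Finally I must show that the closed linear span is preserved when the specific term $E[u_F\mid S]$ is replaced by an arbitrary mean-zero function $u_S(S)$. Both inclusions reduce to one key fact: every mean-zero square-integrable function of $S$ already belongs to $\mathcal{T}_F$. Granting this, taking $u_F=g(S)$ shows that any such $g(S)$ is itself an observed-data score, after which adding and subtracting $E[u_F\mid S]$ closes both directions. I expect this key fact to be the main obstacle, and I would establish it by explicit decomposition. For $O_{2b}$ ($S=(A,V)$) I write $g(A,V)=f(V)+d(V)(A-E[A\mid V])$, identify $d(V)(A-e(X))$ as a propensity score, and absorb $d(V)(e(X)-E[A\mid V])$ together with $f(V)$ into covariate scores $u_X$. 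For $O_{2a}$ ($S=(A,V,Y)$) I first peel off the component of $g$ with conditional mean zero given $(A,X)$---an outcome score---and treat the remaining $(A,X)$-measurable piece exactly as in the $O_{2b}$ case. Verifying the mean-zero and square-integrability bookkeeping in these decompositions is the delicate part of the argument.
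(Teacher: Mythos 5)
Your proposal is correct, and its first two steps are essentially the paper's own argument: the paper likewise writes the observed-data likelihood with the known $q(S)$ factoring out (it splits into four pieces by $(A,\delta)$ rather than your two pieces by $\delta$, which is cosmetic), proves your marginal-score identity $\partial_\varepsilon\log p_\varepsilon(S)|_{\varepsilon=0}=E[u_F\mid S]$ as a standalone lemma, and invokes \citet{Hahn1998-lj} for the full-data decomposition of $u_F$ that you derive directly from the factorization $p(Y\mid A,X)\,p(A\mid X)\,p(X)$.

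Where you genuinely go beyond the paper is your third step. The paper's derivation produces scores of the form $E[u_F\mid S]+\delta\{u_F-E[u_F\mid S]\}$, in which the $S$-component is tied to $u_F$, and then passes to the stated form with \emph{arbitrary} mean-zero $u_S(S)$ by simply asserting ``by considering the closure of this submodel''; indeed, in the proof of Proposition 2 the paper works with the constrained version $u_S=E[u_F\mid S]$ throughout. You identify exactly what is needed to justify the decoupling---that every mean-zero square-integrable function $g(S)$ already lies in the full-data tangent space, so that $g(S)$ is itself an observed-data score (its $\delta$-term vanishes) and can be added freely---and your explicit decompositions check out: for $S=(A,V)$, writing $g(A,V)=f(V)+d(V)(A-E[A\mid V])$ and splitting $d(V)(A-E[A\mid V])=d(V)(A-e(X))+d(V)(e(X)-E[A\mid V])$ yields a valid propensity-score component plus a mean-zero function of $X$ (using $E[A\mid V]=E[e(X)\mid V]$); for $S=(A,V,Y)$, peeling off $g-E[g\mid A,X]$ gives a valid outcome score, and the remainder $E[g\mid A,X]$ decomposes as $\{E[g\mid A=1,X]-E[g\mid A=0,X]\}(A-e(X))+E[g\mid X]$. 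So your write-up supplies a rigorous justification for the ``arbitrary $u_S$'' phrasing of the proposition that the paper's own proof leaves implicit; what the paper's route buys in exchange is brevity and a form of the score that feeds directly into its verification of the EIF in Proposition 2.
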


Furthermore, by considering an appropriate one-dimensional parametric submodel and differentiating the WATE with respect to its parameter, we obtain the pathwise derivative. The EIF is then defined as the element of the tangent space whose inner product with any score function equals this pathwise derivative.

\begin{proposition}\label{prop:EIF}
   In an observational study, the complete data distribution satisfies the assumptions of conditional unconfoundedness, positivity, and consistency.
   When two-phase sampling is conducted based on the phase 1 variable $S$, the EIF of the WATE is given by 
    \begin{align*}
        \phi^{\mathrm{eff}}(o; \tau_h, \eta) 
            =&~ E[\phi_{F}(Z; \tau_h, \eta)\mid S=s]
            + \delta\left\{
                \frac{\phi_{F}(z; \tau_h, \eta)}{q(s)} - \frac{E[\phi_{F}(Z; \tau_h, \eta)\mid S=s]}{q(s)}
            \right\}\\
        \phi_{F}(z; \tau_h, \eta)
            =&~ C_h^{-1}h\{\pi_1(x)\}\left\{\frac{\mathbf{1}\{a=1\}}{\pi_1(x)}(y - \mu_{1}(x)) - \frac{\mathbf{1}\{a=0\}}{\pi_0(x)}(y - \mu_{0}(x)) + \tau(x) - \tau_h\right\}  \\
        &~~~ + C_h^{-1}h'(\pi_1(x))(\tau(x) - \tau_h)(a-\pi_1(x)).
    \end{align*}
     The semiparametric efficiency bound $\mathcal{V}_h$ is equal to the variance of $\phi^{\mathrm{eff}}(o; \tau_h, \eta)$, expressed in two ways as
    \begin{align*}
        \mathcal{V}_h
            =&~ \text{Var}(\phi^{\mathrm{eff}}(O; \tau_h, \eta)) \\
            =&~ \text{Var}(\phi_F(Z;\tau_h,\eta)) + E\left[\left(\frac{1}{q(S)} - 1\right)\text{Var}(\phi_F(Z;\tau_h,\eta)\mid S)\right]\\
            =&~ \text{Var}(E[\phi_F(Z;\tau_h,\eta)\mid S]) + E\left[
                \frac{\text{Var}(\phi_F(Z;\tau_h,\eta)\mid S)}{q(S)}
            \right],
    \end{align*}
    where 
    \begin{align*}
        \text{Var}(\phi_F(Z;\tau_h,\eta)) 
            =&~ C_h^{-2}E\left[
                h^2\{\pi_1(X)\}\left\{\frac{\sigma_1^2(X)}{\pi_1(X)} + \frac{\sigma_0^2(X)}{\pi_0(X)}\right\}\right] \\
                & + C_h^{-2}E[\{h^2\{\pi_1(X)\} + h'(\pi_1(X))^2\pi_1(X)\pi_0(X)\}\{\tau(X)-\tau_h\}^2].
    \end{align*}
\end{proposition}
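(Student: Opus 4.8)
The plan is to identify $\varphi_{\text{obs}}$ as the unique influence function lying in the observed-data tangent space, and then to reduce the efficiency bound to a variance calculation for the full-data efficient influence function. I would begin by taking $\varphi_F(O_1;\tau_w)$ --- the efficient influence function of $\tau_w$ under the full-data (simple-random-sampling) model, i.e.\ the $C_w^{-1}$-normalized $\psi_i$ recalled from Section~\ref{sec:prelim2} and established for the WATE family in the cited literature --- together with its defining property $\dot\theta(P_\varepsilon)=E[\varphi_F u_F]$ for every full-data score $u_F$. Since the full-data model leaves $e,\mu_1,\mu_0$ unrestricted, its tangent space is the whole space $L^2_0$ of mean-zero square-integrable functions; this is what makes the coarsening step clean.

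Next I would propose the candidate
$$\varphi_{\text{obs}}=E[\varphi_F\mid S]+\frac{\delta}{q(S)}\bigl(\varphi_F-E[\varphi_F\mid S]\bigr)$$
and verify two things. First, it lies in the tangent space of Proposition~\ref{prop:tangent}: writing $g(S)=E[\varphi_F\mid S]$, the choice $u_F=(\varphi_F-g)/q(S)$ (which is mean zero, hence admissible because the full-data tangent space is all of $L^2_0$) satisfies $E[u_F\mid S]=0$, so $\delta(u_F-E[u_F\mid S])=\tfrac{\delta}{q(S)}(\varphi_F-g)$, while $u_S=g$ supplies the remaining term. Second, it is an influence function. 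The key identity is $E[\varphi_{\text{obs}}\mid O_1]=\varphi_F$, which follows from $E[\delta\mid O_1]=q(S)$ (a consequence of $\delta\ci O_1\mid S$); combined with $E[\varphi_{\text{obs}}\mid S]=g$, a short iterated-expectation computation gives $E[\varphi_{\text{obs}}u_{\text{obs}}]=E[\varphi_F\tilde u_F]=\dot\theta(P_\varepsilon)$ for every observed score $u_{\text{obs}}=\delta(u_F-E[u_F\mid S])+u_S$, where $\tilde u_F=(u_F-E[u_F\mid S])+u_S$ is the full-data direction it induces. Uniqueness of the influence function within the tangent space then pins down $\varphi_{\text{obs}}$ as the EIF.

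For the efficiency bound I would expand $\varphi_{\text{obs}}^2$ using $\delta^2=\delta$ into $g^2+\tfrac{2\delta}{q(S)}g(\varphi_F-g)+\tfrac{\delta}{q(S)^2}(\varphi_F-g)^2$. Conditioning the cross term on $S$ kills it, since $E[\delta(\varphi_F-g)\mid S]=q(S)\,E[\varphi_F-g\mid S]=0$; conditioning the last term on $O_1$ replaces $\delta$ by $q(S)$ and yields $E[\tfrac{1}{q(S)}\text{Var}(\varphi_F\mid S)]$; and $E[g^2]=\text{Var}(E[\varphi_F\mid S])$. This produces the third displayed expression, and the second follows by substituting the law of total variance $\text{Var}(E[\varphi_F\mid S])=\text{Var}(\varphi_F)-E[\text{Var}(\varphi_F\mid S)]$.

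Finally, the explicit form of $\text{Var}(\varphi_F)$ comes from splitting $\varphi_F$ into its outcome-residual part $C_w^{-1}w_e\{\tfrac{\mathbf{1}^A_1}{e_1}(Y-\mu_1)-\tfrac{\mathbf{1}^A_0}{e_0}(Y-\mu_0)\}$ and its CATE part $C_w^{-1}(\tau-\tau_w)\{w_e+\dot w(A-e_1)\}$. These two pieces are orthogonal because the first has conditional mean zero given $(A,X)$ while the second is a function of $(A,X)$. Conditioning each on $X$ is then routine: the outcome-residual part contributes $C_w^{-2}E[w_e^2\{\sigma_1^2/e+\sigma_0^2/(1-e)\}]$ (its own cross term vanishes since $\mathbf{1}^A_1\mathbf{1}^A_0=0$), and the CATE part contributes $C_w^{-2}E[(\tau-\tau_w)^2\{w_e^2+\dot w^2 e(1-e)\}]$ via $E[(A-e)^2\mid X]=e(1-e)$. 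I expect the main obstacle to be the influence-function verification in the second step --- correctly matching each observed-data score to the full-data direction it induces and propagating the $1/q(S)$ weights through the conditional-independence structure $\delta\ci O_1\mid S$ --- whereas the variance calculations, though lengthy, are mechanical once the orthogonal decompositions are in place.
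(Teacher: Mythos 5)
Your proposal is correct, and its architecture is essentially that of the paper's proof: propose the candidate $\varphi_{\text{obs}} = E[\varphi_F\mid S] + \tfrac{\delta}{q(S)}\{\varphi_F - E[\varphi_F\mid S]\}$, check membership in the tangent space of Proposition~\ref{prop:tangent}, verify the pathwise-derivative identity by reducing it to the full-data one, and finish with the variance decomposition. The differences are in how the full-data ingredient enters. The paper is self-contained: it explicitly differentiates the WATE functional $C_{w,\varepsilon}^{-1}E_\varepsilon[w_{e_\varepsilon}(X)(Y^1-Y^0)]$ along the submodel and matches the resulting pathwise derivative to $E[\varphi_F u_F]$, thereby (re)establishing that $\varphi_F$ is the full-data EIF; you instead invoke this as a known result, which is legitimate here since the paper itself attributes $\varphi_F$ to the fully-observed-data literature. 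Relatedly, where the paper disposes of the residual term $E[(u_S - E[u_F\mid S])E[\varphi_F\mid S]]$ via its Lemma that any regular submodel's phase-1 score satisfies $u_S = E[u_F\mid S]$, you achieve the same cancellation through the re-parameterization $\tilde u_F = u_F - E[u_F\mid S] + u_S$, using that the unrestricted full-data tangent space is all of $L^2_0$, so every pair $(u_S,u_F)$ allowed by Proposition~\ref{prop:tangent} is realized by an actual submodel whose full-data score is $\tilde u_F$; the two devices are equivalent, and yours has the minor virtue of making explicit why the decoupled $(u_S,u_F)$ parameterization of the tangent space causes no trouble. Your variance calculation (cross term killed by conditioning on $S$, the $\delta/q^2(S)$ term collapsed by conditioning on $O_1$ via $E[\delta\mid O_1]=q(S)$, then the law of total variance, plus the orthogonal outcome-residual/CATE split for $\text{Var}(\varphi_F)$) coincides with the paper's, and in fact supplies the routine algebra for the $\text{Var}(\varphi_F)$ formula that the paper states without detailed proof.
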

The projected component of the EIF, $\phi_F(Z; \tau_h, \eta)$, is exactly the EIF for the WATE under the model with fully observed data \citep{Yiming2025-do}. 
The second expression shows that $\mathcal{V}_h$ equals the fully-observed-data
efficiency bound $\mathrm{Var}(\phi_F)$ plus an additional term from phase-2 subsampling.
The detailed discussion and the proofs for Propositions \ref{prop:tangent} and \ref{prop:EIF} are provided in Appendix \ref{app:proofs}.

\section{Estimation}\label{sec:estim}
In this section, we propose several estimators that consistently estimate the WATE under two-phase sampling, and discuss their estimation efficiency and robustness to model misspecification.

In this section, we use parametric working models to estimate nuisance parameters.
Let {\colr $\pi(x;\alpha), \mu_a(x;\beta^a),~~(\alpha,\beta^a\in\doubleR^p,~a\in\{0,1\})$} denote parametric working models for the nuisance functions $\pi(x), \mu_a(x)$. The estimators of $\alpha, \beta^a$ are obtained via the weighted maximum likelihood method using the inverse sampling probability. 
The estimators $\hat\alpha, \hat\beta^a$ converge to $\alpha^*, \beta^{a*}$ as the sample size tends to infinity, and in particular, it holds that
\begin{gather*}
    \pi(x;\alpha^*) = \pi(x),\qquad
    \mu_a(x;\beta^*) = \mu_a(x)
\end{gather*}
when the working models are correctly specified. For simplicity, the dimension of each nuisance parameter is set to be $p$.

\subsection{Inverse Probability of Sampling Weighting}\label{sec:ipsw}
A simple approach to eliminating sampling bias is the inverse probability of sampling weighting (IPSW) estimator, which weights the estimating equation for the fully observed data by $\delta / q(S)$.
Let the nuisance parameters be denoted by $\eta = (\alpha^\top, \beta^{0\top}, \beta^{1\top})^\top$. 

The DR-type estimator described later requires only a single main estimating equation for $\tau_h$. In contrast, for the IPTW-type estimator, it is more convenient to introduce separate estimating equations for the components $\mu^a_{h} := C_h^{-1}E[h\{\pi(X)\}Y^a]$ for $a \in \{0, 1\}$, rather than working with $\tau_h$ alone; accordingly, we consider a system of three stacked estimating equations. In what follows, 
we use $\theta_h$ as a unified notation for the target parameter, 
including such components when appropriate for the estimator under consideration, and denote its dimension by $r$.

\begin{definition}[IPSW estimator]\label{def:ipsw}
    Let $m_F(z;\theta_h,\eta)$ denote the main estimating function for $\theta_h$ under the fully observed setting.
    The IPSW estimator $\hat\theta^{\text{ipsw}}_h$ solves the following estimating equations:
    {\colr
    \begin{gather*}
        E_n\left(\begin{array}{c}
            \frac{\delta}{q(S)}m_F(Z;\hat\theta^{\text{ipsw}}_h,\hat\eta) \\
            \frac{\delta}{q(S)}u_\pi(Z;\hat\alpha) \\
            \frac{\delta}{q(S)}u_1(Z;\hat\beta^1) \\
            \frac{\delta}{q(S)}u_0(Z;\hat\beta^0) 
        \end{array}\right) = 0,
    \end{gather*}
    }
    where $E_n$ denotes the averaging operator over the whole sample $i\in\{1,\ldots,n\}$.
\end{definition}

As the estimating function for the fully observed data $m_F$, we consider in this section the following IPTW-type and DR-type estimating functions.
\begin{align*}
    m_F^{\text{iptw}}(Z;\theta_h,\eta) 
        =&~ \left(\begin{array}{c}
            \frac{\mathbf{1}^A_1}{\pi_1(X;\alpha)}h\{\pi_1(X;\alpha)\}(Y - \mu_h^1) \\
            \frac{\mathbf{1}^A_0}{\pi_0(X;\alpha)}h\{\pi_1(X;\alpha)\}(Y - \mu_h^0) \\
            \mu_h^1 - \mu_h^0 - \tau_h
        \end{array}\right) \\
    m_F^{\text{dr}}(Z;\theta_h,\eta)
        =&~ h\{\pi_1(X)\}\left\{\frac{\mathbf{1}^A_1}{\pi_1(X;\alpha)}(Y - \mu_1(X;\beta^1)) 
            - \frac{\mathbf{1}^A_0}{\pi_0(X;\alpha)}(Y - \mu_0(X;\beta^0))\right\}\\
            &~ + \{h\{\pi_1(X;\alpha)\} + h'(\pi_1(X;\alpha))(A-\pi_1(X;\alpha))\}\{\tau(X;\beta^1,\beta^0) - \tau_h\}
\end{align*}
We refer to the IPSW-IPTW estimator as the SIW estimator, 
and the IPSW-DR estimator as the SDR estimator. 
When the target parameter is the ATE, 
these estimators coincide with the ``Simple Inverse Weighting (SIW)'' 
and ``Simple Doubly Robust (SDR)'' estimators 
proposed by \citet{Wang2009-kn}.

The asymptotic distribution of the IPSW estimator can be derived based on the standard theory of M-estimation \citep[e.g.,][]{Van_der_Vaart2000-bl, Stefanski2002-wx, Tsiatis2006-sz}. Unbiasedness of the estimating equation in the population is required for consistency. When evaluating the unbiasedness of the IPSW estimating equation, it is necessary to account for possible misspecification of the parametric working models.

Since the sampling probability $q(S)$ is known, we have
\begin{gather*}
    E\left[\frac{\delta}{q(S)}m_F(Z;\tilde\theta_h,\tilde\eta)\right] = E\left[m_F(Z;\tilde\theta_h,\tilde\eta)\right]
\end{gather*}
for any values of $(\tilde\theta_h,\tilde\eta)$,
which implies that assessing the unbiasedness of the IPSW estimating equation reduces to verifying the unbiasedness of the fully observed-data estimating function.

For the SIW estimator, the nuisance model involves only the propensity score, which must be correctly specified for consistency.
On the other hand, for the SDR estimator, if the weight function $h$ is linear in the propensity score, the estimating equation remains unbiased as long as either the propensity score model or the outcome regression model is correctly specified, which is the property of double robustness \citep{Bang2005-sc, Tao2019-db}.
Important cases with linear weights include the ATE, ATT, and ATU, for which double robustness holds.
When the weight function $h$ is nonlinear in its argument, as in the case of the ATO, the propensity score model must be correctly specified. {\colr If not, letting $\tilde{\pi}_1$ denote the probability limit of the misspecified propensity score estimator, the WATE estimator converges to the WATE defined with weight function $h\{\tilde{\pi}_1(\cdot)\}$ rather than $h\{\pi_1(\cdot)\}$.}

Under unbiased estimating equations and suitable regularity conditions, the IPSW estimator admits the following asymptotic linear representation:
\begin{align*}
    \sqrt{n}(\hat\theta^{\text{ipsw}}_h - \theta_h) 
        =&~ -\frac{1}{\sqrt{n}}\sum_{i=1}^n\left[\frac{\delta_i}{q_i}J_{11}^{-1}\left\{
            m_F(Z_i;\theta_h,\eta^*)
             + J_{12}J^{-1}_{22}u_\pi(Z_i;\alpha^*)\right.\right.\\
             &\qquad\qquad\qquad \left.\left.+ J_{13}J^{-1}_{33}u_1(Z_i;\beta^{1*})
             + J_{14}J^{-1}_{44}u_0(Z_i;\beta^{0*})
        \right\}\right] + o_p(1),
\end{align*}
where the Jacobian matrix is
\begin{gather*}
    \left(\begin{array}{cccc}
        J_{11} & J_{12} & J_{13} & J_{14} \\
        {\bs 0}_{p\times r} & J_{22} & {\bs 0}_{p\times p} & {\bs 0}_{p\times p} \\
        {\bs 0}_{p\times r} & {\bs 0}_{p\times p} & J_{33} & {\bs 0}_{p\times p}\\
        {\bs 0}_{p\times r} & {\bs 0}_{p\times p} & {\bs 0}_{p\times p} & J_{44} 
    \end{array}\right) 
    = \left(\begin{array}{cccc}
        E\left[\partial_{\theta_h}m_F\right] & E[\partial_{\alpha^\top}m_F] &
        E[\partial_{\beta^{1\top}}m_F] &
        E[\partial_{\beta^{0\top}}m_F] \\
        {\bs 0}_{p\times r} & E[\partial_{\alpha^\top}u_{\pi}] & {\bs 0}_{p\times p} & {\bs 0}_{p\times p} \\
        {\bs 0}_{p\times r} & {\bs 0}_{p\times p} & E[\partial_{\beta^{1\top}}u_{1}] & {\bs 0}_{p\times p} \\
        {\bs 0}_{p\times r} & {\bs 0}_{p\times p} & {\bs 0}_{p\times p} & E[\partial_{\beta^{0\top}}u_{0}]
    \end{array}\right).
\end{gather*}
Note that we use $E[\delta/q(S)\mid S] = 1$ when expressing the Jacobian.

Then, the influence function of the IPSW estimator is defined as
\begin{align*}
    \phi^{\text{ipsw}}_h(O;\theta_h,\eta^*) 
        =&~ -\frac{\delta}{q(S)}J_{11}^{-1}\left\{
            m_F(Z;\theta_h,\eta^*)
             + J_{12}J^{-1}_{22}u_\pi(Z;\alpha^*)\right.\\
             &\qquad\qquad\qquad \left.+ J_{13}J^{-1}_{33}u_1(Z;\beta^{1*})
             + J_{14}J^{-1}_{44}u_0(Z;\beta^{0*})
        \right\},
\end{align*}
and the asymptotic distribution of $\hat\theta^{\text{ipsw}}_h$ is obtained as the following proposition.
\begin{proposition}\label{prop:asympIPSW}
    \begin{align*}
        \sqrt{n}(\hat\theta^{\text{ipsw}}_h - \theta_h) \overset{d}{\rightarrow}&~ N(0, \text{AVar}(\hat\theta^{\text{ipsw}}_h)) \\
        \text{AVar}(\hat\theta^{\text{ipsw}}_h)
            =&~ \text{Var}(\phi^{\text{ipsw}}_h(O;\theta_h,\eta^*))
    \end{align*}
\end{proposition}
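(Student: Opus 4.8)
The plan is to treat $\hat\theta^{\text{ipsw}}_w$ together with the nuisance estimators $\hat\eta=(\hat\alpha,\hat\beta^1,\hat\beta^0)$ as the solution of a single stacked estimating equation $E_n[g(O_2;\hat\theta_w,\hat\eta)]=0$, and to invoke the standard asymptotic theory for M-estimators (Z-estimators). The asymptotic-linear expansion displayed just before the statement is precisely the conclusion to be justified; once it is in hand, the proposition follows immediately from the multivariate central limit theorem together with Slutsky's theorem. Accordingly, the proof reduces to (i) verifying that the regularity conditions underlying the M-estimation expansion hold here, and (ii) confirming that its leading term is a normalized sum of i.i.d., mean-zero, square-integrable summands.

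First I would establish consistency of $(\hat\theta^{\text{ipsw}}_w,\hat\eta)$ for $(\theta^*_w,\eta^*)$. This uses the unbiasedness of the population estimating equation---already verified in the text via the identity $E[\delta/q(S)\mid S]=1$, which collapses the sampling-weighted expectation of $\Psi_F$ to $E[\Psi_F(O_1;\theta_w,\eta^*)]$---together with the usual conditions of a well-separated root, continuity of $g$ in the parameters, and uniform convergence of $E_n[g]$ to its population version. Next I would carry out a first-order Taylor expansion of $E_n[g(O_2;\theta,\eta)]$ about $(\theta^*_w,\eta^*)$, evaluate it at the root, and use the law of large numbers to replace the empirical Jacobian at an intermediate point by the population Jacobian $J$ displayed above, again invoking $E[\delta/q(S)\mid S]=1$ to simplify each block.

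The crucial structural feature is that $J$ is block lower-triangular: the nuisance score equations do not involve $\theta_w$, so the $(2,1),(3,1),(4,1)$ blocks vanish. This lets me solve the three nuisance blocks first, obtaining for instance $\sqrt{n}(\hat\alpha-\alpha^*)=-J_{22}^{-1}n^{-1/2}\sum_i \frac{\delta_i}{q(S_i)}u_e(O_{1i};\alpha^*)+o_p(1)$ and analogously for $\hat\beta^1,\hat\beta^0$, and then back-substitute into the first block to isolate $\sqrt{n}(\hat\theta^{\text{ipsw}}_w-\theta^*_w)$. This substitution is exactly what produces the nuisance-correction terms $J_{12}J_{22}^{-1}u_e+J_{13}J_{33}^{-1}u_1+J_{14}J_{44}^{-1}u_0$ inside $\varphi^{\text{ipsw}}_w$, yielding $\sqrt{n}(\hat\theta^{\text{ipsw}}_w-\theta^*_w)=n^{-1/2}\sum_i\varphi^{\text{ipsw}}_w(O_{2i};\theta^*_w,\eta^*)+o_p(1)$. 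Since the summands are i.i.d. with mean zero (by the unbiasedness just used) and finite variance, the multivariate CLT gives the limit $N(0,\text{Var}(\varphi^{\text{ipsw}}_w))$ and Slutsky's theorem absorbs the remainder.

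The main obstacle I anticipate is bookkeeping the sampling weights $\delta/q(S)$ correctly throughout, rather than any deep analytic difficulty. Specifically, both the mean-zero property of $\varphi^{\text{ipsw}}_w$ and the simplification of the Jacobian blocks hinge on conditioning on $S$ and using $E[\delta/q(S)\mid S]=1$; moreover, the identification of the limit $\eta^*$ as the solution of the weighted population score equations must be handled with care, since $\eta^*$ equals the true nuisance value only under correct specification---or, for weights linear in $e$, under the double-robustness conditions discussed earlier. One should also note that $\varphi^{\text{ipsw}}_w$ is not merely the fully-observed influence function rescaled by $\delta/q(S)$: the projection onto the estimated nuisance directions, encoded by the $J_{1k}J_{kk}^{-1}$ terms, is an essential part of both the argument and the resulting variance.
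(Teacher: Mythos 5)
Your proposal is correct and follows essentially the same route as the paper's proof: both stack the main and nuisance estimating equations, establish consistency from the unbiasedness of the population equations (via $E[\delta/q(S)\mid S]=1$), Taylor-expand around $(\theta_w^*,\eta^*)$, exploit the block-triangular Jacobian structure (the $(2,1),(3,1),(4,1)$ blocks vanish because the nuisance scores do not involve $\theta_w$) to obtain the asymptotically linear representation with the nuisance-correction terms $J_{1k}J_{kk}^{-1}$, and conclude via the CLT and Slutsky's theorem. One trivial nit: a Jacobian whose $(2,1),(3,1),(4,1)$ blocks vanish is block \emph{upper}-triangular (as the paper says), not lower-triangular, though you identify the vanishing blocks correctly so nothing of substance is affected.
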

Proofs for this proposition and Proposition \ref{prop:asympENR} in the next subsection are provided in Appendix \ref{app:proofs}.

When the DR-type estimating function is used for $m_F$ and all working models are correctly specified, we have $J_{12} = J_{13} = J_{14} = 0$, and $-J_{11}^{-1}m_F^{\text{dr}} = \phi_F$. Hence, the asymptotic variance is given by
\begin{gather*}
    \text{Var}(\phi^{\text{sdr}}_h(O;\theta_h,\eta))
        = \text{Var}\left(
            \frac{\delta}{q(S)}\phi_F(Z;\tau_h,\eta)
        \right)
        = E\left[
            \frac{1}{q(S)}\phi_F(Z;\tau_h,\eta)^2
        \right].
\end{gather*}
Furthermore, by centering with respect to $S$, we obtain
\begin{align*}
    \text{Var}(\phi^{\text{sdr}}_h(O;\theta_h,\eta)) 
        =&~ E\left[
            \frac{1}{q(S)}\{\phi_F - E[\phi_F\mid S]\}^2
        \right] + E\left[
            \frac{1}{q(S)}E[\phi_F\mid S]^2
        \right] \\
        =&~ E\left[
            \frac{1}{q(S)}\text{Var}(\phi_F\mid S)
        \right] + E\left[
            \frac{1}{q(S)}E[\phi_F\mid S]^2
        \right]\\
        \ge&~ E\left[
            \frac{1}{q(S)}\text{Var}(\phi_F\mid S)
        \right] + \text{Var}(E[\phi_F\mid S])
        = \mathcal{V}_h
\end{align*}
Hence, the IPSW estimator does not in general attain $\mathcal{V}_h$, even when $m_F$ equals the fully-observed-data EIF $\phi_F$.

\subsection{Enriched Estimators}\label{sec:enrich}
To leverage the information contained in the variables $S$, which are observed for all units in the phase-1 sample, we introduce the following enriched estimators.
The term ``enriched'' follows the terminology of \citet{Wang2009-kn}, reflecting the idea of augmenting the estimator with the information from the variables $S$ that are available for all units in the phase-1 sample.

\begin{definition}[Enriched estimator]\label{def:enr}
    The enriched estimator $\hat\theta^{\text{enr}}_h$ solves the following estimating equations:
    \begin{gather*}
        E_n\left(\begin{array}{c}
            \frac{\delta}{q(S)}m_F(Z;\hat\theta^{\text{enr}}_h,\hat\eta)
                + \left(1 - \frac{\delta}{q(S)}\right)g(S;\hat\theta^{\text{enr}}_h, \hat\gamma) \\
            \frac{\delta}{q(S)}u_\pi(Z;\hat\alpha) \\
            \frac{\delta}{q(S)}u_1(Z;\hat\beta^1) \\
            \frac{\delta}{q(S)}u_0(Z;\hat\beta^0) \\
            \frac{\delta}{q(S)}\{m_F(Z;\hat\theta^{\text{enr}}_h,\hat\eta) - g(S; \hat\theta^{\text{enr}}_h, \hat\gamma)\} \\
        \end{array}\right) = 0
    \end{gather*}
\end{definition}

The fifth estimating equation is used to estimate
$E[m_F(Z; \hat\theta^{\mathrm{enr}}_h, \hat\eta) \mid S]$
via a suitable parametric model $g(S; \theta_h, \gamma)$.
In particular, when $S$ is discrete with finite support, this conditional expectation can be estimated nonparametrically and efficiently using a saturated model. {\colr
Since $m_F$ is linear in $\theta_h$, the $\theta_h$-dependent part of $g$ can be separated out analytically, and each remaining component is estimated via $\gamma$ from the observed data without $\theta_h$ entering as a free parameter; see Appendix~\ref{sec:closed} for explicit expressions.}

As in the previous section, we consider two forms of $m_F$: the IPTW and DR estimators. In what follows, we refer to the enriched IPTW estimator as the EIW estimator and the enriched DR estimator as the EDR estimator.
Since $q(S)$ is known, we have
{\colr 
\begin{gather*}
    E\left[
        \frac{\delta}{q(S)}m_F(Z;\tilde\theta_h,\tilde\eta)
        + \left(1 - \frac{\delta}{q(S)}\right)E[m_F(Z;\tilde\theta_h,\tilde\eta)\mid S]
    \right]
    = E\left[m_F(Z;\tilde\theta_h,\tilde\eta)\right]
\end{gather*}
}
for any values of $(\tilde\theta_h,\tilde\eta)$,
and the unbiasedness of the estimating equation in the population holds in the same way as $m_F$ under the fully observed setting. Specifically, when $m_F^{\text{dr}}$ is used, double robustness is attained if the weight function $h$ is linear in its scalar argument; on the other hand, when $h$ is nonlinear or when $m_F^{\text{iptw}}$ is used, consistency of the estimated propensity score is always required. 

Before presenting the RAL representation and asymptotic distribution of the enriched estimator, we assume for simplicity that the model $g$ is correctly specified. 
This holds exactly when $S$ is discrete with finite support, the common case in stratified two-phase designs.
\begin{lemma}
    When $g$ is a correctly-specified parametric model, the corresponding components of the Jacobian,
    \begin{align*}
        J_{15} :=&~ E\left[\partial_{\gamma^{\top}}\left\{\left(1-\frac{\delta}{q(S)}\right)g(S;\theta_h,\gamma^*)\right\}\right]
                 = E\left[\left(1-\frac{\delta}{q(S)}\right)\partial_{\gamma^{\top}}g(S;\theta_h,\gamma^*)\right] \\
        J_{51} :=&~ E[\partial_{\theta_h^{\top}}\delta/q(S)\{m_F(Z;\theta_h,\eta^*) - g(S;\theta_h,\gamma)\}]
    \end{align*}
    are equal to zero.
\end{lemma}
This follows from $E[\delta/q(S)\mid S]=1$: for $J_{15}$, $E[(1-\delta/q(S))\mid S]=0$ implies the expectation vanishes; for $J_{51}$, taking the conditional expectation with respect to $S$ and using the correct specification of $g$ yields the result.

The representation as an RAL estimator and its asymptotic distribution can be derived in the same way as those of the IPSW estimator, and the influence function is given by
\begin{align*}
    &~ \phi^{\text{enr}}_h(O;\theta_h,\eta^*) \\
        =&~ -J_{11}^{-1}\left\{
            \frac{\delta}{q(S)}m_F(Z;\theta_h,\eta^*) 
            + \left(1-\frac{\delta}{q(S)}\right)E[m_F(Z;\theta_h,\eta^*)\mid S]
             \right\}\\
             &~ 
             -\frac{\delta}{q(S)}J_{11}^{-1}\left\{
             J_{12}J^{-1}_{22}u_\pi(Z;\alpha^*)
             + J_{13}J^{-1}_{33}u_1(Z;\beta^{1*})
             + J_{14}J^{-1}_{44}u_0(Z;\beta^{0*})\right\}.
\end{align*}
Note that the components of the Jacobian, $J_{11}, J_{12}, J_{13}, J_{14}, J_{22}, J_{33}, J_{44}$, are identical between the IPSW estimator and the enriched estimator for any choice of $m_F$.

When $m_F^{\text{dr}}$ is used and both the propensity score and outcome regression models are consistently estimated, the influence function becomes
\begin{align*}
    \phi^{\text{enr}}_h(O;\theta_h,\eta)
        =&~ \frac{\delta}{q(S)}\phi_F(Z;\tau_h,\eta)
            + \left(1-\frac{\delta}{q(S)}\right)E[\phi_F(Z;\tau_h,\eta)\mid S],
\end{align*}
which is nothing but the EIF for the WATE under two-phase sampling in Proposition \ref{prop:EIF}, denoted by $\phi^{\mathrm{eff}}(O; \tau_h, \eta)$.
Therefore, the EDR estimator reaches the semiparametric efficiency bound when all nuisance models are correctly specified.

The asymptotic distribution and efficiency of the enriched estimators are characterized as follows.
\begin{proposition}\label{prop:asympENR}
    \begin{align*}
        \sqrt{n}(\hat\theta^{\text{enr}}_h - \theta_h) \overset{d}{\rightarrow}&~ N(0, \text{AVar}(\hat\theta^{\text{enr}}_h)) \\
        \text{AVar}(\hat\theta^{\text{enr}}_h)
            =&~ \text{Var}(\phi^{\text{enr}}_h(O;\theta_h,\eta^*))
    \end{align*}
    When $m_F^{\text{dr}}$ is used as the estimating function and all nuisance models are correctly specified, we have
    \begin{align*}
        \sqrt{n}(\hat\tau^{\text{enr}}_h - \tau_h) \overset{d}{\rightarrow}&~ N(0, \text{AVar}(\hat\tau^{\text{enr}}_h)) \\
        \text{AVar}(\hat\tau^{\text{enr}}_h)
            =&~ \text{Var}(\phi^{\mathrm{eff}}(O;\tau_h,\eta)) = \mathcal{V}_h.
    \end{align*}
\end{proposition}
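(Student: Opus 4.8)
The plan is to treat $\hat\theta^{\text{enr}}_w$ as a standard stacked M-estimator and to reuse the linearization already carried out for the IPSW estimator in Proposition~\ref{prop:asympIPSW}, the only new ingredient being the additional $\gamma$-block coming from the working model $g$. First I would collect the full parameter vector $(\theta_w^\top,\eta^\top,\gamma^\top)^\top$ and write the five stacked estimating functions of Definition~\ref{def:enr} as a single estimating equation $E_n[\Phi(O_2;\cdot)]=0$. Population unbiasedness at the truth has already been verified in the text: using that $q(S)$ is known and $E[\delta/q(S)\mid S]=1$, the enriched moment reduces to $E[\Psi_F(O_1;\theta_w,\eta^*)]$, so the same correct-specification bookkeeping as for the IPSW case applies.

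Second, under the usual regularity conditions (invertibility of the Jacobian $J=E[\partial\Phi]$, dominated derivatives, a $\sqrt{n}$-consistent solution), a first-order Taylor expansion gives the asymptotically linear representation $\sqrt{n}(\hat\theta^{\text{enr}}_w-\theta_w^*)=-n^{-1/2}\sum_i (J^{-1}\Phi)(O_{2i})+o_p(1)$. The crucial simplification is the block structure of $J$: the preceding lemma shows $J_{15}=0$ and $J_{51}=0$, so the $\gamma$-block decouples from the $\theta_w$-component. Consequently the influence function for $\theta_w$ does not inherit any term from the estimation of $\gamma$, and---because the Jacobian entries $J_{11},J_{12},J_{13},J_{14},J_{22},J_{33},J_{44}$ coincide with those of the IPSW estimator---it takes exactly the displayed form $\varphi^{\text{enr}}_w$. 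Since this influence function is mean zero and square integrable, the central limit theorem yields the stated normal limit with $\text{AVar}(\hat\theta^{\text{enr}}_w)=\text{Var}(\varphi^{\text{enr}}_w)$, which is the first assertion.

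Third, for the efficiency claim I would specialize to $\Psi_F=\Psi^{\text{dr}}_F$ with all nuisance models correctly specified. Two facts collapse the influence function to the EIF: the Neyman-orthogonality of the DR estimating function gives $J_{12}=J_{13}=J_{14}=0$ (so the score-correction terms vanish), and a direct computation of the Jacobian block yields $-J_{11}^{-1}\Psi^{\text{dr}}_F=\varphi_F$. Substituting these into $\varphi^{\text{enr}}_w$ leaves precisely $\frac{\delta}{q(S)}\varphi_F(O_1;\tau_w)+\bigl(1-\frac{\delta}{q(S)}\bigr)E[\varphi_F(O_1;\tau_w)\mid S]=\varphi_{\text{obs}}(O_2;\tau_w)$, the EIF derived in Proposition~\ref{prop:EIF}; its variance is $\mathcal{V}_w$ by that proposition.

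I expect the main obstacle to be the careful bookkeeping of the block-triangular Jacobian together with the nuisance-score corrections: one must confirm that the $\gamma$-estimation genuinely contributes nothing to the $\theta_w$-influence function (this is exactly what $J_{15}=J_{51}=0$ guarantees) and that, in the doubly robust case, the orthogonality identities $J_{12}=J_{13}=J_{14}=0$ hold simultaneously with $-J_{11}^{-1}\Psi^{\text{dr}}_F=\varphi_F$. Everything else is the same M-estimation argument used for Proposition~\ref{prop:asympIPSW}, so I would treat those steps as routine and concentrate the verification effort on these Jacobian identities.
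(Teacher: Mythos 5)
Your proposal is correct and follows essentially the same route as the paper: a stacked M-estimation expansion identical to that used for Proposition~\ref{prop:asympIPSW}, with the lemma $J_{15}=J_{51}=0$ decoupling the $\gamma$-block, and the DR-case identities $J_{12}=J_{13}=J_{14}=0$ together with $-J_{11}^{-1}\Psi^{\text{dr}}_F=\varphi_F$ collapsing the influence function to $\varphi_{\text{obs}}(O_2;\tau_w)$ whose variance is $\mathcal{V}_w$ by Proposition~\ref{prop:EIF}. The Jacobian bookkeeping you flag as the main verification effort is exactly what the paper handles via its block upper-triangular structure argument, so there is no substantive difference.
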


\paragraph{Remark 1}
For ATE, the EDR estimator above is equal to the enriched doubly robust (EDR) estimators proposed in \citet{Wang2009-kn}. 
They assumed that the propensity score follows a parametric model and, under this structure, derived the locally efficient (LE) estimator. 
However, since the LE estimator requires iterative projections and is difficult to implement, the EDR estimator was proposed as a more practical alternative. 
In contrast, we derive the semiparametric efficiency bound under a nonparametric model for the propensity score and demonstrate that the EDR estimator achieves this bound.

\paragraph{Remark 2}
It is also possible to enrich the estimating equations for the nuisance models by incorporating information from the phase-1 sample. 
In that case, the resulting enriched estimator can be shown to have a smaller asymptotic variance than the IPSW estimator for any choice of $m_F$. 
We do not pursue this extension, as the efficiency gain is limited in most settings and the implementation cost is substantial.
In particular, when the EDR estimator is used and the nuisance models are correctly specified, we have $J_{12} = J_{13} = J_{14} = 0$. The estimator therefore attains the semiparametric efficiency bound, so enriching the nuisance estimating equations cannot yield any further efficiency gain.
The proposed EIW estimator does not theoretically guarantee a reduction in asymptotic variance relative to the SIW estimator, although it exhibited smaller variance than the SIW estimator in some scenarios of our simulation studies.

\subsection{Variance estimation}
The asymptotic variances of the IPSW and enriched estimators can be estimated by computing the sample variance of their corresponding influence functions and dividing it by $n$.
This variance estimator is equivalent to the so-called sandwich estimator in standard M-estimation \citep{White1980-sr}, and it remains robust even when the nuisance models are misspecified.
On the other hand, when the nuisance models are believed to be correctly specified and the main estimating equation is constructed based on the efficient influence function (EIF), the components of the influence function corresponding to the estimating equations for the nuisance parameters are asymptotically zero; thus, it suffices to estimate the variance of the EIF alone.

Although standard errors can also be estimated via the bootstrap method, the conventional bootstrap under simple random sampling is known to underestimate uncertainty.
For two-phase sampling, specialized approaches such as the Rao-Wu bootstrap \citep{Rao1988-qo}, which employs resampling-equivalent replicate weights, or the multiplier bootstrap \citep{Saegusa2014-wv, Chen2020-ga}, can be applied.
We do not discuss these methods further.

\subsection{Correction of finite-sample bias}
Both the proposed IPSW and enriched estimators are ratio-type estimators, whose solutions to the main estimating equations can be written in the form $A_n/B_n$. Although these estimators are consistent for the target parameter under suitable regularity conditions, they are not guaranteed to be unbiased in finite samples. In particular, as demonstrated in Section \ref{sec:sim}, ODS tends to amplify this finite-sample bias.

To mitigate this bias, we additionally consider jackknife-based bias correction and evaluate its performance. In general, a consistent estimator may still exhibit a bias of order $O(n^{-1})$. The jackknife method reduces this first-order bias by computing leave-one-out estimators $\hat\theta_{-i}$ and forming the debiased estimator $n\hat\theta - (n-1)n^{-1}\sum_{i}\hat\theta_{-i}$, thereby improving the finite-sample behavior \citep{Hartley1954-bn, Quenouille1956-pn, Shao1995-xd}.

Since the phase-1 sample size in two-phase sampling designs is typically large, the computational burden of the standard (delete-one) jackknife can be substantial. We therefore adopt a delete-$d$ jackknife, in which the sample is randomly partitioned into $D$ groups of equal size $d = n/D$, and each replicate estimator is formed by deleting one group at a time \citep{Wu1986-nc, Wu1990-qy}. To ensure that the proportion of phase-2 observations remained roughly consistent across deletion groups, the sample was partitioned in a stratified manner by sampling phase.

\section{Further Methodological Considerations}
This section addresses three methodological perspectives: 
the optimal phase-2 sampling design, 
conditions under which the enriched estimator yields meaningful efficiency gains, 
and a double machine learning formulation of the proposed estimators.

\subsection{Optimal Sampling Probabilities}
The phase-2 sampling probability $q(s)$ can be determined by the researcher.
From the viewpoint of estimation efficiency, a reasonable design principle is to choose $q(s)$ to minimize the asymptotic variance of the target parameter under a given cost constraint on data collection.

In this section, as is the case in many practical applications, we assume that the support of $S$ consists of $K$ discrete values, and that $q(S)$ takes the values $q_1,\ldots,q_K$.
Let $P(S=k)=p_k~(k=1,\ldots,K)$ denote the marginal probabilities in the phase 1 sample, satisfying $\sum_{k=1}^K p_k = 1$, and let $\bar{q}$ denote the overall sampling probability for the phase 2 sample.
If the estimator admits an efficient influence function (EIF), the problem of determining the optimal sampling probabilities reduces to the following constrained optimization problem:
\begin{gather}\label{eq:optimA}
\min_{(q_1,\ldots,q_K)\in(0,1]^K} \sum_{k=1}^Kp_k\frac{\sigma^2_k}{q_k}
~~~~\text{subject to} \sum_{k=1}^K p_kq_k \le \bar q,
\end{gather}
where $\sigma_k^2 = \text{Var}(\phi_F(Z;\tau_h,\eta)\mid S=k)$.
The objective function corresponds to the component of $\mathcal{V}_h$ that depends on $q$. Since this is a convex optimization problem, it can be solved by the method of Lagrange multipliers.

\begin{proposition}
    The optimal sampling probability solving \eqref{eq:optimA} is given by
    \begin{gather*}
        q_k = \frac{{\bar q}\sigma_k}{\sum_{k=1}^K p_k\sigma_k}.
    \end{gather*}
    The corresponding optimal value is
    {\colr
    \begin{gather*}
        {\bar q}^{-1}\left\{\sum_{k=1}^Kp_k\sigma_k\right\}^2.
    \end{gather*}
    }
\end{proposition}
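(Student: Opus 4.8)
The plan is to treat \eqref{eq:optimA} as a convex program over the positive orthant and solve it by the method of Lagrange multipliers, deferring the upper bounds $q_k\le 1$ and checking them afterwards. First I would argue that the budget constraint binds at the optimum: since the objective $\sum_k p_k\sigma_k^2/q_k$ is strictly decreasing in each $q_k$, any feasible point with $\sum_k p_k q_k<\bar q$ can be strictly improved by enlarging some $q_k$, so at the minimizer $\sum_k p_k q_k=\bar q$. I would then form the Lagrangian
\begin{gather*}
L(q,\lambda)=\sum_{k=1}^K p_k\frac{\sigma_k^2}{q_k}+\lambda\left(\sum_{k=1}^K p_k q_k-\bar q\right).
\end{gather*}

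Next I would impose the stationarity conditions $\partial L/\partial q_k=-p_k\sigma_k^2/q_k^2+\lambda p_k=0$, which give $q_k=\sigma_k/\sqrt{\lambda}$, i.e.\ $q_k\propto\sigma_k$. Substituting this proportionality into the active constraint $\sum_k p_k q_k=\bar q$ pins down the multiplier via $\sqrt{\lambda}=\left(\sum_k p_k\sigma_k\right)/\bar q$, and hence yields $q_k=\bar q\,\sigma_k/\sum_{j}p_j\sigma_j$, the claimed minimizer. Plugging this back into the objective and using $\sigma_k^2/q_k=\sigma_k\left(\sum_j p_j\sigma_j\right)/\bar q$, the sum simplifies to the optimal value $\bar q^{-1}\left(\sum_{k=1}^K p_k\sigma_k\right)^2$. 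Because each map $q_k\mapsto 1/q_k$ is convex on $(0,\infty)$ and the constraint is linear, the objective is convex over a convex feasible set, so this stationary point is the global minimizer; alternatively, a single application of Cauchy--Schwarz, $\left(\sum_k p_k\sigma_k\right)^2=\left(\sum_k (\sqrt{p_k}\sigma_k/\sqrt{q_k})(\sqrt{p_k}\sqrt{q_k})\right)^2\le\left(\sum_k p_k\sigma_k^2/q_k\right)\left(\sum_k p_k q_k\right)$, gives the same lower bound and attains equality exactly at $q_k\propto\sigma_k$, confirming global optimality without invoking convexity separately.

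The step I expect to be the main obstacle is reconciling the Lagrange solution with the box constraints $q_k\le 1$, which the stationarity analysis above ignores. The closed-form expression is the true minimizer only when $\bar q\,\sigma_k\le\sum_j p_j\sigma_j$ for every $k$, so that each $q_k\in(0,1]$; this holds in the regime of interest where $\bar q$ is small relative to the heterogeneity of the $\sigma_k$. If some stratum has a sufficiently large $\sigma_k$ relative to the others, the unconstrained formula returns $q_k>1$, and the genuine solution then requires a KKT (water-filling) argument that caps those strata at $q_k=1$ and redistributes the remaining budget among the rest via the same proportionality rule. I would therefore either state the proposition under the implicit assumption that the formula stays in $(0,1]$, or append a short remark describing the capped solution for completeness.
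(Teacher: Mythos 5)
Your proposal is correct and follows essentially the same route as the paper: a Lagrangian stationarity argument giving $q_k\propto\sigma_k$, substitution into the (binding) budget constraint to fix the multiplier, and an a posteriori check of the box constraints $q_k\le 1$, which the paper likewise handles by assuming the overall sampling fraction $\bar q$ is small enough that $\bar q\,\sigma_k\le\sum_j p_j\sigma_j$ for all $k$ (your water-filling remark covers the case the paper excludes by assumption; your Cauchy--Schwarz check of global optimality is a nice, if redundant, addition). One substantive point in your favor: your optimal value $\bar q^{-1}\bigl(\sum_{k=1}^K p_k\sigma_k\bigr)^2$ carries the square, which agrees with the paper's own general derivation in the appendix, namely $d^{-1}\bigl\{\sum_k w_k\sqrt{c_k}\bigr\}^2$ with $w_k=p_k$, $c_k=\sigma_k^2$, $d=\bar q$; the proposition as printed, which omits the exponent, appears to contain a typographical error.
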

This corresponds to the Neyman allocation \citep{Neyman1934-gc} in the present setting, where the target parameter is the WATE.

The stratified SRSWOR procedure under practically convenient proportional allocation with respect to the phase-1 sample size in each stratum $S=k$ ($m_k\propto n_k$) or equal allocation ($m_k = [m/K]$) is not necessarily efficient in terms of estimating treatment effects. Proportional allocation is efficient only when the conditional variances are homogeneous across strata. Equal allocation becomes optimal only when the joint probability of selection in the optimal design, $p_kq_k \propto {\bar q}p_k\sigma_k$, is constant across strata $k$.

The conditional variance across strata $k$ can be decomposed into two components, corresponding to the effect heterogeneity and the conditional IPTW variance of the potential outcomes:
\begin{align*}
    \sigma^2_k 
        =&~ \text{Var}(\phi_F(Z;\tau_h,\eta)\mid S=k) \\
        =&~ C_h^{-2}E\left[\{h^2\{\pi(X)\} + h'(\pi(X))^2\pi(X)(1-\pi(X))\}(\tau(X) - \tau_h)^2 \mid S=k\right] \\
        & + C_h^{-2}E\left[h^2\{\pi(X)\}\left\{
            \frac{\sigma^2_1(X)}{\pi(X)} + \frac{\sigma^2_0(X)}{1-\pi(X)}
        \right\}\mid S=k\right].
\end{align*}

Because the optimal design involves phase-2 variables, it is infeasible in practice. However, by assuming a simple structure for the underlying data-generating model, one can obtain practical insights into how to construct an approximately optimal design.

\begin{proposition}\label{prop:simpledesign}
    Assume no treatment effect heterogeneity and homoskedastic errors conditional on $X$. Then the conditional variance simplifies to 
    \begin{gather*}
        \sigma^2_k 
            = C_h^{-2}\sigma^2E\left[
                \frac{h^2\{\pi(X)\}}{\pi(X)(1-\pi(X))}
            \mid S=k \right].
    \end{gather*}
    The optimal stratum-specific sampling probability is
    \begin{gather*}
        q_k \propto C_h^{-1}\sigma\sqrt{E\left[
                \frac{h^2\{\pi(X)\}}{\pi(X)(1-\pi(X))}
            \mid S=k \right]}.
    \end{gather*}
    If the propensity score depends only on $V$, the optimal sampling probabilities are $q_k \propto  h\{\pi(k)\}\{\pi(k)(1-\pi(k))\}^{-1/2}$, 
    which is designable from phase 1 data.
\end{proposition}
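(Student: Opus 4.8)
The plan is to specialize the stratum-level decomposition of $\sigma_k^2 = \text{Var}(\varphi_F(O_1;\tau_w)\mid S=k)$ displayed immediately before the statement, collapsing its two terms under the two structural assumptions. First I would dispose of the heterogeneity term. No treatment effect heterogeneity means $\tau(X)\equiv\tau$ for a constant $\tau$; substituting into $\tau_w = C_w^{-1}E[w_e(X)\tau(X)]$ and using $C_w = E[w_e(X)]$ gives $\tau_w = C_w^{-1}\tau\,E[w_e(X)] = \tau$, so that $\tau(X)-\tau_w\equiv 0$ and the whole first term vanishes, irrespective of $w_e$ and its derivative $\dot w$.

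Next I would simplify the residual-variance term. Homoskedasticity conditional on $X$ sets $\sigma_1^2(X)=\sigma_0^2(X)=\sigma^2$, so the bracketed factor combines into a single fraction,
\[
\frac{\sigma_1^2(X)}{e(X)}+\frac{\sigma_0^2(X)}{1-e(X)}
= \sigma^2\,\frac{(1-e(X))+e(X)}{e(X)(1-e(X))}
= \frac{\sigma^2}{e(X)(1-e(X))}.
\]
Factoring the constant $\sigma^2$ out of the conditional expectation yields precisely the claimed form of $\sigma_k^2$. The optimal probabilities then follow from the Neyman-allocation formula of the preceding proposition, which states $q_k\propto\sigma_k$; taking the square root of the simplified $\sigma_k^2$ gives $q_k\propto C_w^{-1}\sigma\sqrt{E\left[\frac{w_e^2(X)}{e(X)(1-e(X))}\mid S=k\right]}$.

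For the final claim I would invoke measurability. When the propensity score depends only on $V$, and since $w_e$ is assumed to be a function of $e$, both $e(X)=e(V)$ and $w_e(X)=w_e(e(V))$ are functions of $V$ alone; because $V$ is a component of the phase-1 vector $S$, these quantities are constant on the event $\{S=k\}$. Hence the conditional expectation collapses to its pointwise value, $E\left[\frac{w_e^2(X)}{e(X)(1-e(X))}\mid S=k\right] = \frac{w_e^2(k)}{e(k)(1-e(k))}$, and taking the square root produces $q_k\propto w_e(k)\{e(k)(1-e(k))\}^{-1/2}$, which depends only on phase-1 quantities and is therefore designable.

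I expect no substantive obstacle, as the argument is a direct algebraic specialization of an already-derived formula. The only two points requiring care are the identity $\tau_w=\tau$ under a constant CATE, which is what makes the heterogeneity term vanish exactly rather than leaving a nonzero residual, and the measurability step in the last part, where the fact that $V$ is contained in $S$ is precisely what licenses replacing a conditional expectation by a pointwise evaluation and thereby renders the design implementable from phase-1 data.
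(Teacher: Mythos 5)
Your proposal is correct and matches the paper's (implicit) argument: the paper likewise obtains this proposition by specializing the displayed decomposition of $\sigma_k^2$ — the heterogeneity term vanishing because $\tau(X)\equiv\tau$ forces $\tau_w=\tau$, and homoskedasticity collapsing the residual term to $\sigma^2/\{e(X)(1-e(X))\}$ — and then applying the Neyman-allocation formula $q_k\propto\sigma_k$ from the preceding proposition. Your measurability argument for the final claim (that $e$ and $w_e$ are constant on $\{S=k\}$ because $V\subset S$) is exactly the reasoning the paper relies on in asserting the design is computable from phase-1 data.
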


Proposition~\ref{prop:simpledesign} is most useful when most confounders
can be collected at phase~1 and only a few high-cost variables or the
outcome are deferred to phase~2. In that case, for example, the ATT
target suggests $q_k \propto \sqrt{\pi(k)/(1-\pi(k))}$,
oversampling strata with higher treatment prevalence.
When phase-1 variables are too sparse to support confounding adjustment,
the optimal design must be determined after phase-2 collection, 
which is infeasible by construction.

\subsection{When is the Enriched Estimator Beneficial?}\label{sec:benefit}
In this subsection, we focus on the case $S\in \{1,\ldots,K\}$ and discuss when the enriched estimator offers substantial benefits and when it does not.

Let $n_k$ and $m_k$ denote the phase-1 and phase-2 sample sizes, respectively, for stratum $k\in{1,\ldots,K}$. Suppose phase-2 sampling is stratified SRSWOR, i.e., $m_k$ units are selected without replacement from each stratum $k$. Then the (design) sampling probability in stratum $k$ is $q_k=m_k/n_k$. In this case, the IPSW estimator and the enriched estimator are exactly equal in finite samples, so the simpler IPSW estimator suffices.

\begin{proposition} \label{prop:srswor}
    Under stratified SRSWOR in phase 2, the augmentation term vanishes:
    \begin{align*}
        \sum_{i=1}^n\left(\frac{\delta_i}{q(S_i)} - 1\right)\hat{g}(S_i) 
        =&~ \sum_{k=1}^K\hat{g}(k)\sum_{i: S_i = k}\left(\frac{\delta_i}{q_k} - 1\right) \\
        =&~ \sum_{k=1}^K\hat{g}(k)\left(m_k\cdot\frac{n_k}{m_k} - n_k\right)=0,
    \end{align*}
    where $\hat{g}$ is an arbitrary function of $S$. 
\end{proposition}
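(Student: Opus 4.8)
The plan is to exploit the single structural feature that distinguishes stratified SRSWOR from probabilistic (Poisson/Bernoulli) designs: the phase-2 sample size within each stratum is fixed in advance, so the number of selected units is deterministic rather than random. First I would partition the sum over the $n$ phase-1 units according to the stratum label, writing $\sum_{i=1}^n = \sum_{k=1}^K \sum_{i:S_i=k}$. Because $\hat{g}$ is a function of $S$ alone, it takes the constant value $\hat{g}(k)$ on the block $\{i : S_i = k\}$ and can be pulled outside the inner sum.

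It then remains to show that the inner sum $\sum_{i:S_i=k}(\delta_i/q_k - 1)$ vanishes for each $k$. The one identity I need is that under SRSWOR exactly $m_k$ of the $n_k$ units in stratum $k$ are drawn, i.e.\ $\sum_{i:S_i=k}\delta_i = m_k$ as an exact (non-random) equality, while $\sum_{i:S_i=k} 1 = n_k$. Substituting the design probability $q_k = m_k/n_k$ gives $\sum_{i:S_i=k}(\delta_i/q_k - 1) = m_k/q_k - n_k = m_k (n_k/m_k) - n_k = 0$, so every block contributes zero and the whole augmentation term vanishes identically in finite samples, regardless of $\hat{g}$.

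The computation itself is elementary; the only thing worth emphasizing---and the point I would flag most carefully---is that the cancellation is exact precisely because the stratum-wise count $\sum_{i:S_i=k}\delta_i$ is non-random under SRSWOR. Under Poisson sampling with the same marginal probabilities $q_k$, this count equals $m_k$ only in expectation, so the inner sum is zero only on average and the augmentation term does not vanish in any given sample. This is exactly what makes the IPSW and enriched estimators numerically identical under stratified SRSWOR, so that the simpler IPSW estimator loses nothing and the enriched construction offers no finite-sample advantage in this design.
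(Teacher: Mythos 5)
Your proof is correct and follows exactly the paper's own argument: split the sum by strata, pull out $\hat{g}(k)$, and use the fact that the within-stratum count $\sum_{i:S_i=k}\delta_i = m_k$ is deterministic under SRSWOR together with $q_k = m_k/n_k$ to get exact cancellation. Your additional remark contrasting this with Poisson sampling (where the count equals $m_k$ only in expectation) correctly identifies the structural reason the identity is exact, and matches the paper's discussion in Section 5.2.
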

A similar finite-sample identity is known for IPTW versus DR estimators under fully observed data \citep[e.g., Section 3 of][]{Hirano2003-es}.

In what follows, we consider phase-2 Poisson sampling with a known inclusion probability $q(S)$. 
Using the EIF under fully observed data $\phi_F$, the variance gain of the enriched estimator over the IPSW estimator can be written as
\begin{gather*}
    G(\phi_F) = E\left[\left(\frac{1}{q(S)} - 1\right)E[\phi_F\mid S]^2\right].
\end{gather*}
Since $G(\phi_F) \ge 0$, the enriched estimator is never worse than the IPSW estimator asymptotically. {\colr The gain is intuitively larger when $q(S)$ is small, but also depends on the squared conditional mean $E[\phi_F \mid S]^2$ within each stratum; $G(\phi_F) = 0$ only if $E[\phi_F \mid S] \approx 0$.}
The EIW estimator is not guaranteed to improve on SIW, but the simulations in Section~\ref{sec:sim} show that under ODS the gain can be substantial even for IPTW-type estimators.

The conditional mean $E[\phi_F\mid S]$ decomposes into two interpretable components.
Taking the conditional expectation of $\phi_F$ from Proposition~\ref{prop:EIF} given $S=s$,
\begin{align*}
    E[\phi_F\mid S=s]
        &= \underbrace{
            C_h^{-1}E\!\left[
                h\{\pi_1(X)\}\left\{
                    \frac{A(Y-\mu_1(X))}{\pi_1(X)}
                    - \frac{(1-A)(Y-\mu_0(X))}{\pi_0(X)}
                \right\}\,\middle|\, S=s\right]
        }_{\text{(i)~IPTW residual mean}}\\
        &\quad+ \underbrace{
            C_h^{-1}E\!\left[
                \bigl\{h\{\pi_1(X)\} + h'(\pi_1(X))(A-\pi_1(X))\bigr\}(\tau(X) - \tau_h)
            \,\middle|\, S=s\right]
        }_{\text{(ii)~effect-heterogeneity mean under target weighting}}.
\end{align*}
Enrichment is therefore most valuable when $S$ captures residual variation in
potential outcomes or treatment-effect heterogeneity, or both.
In particular, having $Y$ available at phase~1 directly increases $E[\phi_F \mid S]$
and can yield a substantial reduction in asymptotic variance.

\subsection{Double Machine Learning Formulation of the Proposed Estimators}\label{sec:DML}

As shown in Section \ref{sec:enrich}, the EDR estimator has the efficient influence function for the WATE under two-phase sampling. This implies that the EDR estimating function satisfies Neyman orthogonality \citep{Neyman1959, Neyman1979-ek}. In this section, we consider settings in which nuisance parameters are estimated using data-adaptive approaches such as random forests and Lasso regression, and reformulate the EDR estimator within the framework of so-called double machine learning \citep{Chernozhukov2018-cc}.

The estimating function of the EDR estimator is linear in the target parameter, and is rewritten as
\begin{align*}
    m(O;\tau_h,\eta) 
        =&~ \tau_hD(O;\eta) - N(O;\eta) \\
    \text{where}~~~
    N(O;\eta) 
        =&~ \frac{\delta}{q(S)}N_F(Z;\eta) 
        + \left(1-\frac{\delta}{q(S)}\right)g_N(S) \\
    D(O;\eta)  
        =&~ \frac{\delta}{q(S)}D_F(Z;\eta) 
        + \left(1-\frac{\delta}{q(S)}\right)g_D(S),\\
    N_F(Z;\eta)
        =&~ h\{\pi_1(X)\}\left\{
            \frac{\mathbf{1}^A_1}{\pi_1(X)}(Y-\mu_1(X))
            - \frac{\mathbf{1}^A_0}{\pi_0(X)}(Y-\mu_0(X))
            + \tau(X)
        \right\} \\
        &\qquad + h'\{\pi_1(X)\}\tau(X)(A-\pi_1(X)),\\
    D_F(Z;\eta)
        =&~ h\{\pi_1(X)\} + h'\{\pi_1(X)\}(A-\pi_1(X)),\\
    g_N(S) =&~ E[N_F(Z;\eta)\mid S],\\
    g_D(S) =&~ E[D_F(Z;\eta)\mid S].
\end{align*}

Since the sampling probability $q(S)$ is known, the EDR estimating function maintains the unbiasedness of the estimating function:
\begin{gather*}
    E[m(O;\tau_h,\eta)] = E[m_F(Z;\tau_h,\eta)] = 0.
\end{gather*}

Definition \ref{def:DML} presents the DML algorithm for WATE under two-phase sampling. As in the previous section, for simplicity, we assume that the phase-1 variables are discrete and that a sufficient sample size is available within each stratum.

\begin{definition}[Double Machine Learning for WATE under Two-phase Sampling]\label{def:DML}
    The proposed DML algorithm for WATE is described as the following steps.
    \begin{enumerate}
        \item Randomly partition the phase 1 and phase 2 sample into $K$ equal folds, respectively. When the phase 1 variables are discrete and the number of levels is not large, further partition them equally within each phase 1 stratum. Let $I_k$ denote the set of indices included in the $k$th fold, and let $I_{-k}$ denote the set obtained by excluding $I_k$ from the full sample.
        \item For $k = 1,\ldots,K$, perform the following steps in order:
        \begin{enumerate}
            \item Using $I_{-k}$, train the nuisance parameter model $\hat\eta^{-k}$, and compute $\hat{N}_F(Z_{i}, \hat\eta^{-k})$ and $\hat{D}_F(Z_{i}, \hat\eta^{-k})$ for the individuals $i\in I_{-k}$.
            \item Using $I_{-k}$, estimate $g_N(S)$ and $g_D(S)$ based on a linear saturated model. When the sampling scheme is SRSWOR, one may set $g_N(S)=g_D(S)=0$.
            \item For $i\in I_{k}$, compute $N(O_{i},\hat{\eta}^{-k}), D(O_{i},\hat{\eta}^{-k})$.
        \end{enumerate}
        \item Solve the following estimating equation and obtain the estimator of the WATE:
        \begin{gather*}
            \hat\tau^{\text{dml}}_h 
            = \frac{
            \sum_{k=1}^K\sum_{i\in I_k}N(O_{i},\hat{\eta}^{-k})
            }{
            \sum_{k=1}^K\sum_{i\in I_k}D(O_{i},\hat{\eta}^{-k})
            }.
        \end{gather*}
    \end{enumerate}
\end{definition}

As in the original DML framework, we assume the product-rate condition for non-parametric estimation of the propensity score and outcome regression models, together with the various regularity conditions given in the Appendix. Under these assumptions, the DML estimator of the WATE under two-phase sampling is $\sqrt{n}$-consistent for the true value and asymptotically normal.

The asymptotic variance can be estimated empirically by estimating the EIF for each observation in the sample and then evaluating its variance.

\begin{proposition}[Asymptotic Properties of the DML estimator]\label{prop:DML}
    The DML-EDR estimator $\hat{\tau}_h^{\text{dml}}$ satisfies the following asymptotic properties:
    \begin{align*}
        \sqrt{n}(\hat{\tau}_h^{\text{dml}} - \tau_h) 
        \overset{d}{\rightarrow}&~ N(0,\mathcal{V}_h) \\
    \text{where}~~~ \mathcal{V}_h =&~ J_{11}^{-2}\text{Var}(m(O,\tau_h,\eta_0)) \\
    J_{11} =&~ E[D(O;\eta_0)].
    \end{align*}
    
    The DML estimator of the asymptotic variance 
    \begin{gather*}
        \hat{\mathcal{V}}_h = \hat{J}_{11}^{-2}\frac{1}{n}\sum_{k=1}^K\sum_{i\in I_k}m(O_{i};\hat{\tau}_h^{\text{dml}},\hat{\eta}^{-k})^2,
        \quad \text{with}~
        \hat{J}_{11} = \frac{1}{n}\sum_{k=1}^K\sum_{i\in I_k} D(O_{i};\hat{\eta}^{-k})
    \end{gather*}
    is consistent. 
    Consequently, for any fixed $\alpha\in(0,1)$, the Wald-type confidence interval
    \begin{gather*}
        \left[
        \hat{\tau}_h^{\mathrm{dml}}
        \pm
        z_{1-\alpha/2}
        \sqrt{\hat{\mathcal V}_h/n}
        \right]
    \end{gather*}
    has asymptotic coverage probability $1-\alpha$.
\end{proposition}

\section{Simulation}\label{sec:sim}
\subsection{Performance of IPSW and Enriched Estimators}
\subsubsection{Settings}
We use DGP~1 (Appendix~\ref{app:dgp01}), in which eight independent binary
covariates $V_1,\ldots,V_8$ and one continuous covariate $W$ determine a
binary treatment $A$ and binary potential outcomes $(Y^1,Y^0)$.

In phase 1, under ODS, the stratum variable is $S=(A,V_{\text{obs}},Y)$; under non-ODS, $S=(A,V_{\text{obs}})$.
A single $V_j$ is designated as $V_{\text{obs}}$ in each scenario, yielding eight strata under ODS and four under non-ODS.

Phase-2 sampling was conducted via two procedures: stratified SRSWOR and Poisson sampling (details in Appendix~\ref{app:dgp01}). In both procedures, the asymptotic sampling probabilities are equivalent.

The target parameters were the population-level ATE, ATT, ATU, and ATO. The true values computed from a large reference sample were $0.165$, $0.212$, $0.146$, and $0.198$, respectively.
The estimators evaluated in this section were the SIW, EIW, SDR, and EDR estimators. Both the propensity score and the outcome regression models for each treatment group were correctly specified logistic regression models with covariates $V_1, \ldots, V_8, W$.

Let $\hat\tau_h$ denote the estimated value and $\tau_h$ the true value obtained from the large sample, and let $E^*$ denote the expectation over the Monte Carlo replications.
The estimators were evaluated using the following metrics: $\text{Bias} = E^*[\hat\tau_h] - \tau_h$, $\text{RMSE} = \sqrt{E^*[(\hat\tau_h - \tau_h)^2]}$, and $\text{empSE} = \sqrt{E^*[(\hat\tau_h - E^*[\hat\tau_h])^2]}$.
For both empirical SE and RMSE, we also evaluated their relative values compared to those obtained from the oracle estimator, which uses the true $P(A=1\mid X)$, $Y^1$, and $Y^0$ for all individuals in the phase-1 sample.
Furthermore, to assess the degree of efficiency improvement achieved by enrichment for both the IPTW-type and DR-type estimators, we also computed
\begin{gather*}
    \%\text{Gain} = 100\times \frac{\text{empSE(IPSW)} - \text{empSE(Enriched)}}{\text{empSE(IPSW)}}.
\end{gather*}

To assess the asymptotic behavior, the phase-2 sample size was set to $m\in\{200, 500, 1000, 2000\}$, and the phase-1 sample size was set to $n\in\{4m, 10m\}$.
In total, 256 simulation scenarios were conducted, combining eight sample-size configurations, four sampling schemes, and eight choices of $V_{\text{obs}}$.
Each simulation scenario was replicated $1{,}000$ times.

\subsubsection{Results}
The overlap structure of the generated data is illustrated in Figure \ref{fig:PShist}.
The true treatment prevalence was approximately 28--29\%
(Figure~\ref{fig:PShist}); the resulting extreme propensity scores
make the ATE particularly sensitive to estimation error,
so the ATT and ATO are more relevant targets in this setting.
\begin{figure}[htbp]
    \centering
    \includegraphics[width=0.75\linewidth]{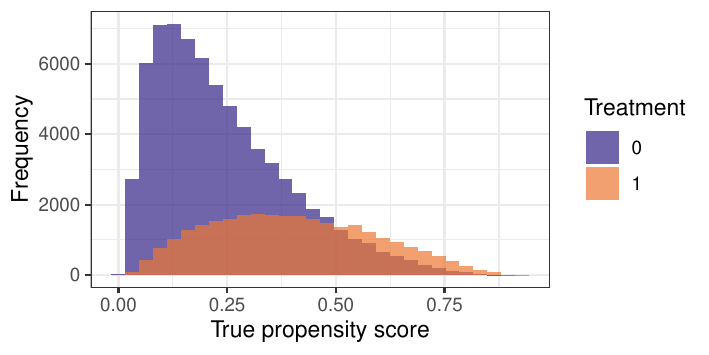}
    \caption{Histogram of the true propensity score by the treatment group.}
    \label{fig:PShist}
\end{figure}

As discussed in Section \ref{sec:benefit}, when stratified SRSWOR is used, the IPSW and enriched estimators are equal in finite samples, except for artificial randomness due to the delete-$d$ jackknife and negligible numerical discrepancies.
Accordingly, no performance difference was observed between the two estimators.
In what follows, we report only the results obtained under Poisson sampling in the main text, and the results under the stratified SRSWOR are reported in Appendix \ref{sec:addsim}.

\paragraph{Asymptotic behavior}
All four estimators exhibited noticeable finite-sample bias under ODS (Figure \ref{fig:Bias_p010_poisson}).
In all cases, the bias converged to zero as the sample size increased.
The finite-sample bias was particularly large when targeting the ATE or ATU, and the SIW and EIW estimators tended to show greater bias than the others.
When comparing the IPSW and enriched estimators, the finite-sample bias was generally smaller for the latter. The bias correction based on the delete-$d$ jackknife successfully reduced the bias in all cases.

\begin{figure}[htbp]
\centering
\includegraphics[width=0.95\linewidth]{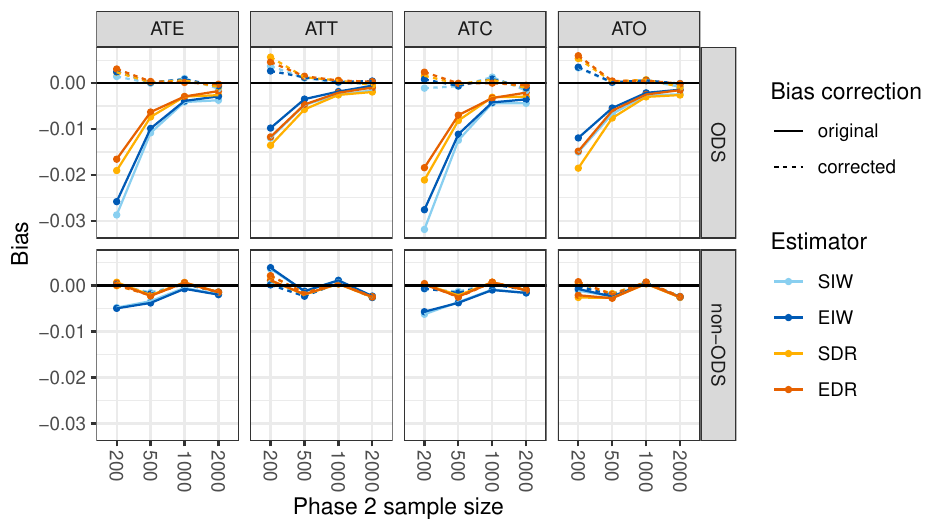}
\caption{Bias of each estimator by sample size under Poisson sampling with $V_{\text{obs}}=V_1$ and $n=10m$.}
\label{fig:Bias_p010_poisson}
\end{figure}

For the other metrics (empirical SE and RMSE), the values also decreased with increasing sample size (Appendix \ref{sec:addsim}). 
Differences among estimators were most apparent under ODS, where the enriched estimators consistently outperformed the IPSW estimators.

\paragraph{Comparison in performance}
Table \ref{tab:tab:ATE_n10000_m1000_poisson_ODS} compares the performance of each estimator for ATE estimation under Poisson sampling, with a phase-1 sample size of $n=10{,}000$ and a phase-2 sample size of $m=1{,}000$. Under ODS, regardless of the choice of $V_{\text{obs}}$, the EDR estimator exhibited the smallest estimation errors across all metrics. 

The coverage probabilities of the 95\% confidence intervals for the EDR estimator are reported in Appendix \ref{sec:addsim}. A slight overall undercoverage was observed, which was less apparent under non-ODS settings.

\begin{table}[tbp]
\centering
\caption{\label{tab:tab:ATE_n10000_m1000_poisson_ODS}Performance of each estimator when targeting the ATE under $n = 10{,}000$, $m = 1{,}000$, outcome-dependent Poisson sampling. For each cell of $V_{\text{obs}}$ and estimator, the three numbers shown (from top to bottom) represent the relative RMSE, relative empirical SE, and bias.}
\centering
\begin{tabular}[t]{ccccccccc}
\toprule
\multicolumn{1}{c}{ } & \multicolumn{4}{c}{Original} & \multicolumn{4}{c}{Bias-corrected} \\
\cmidrule(l{3pt}r{3pt}){2-5} \cmidrule(l{3pt}r{3pt}){6-9}
$V_{\text{obs}}$ & SIW & EIW & SDR & EDR & SIW & EIW & SDR & EDR\\
\midrule
$V_1$ & 5.35 & 4.10 & 5.01 & 3.53 & 5.34 & 4.07 & 5.02 & 3.54\\
 & 5.31 & 4.05 & 4.99 & 3.50 & 5.33 & 4.07 & 5.02 & 3.54\\
 & -0.41 & -0.39 & -0.29 & -0.29 & 0.10 & 0.08 & 0.06 & 0.01\vspace{2mm}\\
$V_2$ & 5.65 & 4.46 & 5.24 & 3.87 & 5.61 & 4.40 & 5.26 & 3.89\\
 & 5.59 & 4.40 & 5.21 & 3.83 & 5.61 & 4.40 & 5.26 & 3.88\\
 & -0.53 & -0.48 & -0.39 & -0.33 & 0.03 & 0.04 & 0.07 & 0.07\vspace{2mm}\\
$V_3$ & 5.33 & 4.08 & 5.02 & 3.51 & 5.26 & 4.01 & 4.99 & 3.48\\
 & 5.23 & 4.01 & 4.94 & 3.45 & 5.25 & 4.01 & 4.98 & 3.48\\
 & -0.69 & -0.53 & -0.59 & -0.42 & -0.13 & -0.02 & -0.13 & -0.02\vspace{2mm}\\
$V_4$ & 5.24 & 4.06 & 4.98 & 3.60 & 5.17 & 4.01 & 4.94 & 3.58\\
 & 5.17 & 4.01 & 4.93 & 3.56 & 5.17 & 4.01 & 4.94 & 3.59\\
 & -0.58 & -0.45 & -0.50 & -0.36 & -0.05 & 0.03 & -0.12 & -0.04\vspace{2mm}\\
$V_5$ & 5.67 & 4.50 & 5.23 & 3.90 & 5.55 & 4.38 & 5.18 & 3.85\\
 & 5.55 & 4.38 & 5.14 & 3.80 & 5.55 & 4.37 & 5.17 & 3.85\\
 & -0.75 & -0.67 & -0.63 & -0.54 & -0.16 & -0.13 & -0.15 & -0.12\vspace{2mm}\\
$V_6$ & 5.60 & 4.34 & 5.27 & 3.88 & 5.56 & 4.31 & 5.26 & 3.88\\
 & 5.57 & 4.31 & 5.24 & 3.84 & 5.56 & 4.31 & 5.27 & 3.89\\
 & -0.41 & -0.39 & -0.38 & -0.35 & 0.17 & 0.14 & 0.09 & 0.06\vspace{2mm}\\
$V_7$ & 5.26 & 4.14 & 5.03 & 3.62 & 5.20 & 4.04 & 5.02 & 3.59\\
 & 5.19 & 4.05 & 4.98 & 3.57 & 5.20 & 4.04 & 5.02 & 3.59\\
 & -0.60 & -0.58 & -0.50 & -0.44 & -0.03 & -0.06 & -0.04 & -0.04\vspace{2mm}\\
$V_8$ & 5.51 & 4.43 & 5.24 & 4.06 & 5.46 & 4.37 & 5.26 & 4.09\\
 & 5.45 & 4.36 & 5.21 & 4.03 & 5.46 & 4.37 & 5.26 & 4.08\\
 & -0.51 & -0.47 & -0.37 & -0.30 & 0.08 & 0.07 & 0.12 & 0.12\vspace{2mm}\\
\bottomrule
\end{tabular}
\end{table}

\paragraph{Efficiency gains from enrichment}
Figure \ref{fig:gain_cor_1000_poisson} shows the \%Gain of the IPTW-type and DR-type estimators from enrichment under Poisson sampling with $m=1000$.
The \%Gain was substantially larger under ODS than under non-ODS
across all choices of $V_{\text{obs}}$.
Under non-ODS, enrichment provided little benefit,
and the EIW estimator occasionally showed higher empirical SE than the SIW estimator.
Focusing on ODS, the \%Gain increased as the phase-2 sampling fraction decreased.

The influence of the choice of $V_{\text{obs}}$ was generally minor.
Under ODS, \%Gain tended to be higher for all target parameters when either $V_1$ or $V_3$ was included in $S$.
$V_1$ was associated with $Y^1, A$, and $W$, whereas $V_3$ was associated with $Y^1$ and $W$.
When targeting the ATE or ATU, $V_2$ and $V_3$, which are related to $Y^1$, also showed slightly higher \%Gain.

\begin{figure}[htbp]
\centering
\includegraphics[width=0.95\linewidth]{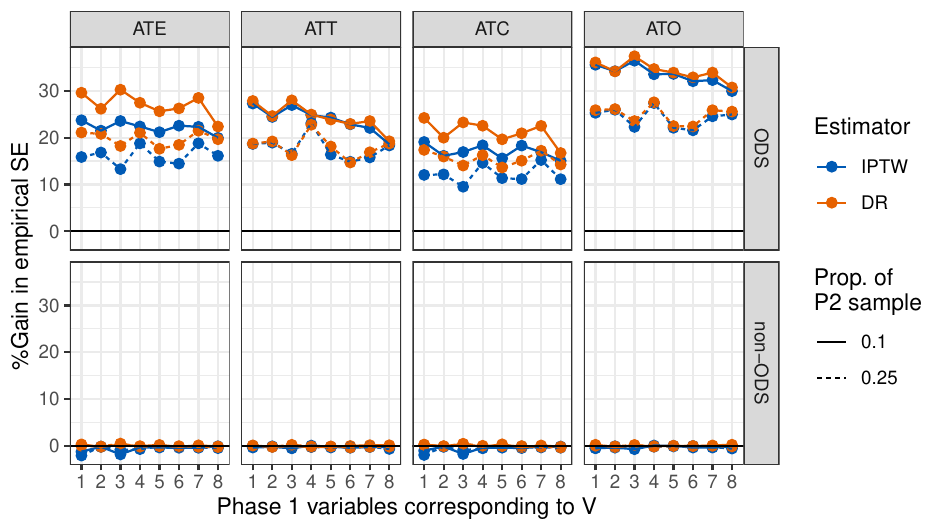}
\caption{Efficiency gains from enrichment by the choice of $V_{\text{obs}}$ under Poisson sampling. The estimator is bias-corrected, and the phase 2 sample size is 1000.}
\label{fig:gain_cor_1000_poisson}
\end{figure}

\subsection{Validation of the DML algorithm}
\subsubsection{Settings}
We evaluate the DML-EDR estimator (Definition~\ref{def:DML}) when all nuisance functions are estimated nonparametrically. 
Specifically, we use SuperLearner \citep{van-der-Laan2007-uv} with the library $\{\texttt{SL.glm}, \texttt{SL.ranger}\}$, consisting of logistic regression and random forests. 
This simulation study assesses whether the estimator exhibits the expected $\sqrt{n}$-consistent and asymptotically normal behavior when nuisance functions are learned from data, and whether cross-fitting is needed to obtain valid confidence interval coverage.

We use the data-generating process denoted by DGP~2, whose full specification is given in Appendix~\ref{app:dgp02}. In this setting, ten independent binary latent variables $Z_1,\ldots,Z_{10}\sim\mathrm{Bernoulli}(0.5)$ and one binary phase-1 covariate $V\sim\mathrm{Bernoulli}(0.5)$ determine the treatment and outcome.
The treatment is generated as
$A\mid Z,V\sim\mathrm{Bernoulli}(\mathrm{expit}(\alpha^{\top}(Z-0.5\cdot\mathbf{1})
+(V-0.5)))$,
where $\mathbf{1}$ denotes the all-ones vector (distinct from the indicator function notation $\mathbf{1}\{\cdot\}$),
with $\alpha=(1,-1,0.5,-0.5,0.25,-0.25,0,\ldots,0)^{\top}$, and the binary
outcome satisfies $\tau(X)=0$ for all $X$, so the true values of all four
target parameters are
$\tau_{\mathrm{ATE}}=\tau_{\mathrm{ATT}}=\tau_{\mathrm{ATU}}=\tau_{\mathrm{ATO}}=0$.
Phase-2 sampling follows ODS ($S=(A,V,Y)$) with Poisson sampling at an
overall fraction of $\bar{q}=0.25$.

Two covariate scenarios are considered:
\begin{itemize}
  \item \textbf{Linear:} $W_j=Z_j$ for $j=1,\ldots,10$.
    The propensity score and outcome regression are linear in the observed
    covariates, so standard logistic regression can recover them accurately.
  \item \textbf{Nonlinear:} $W_k=Z_{2k-1}+2Z_{2k}\in\{0,1,2,3\}$ for
    $k=1,\ldots,5$ (invertible pair encodings) and
    $W_6=Z_1Z_3,\,W_7=Z_2Z_4,\,W_8=Z_5Z_7$ (second-order products) and
    $W_9=Z_1Z_5Z_9,\,W_{10}=Z_2Z_6Z_{10}$ (third-order products).
    Recovering the linear nuisance functions from these transformed
    covariates requires a sufficiently flexible learner.
\end{itemize}
In both scenarios the full covariate vector supplied to SuperLearner is
$X=(W_1,\ldots,W_{10},V)$; in the nonlinear scenario
$W_1,\ldots,W_5$ are additionally expanded into dummy variables.

Two cross-fitting configurations are compared:
$K=1$ (no cross-fitting) and $K=5$ (five-fold cross-fitting as in
Definition~\ref{def:DML}).
The phase-1 sample size ranges over $n\in\{1000,2000,4000,8000\}$, and
each configuration is replicated 1{,}000 times (10 replications were discarded due to numerical outliers).
Performance is evaluated via Bias, EmpSE, MeanSE, and the empirical
coverage of the 95\% Wald confidence interval (nominal level 0.95).

\subsubsection{Results}\label{sec:sim2_res}

\paragraph{SE underestimation and undercoverage.}
Table~\ref{tab:dml_coverage} reports the empirical coverage of 95\% Wald
confidence intervals and, in the rightmost column of each panel, the ratio
MeanSE/EmpSE for the ATE estimand.
A ratio below 1 indicates that the standard error estimator underestimates
the true sampling variability, which directly causes undercoverage.

Without cross-fitting ($K=1$), the SE ratio falls below 1 at small $n$,
reaching $0.82$ at $n=1{,}000$ in the linear scenario
and $0.65$ in the nonlinear scenario,
reflecting overfitting when nuisance functions are fitted and evaluated on the same sample.

The undercoverage follows directly from this underestimation of the SEs.
At $n=1{,}000$, the coverage probabilities for ATE and ATU are 87.8\% and 84.1\%,
respectively, in the linear scenario, and decrease to 76.2\% and 72.1\% in the nonlinear scenario.
In contrast, the ATO exhibits better coverage, with coverage probabilities of 94.5\% and 92.2\%,
possibly because the overlap weights
$h\{\pi(X)\}=\pi(X)(1-\pi(X))$ down-weight units with extreme propensity scores
and thereby reduce sensitivity to nuisance-model estimation error.

Five-fold cross-fitting ($K=5$) mitigates the underestimation of the standard error observed without cross-fitting. 
The SE ratio exceeds 1 in all configurations, 
ranging from $1.12$ to $1.14$ at $n=1{,}000$, 
which leads to slightly conservative confidence intervals 
and coverage probabilities close to the nominal level. 
Under $K=5$, the coverage probabilities range from 93.9\% to 97.9\% at $n=1{,}000$ and from 95.2\% to 97.2\% at $n=8{,}000$.
In the nonlinear scenario at $n=1{,}000$, however, 
a small upward bias is also present under $K=5$ (ATE: $0.024$; ATT: $0.036$), 
likely because each training fold contains only $n/K = 200$ observations,
which may be insufficient for the random forest to accurately approximate 
complex nonlinear nuisance functions.
This bias decreases rapidly with $n$ and is negligible at $n \geq 4{,}000$.

The efficiency loss associated with cross-fitting, 
defined as the ratio of EmpSE under $K=5$ to that under $K=1$, 
decreases with increasing $n$, 
although its magnitude varies across estimands at small sample sizes. 
At $n=1{,}000$, the loss ranges from approximately $5\%$ for ATO to 
$30\%$ for ATU in the linear scenario, and from $8\%$ for ATO to 
$59\%$ for ATU in the nonlinear scenario. 
The ATO has the smallest loss across settings,
owing to the variance-reducing property of the overlap weights \citep{Li2018-cq}.
By $n=8{,}000$, the efficiency loss is below $4\%$ for all estimands in both scenarios. 
This result is consistent with the theoretical result that the cost of sample splitting vanishes asymptotically \citep{Chernozhukov2018-cc}.

\paragraph{Coverage of $K=1$ approaches nominal as $n$ grows.}
The SE ratio for $K=1$ rises toward 1 as $n$ grows, reaching
approximately 1.0 at $n=8{,}000$ in both scenarios.
Correspondingly, coverage improves: by $n=8{,}000$, ATE and ATT
exceed 94\% in all cases, and all estimands fall in the range
93.9\%--95.7\%.
Residual undercoverage of $K=1$ remains visible at $n=4{,}000$ under the
nonlinear scenario (ATT: 91.4\%, ATU: 91.8\%), indicating that cross-fitting
continues to be preferable when flexible nonparametric learners are used.

At small $n$, the undercoverage of $K=1$ thus reflects SE underestimation
from in-sample fitting rather than bias in the point estimator;
five-fold cross-fitting corrects this at an efficiency cost that
vanishes as $n$ grows. Full results are given in
Appendix~\ref{sec:addsim_dgp2}.

\begin{table}[htbp]
  \centering
  \caption{Empirical coverage (\%) of the 95\% Wald confidence interval and
    MeanSE/EmpSE ratio (for ATE) for the DML-EDR estimator under DGP~2.
    True parameter values are zero for all estimands.
    MeanSE/EmpSE~$< 1$ indicates standard error underestimation.}
  \label{tab:dml_coverage}
  \vspace{2mm}
  \small
  \setlength{\tabcolsep}{4.5pt}
  \begin{tabular}{rr ccccc c ccccc}\toprule
    & & \multicolumn{5}{c}{Linear ($W_j = Z_j$)}
    & \phantom{x}
    & \multicolumn{5}{c}{Nonlinear (transformed $W$)} \\
    \cmidrule{3-7}\cmidrule{9-13}
    $n$ & $K$ & ATE & ATT & ATU & ATO & \multicolumn{1}{c}{SE ratio}
             && ATE & ATT & ATU & ATO & \multicolumn{1}{c}{SE ratio} \\\midrule
    \multirow[t]{2}{*}{1{,}000}
      & 1 & 87.8 & 90.8 & 84.1 & 94.5 & 0.82
          && 76.2 & 85.1 & 72.1 & 92.2 & 0.65 \\
      & 5 & 97.9 & 97.1 & 96.4 & 96.0 & 1.14
          && 96.5 & 97.0 & 95.8 & 93.9 & 1.12 \\\addlinespace
    \multirow[t]{2}{*}{2{,}000}
      & 1 & 92.5 & 92.4 & 92.0 & 96.4 & 0.95
          && 88.7 & 90.9 & 86.8 & 96.1 & 0.86 \\
      & 5 & 96.9 & 97.0 & 96.8 & 96.6 & 1.12
          && 98.3 & 96.7 & 97.7 & 96.6 & 1.16 \\\addlinespace
    \multirow[t]{2}{*}{4{,}000}
      & 1 & 94.1 & 93.9 & 93.0 & 95.6 & 0.96
          && 93.0 & 91.4 & 91.8 & 95.4 & 0.91 \\
      & 5 & 97.1 & 96.2 & 95.8 & 95.8 & 1.04
          && 97.0 & 95.7 & 97.0 & 95.9 & 1.07 \\\addlinespace
    \multirow[t]{2}{*}{8{,}000}
      & 1 & 94.8 & 94.4 & 94.2 & 95.6 & 1.01
          && 94.8 & 94.6 & 93.9 & 95.7 & 1.01 \\
      & 5 & 96.4 & 95.2 & 95.7 & 95.6 & 1.06
          && 97.0 & 96.4 & 97.2 & 96.6 & 1.09 \\
    \bottomrule
  \end{tabular}
\end{table}



\section{Real Data Analysis}\label{sec:realdata}

\subsection{Settings}

We illustrate the proposed estimators using the observational study of
\citet{Connors1996-hp}, who investigated the effect of right heart
catheterization (RHC) on clinical outcomes in critically ill patients
admitted to intensive care units (ICUs).
The publicly available dataset contains $n = 5{,}735$ patients enrolled
across five US academic medical centres between 1989 and 1994.
The treatment variable $A$ is a binary indicator of whether RHC was performed
on the first day of ICU admission ($A = 1$; $n = 2{,}184$, $38.1\%$),
versus no RHC ($A = 0$; $n = 3{,}551$, $61.9\%$).
The outcome $Y$ is 30-day mortality, which occurred in 1{,}918 patients
($33.4\%$, crude rates $30.6\%$ vs.\ $38.0\%$ in the no-RHC and RHC groups,
respectively).
Baseline characteristics are summarised in Table~\ref{tab:table1}
(Appendix~\ref{sec:add_real}).
Substantial imbalances were observed on several key prognostic factors;
notably, the APACHE~III score was markedly higher in the RHC group
(mean $60.7$ vs.\ $50.9$; SMD $= 0.51$).

The full dataset is treated as the phase-1 sample ($n_{1} = 5{,}735$).
Phase-1 variables $V$ were chosen to be inexpensive, routinely available
categorical covariates ascertainable at admission without laboratory tests:
age group ($<55$ / $55$--$70$ / $\geq 70$),
sex (Female / Male),
and a three-level primary-disease cluster (Sepsis/MOSF,
Cardiac/Resp, Other) obtained by collapsing the original nine-category
primary-diagnosis variable.
This yields $3 \times 2 \times 3 = 18$ unique $V$ cells.
Phase-2 subsampling was carried out using Poisson sampling stratified
by $A \times Y \times$ age group ($2 \times 2 \times 3 = 12$ strata),
at four nominal sampling fractions
$\bar{q} \in \{0.10,\, 0.25,\, 0.50,\, 1.00\}$,
giving approximate phase-2 sample sizes of
$m \approx 580$, $1{,}430$, $2{,}830$, and $5{,}735$, respectively.

Phase-2 variables $W$ consist of all covariates included in the
propensity score model of \citet{Connors1996-hp}:
continuous demographics and severity indices
(age, years of education, ADL deficit score, APACHE~III score,
Glasgow Coma Score, and SUPPORT two-month survival probability),
sixteen vital-sign and laboratory measurements on day 1,
twelve comorbidity indicators, ten admission-diagnosis subcategory flags,
and cancer status, DNR-order, insurance category, income bracket, and race.
Missing values in all variables were imputed by median (continuous) or
mode (categorical) prior to analysis; the missingness rate was below 2\%
for any single variable.

Three EDR-based estimators were compared.
\begin{itemize}
  \item \textbf{Parametric-EDR}: propensity score and both outcome
    regressions were estimated by inverse-probability-of-sampling-weighted
    logistic regression (\texttt{SL.glm}) without cross-fitting ($K = 1$).
  \item \textbf{DML-EDR ($K = 1$)}: nuisance functions were estimated by
    SuperLearner \citep{van-der-Laan2007-uv} with the library
    $\{\texttt{SL.glm},\, \texttt{SL.ranger},\, \texttt{SL.ranger.conservative},\,
    \texttt{SL.ranger.wide}\}$, combining logistic regression with three
    random-forest variants that differ in their tuning parameters:
    a default forest (\texttt{SL.ranger}), a conservative forest with
    reduced \texttt{mtry} ($\lfloor\sqrt{p}/2\rfloor$) and larger
    minimum-node size (\texttt{SL.ranger.conservative}), and a wide forest
    with \texttt{mtry} $= p$ (\texttt{SL.ranger.wide}).
    Nuisance models were trained and evaluated on the same phase-2 sample
    ($K = 1$, no cross-fitting).
  \item \textbf{DML-EDR ($K = 5$)}: same SuperLearner library as DML-EDR
    ($K = 1$), with five-fold cross-fitting as in
    Definition~\ref{def:DML}.
\end{itemize}
All estimators were applied to the four target parameters ATE, ATT, ATU,
and ATO.
Point estimates and 95\% EIF-based confidence intervals as a function
of $\bar{q}$ are displayed in Figure~\ref{fig:rda_estimates}, and the
ratio of the estimated standard error relative to the full-data benchmark
($\bar{q} = 1$) is shown in Figure~\ref{fig:rda_se_inflation}.

\subsection{Results}

\paragraph{Point estimates across phase-2 sampling fractions.}

At $\bar{q} = 1$ (full cohort, $m = 5{,}735$), all three estimators produced
positive and statistically significant estimates for all four estimands
(range: $0.043$--$0.060$). These estimates suggest an approximately
$4$--$6$ percentage point increase in 30-day mortality associated with RHC,
in line with \citet{Connors1996-hp}. 

As $\bar{q}$ decreases, the three estimators diverge (Figure~\ref{fig:rda_estimates}).
The Parametric-EDR estimator, which relies
on logistic regression models for the nuisance functions, remains directionally
stable across all sampling fractions. Its point estimates remain close to the full-data value across all
sampling fractions, with the full-data reference falling within the
confidence intervals even at $\bar{q} = 0.10$ ($m \approx 580$).

The DML-EDR estimator without cross-fitting ($K = 1$) shows upward instability
at smaller values of $\bar{q}$. For example, at $\bar{q} = 0.10$, the ATE
estimate is approximately twice the full-data estimate ($0.108$ vs.\ $0.057$).
This instability likely reflects overfitting by the random forest component
when the nuisance functions are trained and evaluated on the same small
phase-2 sample.

The DML-EDR estimator with five-fold cross-fitting ($K = 5$) appears to require
the largest phase-2 sample among the three methods in this application. 
At $\bar{q} \leq 0.50$, the effective sample size within
each cross-fitting split is limited relative to the dimensionality of $W$,
causing the estimates to deviate from the full-data benchmark. 
This behavior should not be viewed as an
intrinsic limitation of cross-fitted DML. Rather, it reflects the dependence
of finite-sample performance on the complexity of the nuisance functions and
the amount of phase-2 data available for their estimation. 
In the DML validation
simulation (Section~\ref{sec:sim}), where $W$ is lower-dimensional,
five-fold cross-fitting performed well at comparable sample sizes. 
At $\bar{q} = 1$, DML-EDR with $K = 5$ agrees closely with the other methods
and provides asymptotically valid inference.

\paragraph{Standard error inflation under subsampling.}

Figure~\ref{fig:rda_se_inflation} shows that the standard errors increase
monotonically as $\bar{q}$ decreases for all methods. At $\bar{q} = 0.10$,
the inflation is largest for Parametric-EDR ($3.5$--$5.2$ across estimands)
and comparatively smaller for the DML-based estimators ($2.7$--$3.6$). This
pattern is consistent with the efficiency argument in Section~\ref{sec:benefit},
where more flexible nuisance-function estimation can reduce the residual
variance of the efficient influence function. The inflation decreases to
$1.8$--$2.3$ at $\bar{q} = 0.25$ and to $1.3$--$1.7$ at $\bar{q} = 0.50$,
although it remains non-negligible.

These results suggest that DML-EDR with cross-fitting is preferable when $m$ is large, 
whereas Parametric-EDR provides more stable estimates at small sampling 
fractions where the phase-2 sample is insufficient for reliable 
nonparametric nuisance estimation.

\begin{figure}[htbp]
  \centering
  \includegraphics[width=\linewidth]{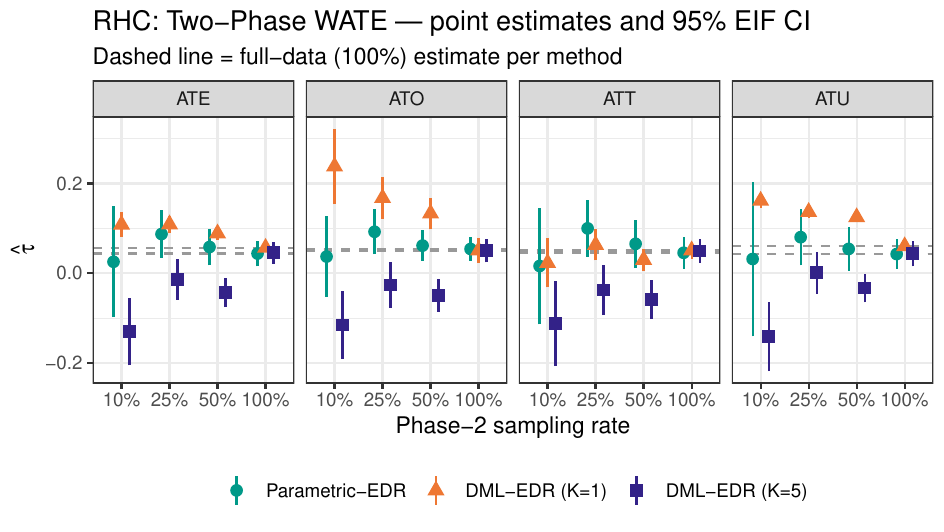}
  \caption{Point estimates and 95\% EIF-based confidence intervals for
    ATE, ATT, ATU, and ATO as a function of the phase-2 sampling
    fraction $\bar{q}$.
    The dashed horizontal line indicates the full-data ($\bar{q} = 1$)
    estimate for each method.}
  \label{fig:rda_estimates}
\end{figure}

\begin{figure}[htbp]
  \centering
  \includegraphics[width=\linewidth]{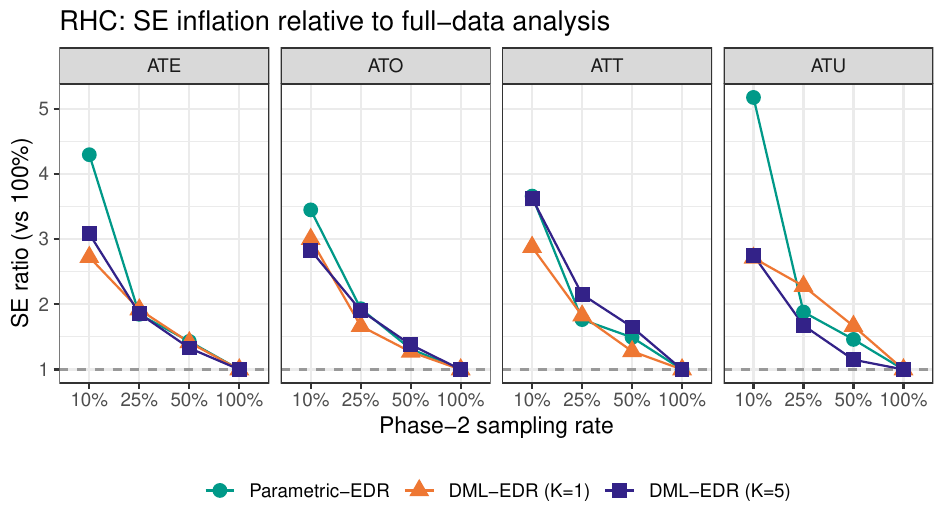}
  \caption{Ratio of the estimated standard error at phase-2 sampling
    fraction $\bar{q}$ to the corresponding full-data ($\bar{q} = 1$)
    standard error, for each estimand and method.
    The dashed horizontal line at 1 indicates no inflation.}
  \label{fig:rda_se_inflation}
\end{figure}



{\colr 
\section{Discussion}
}
This paper derived the efficient influence function and the semiparametric
efficiency bound for the entire WATE class under two-phase sampling, imposing
no structural restrictions on the outcome and propensity-score models, and
thereby extended the ATE results of \citet{Wang2009-kn} to a broad family of
estimands that includes the ATE, ATT, ATU, and ATO. Building on this bound, we
proposed two consistent estimators, the IPSW and EDR estimator, showed that the EDR estimator attains the bound, and
obtained a closed-form expression for its efficiency gain together with a decomposition of
$E[\phi_F\mid S]$ into an IPTW-residual component and an effect-heterogeneity
component (Section~\ref{sec:benefit}). A double-machine-learning formulation with
cross-fitting was further shown to preserve $\sqrt{n}$-asymptotic normality and
efficiency under data-adaptive nuisance estimation.

These results carry direct implications for the design of two-phase observational
studies. Because the enrichment gain is governed by $E[\phi_F\mid S]^2$, phase-1
variables are most valuable when they predict residual variation in the potential
outcomes or capture treatment-effect heterogeneity. The gain is also increasing in $q(S)^{-1}$, so enrichment is
most beneficial precisely when phase-2 subsampling is aggressive and measurement
budgets are tight. A particularly important special case arises when the outcome
is inexpensive and can be recorded at phase~1: under such outcome-dependent
sampling, including $Y$ in $S$ directly inflates $E[\phi_F\mid S]$ and can yield
substantial variance reductions, as confirmed in our simulations. The choice of
estimator also matters in finite samples: in the RHC application, the parametric
EDR estimator remained stable at small phase-2 fractions, whereas the
cross-fitted DML-EDR estimator was preferable when the phase-2 sample was large
enough to support flexible nuisance estimation. Taken together, these findings
yield concrete guidance for studies: collect outcome-predictive and
effect-modifying variables at phase~1 whenever feasible, and match the
flexibility of the estimator to the available phase-2 sample size.

We emphasize that the efficiency theory, which includes the efficient influence function, the bound, and the gain formula, holds for general $S$, whether discrete or continuous. The restriction to discrete $S$ with finite support and adequate
stratum sizes enters only at the design and implementation stage: it underlies the
stratified-SRSWOR identity of Proposition~\ref{prop:srswor}, under which the
augmentation term vanishes exactly; the optimization of the stratum-specific
sampling probabilities $q(S=k)$ (Proposition~\ref{prop:simpledesign}); and the
stratum-mean estimation of the enrichment function $g(S)=E[\phi_F\mid S]$. This
delineation clarifies that our main theoretical contributions are not limited to
discreteness, whereas the optimal-design results are.

Relaxing this discreteness is the natural next step. Estimation already carries
over to continuous $S$, since the enrichment function $g(S)=E[\phi_F\mid S]$ can be
consistently fit by any data-adaptive regression rather than by stratum means. What remains open is the optimal design, in which the finite allocations $q(S=k)$ become an inclusion-probability function $q(\cdot)$ and the efficient choice is a continuous analogue of Neyman allocation. 
The same gain decomposition also enables a constructive extension within reach of our framework: in adaptive two-phase designs, an initial phase-2 sample can be used to estimate $E[\phi_F\mid S]$ and steer subsequent sampling toward the strata that contribute most to the efficiency gain.

\section*{Supplementary Materials}
Supplementary material is available online.

\section*{Acknowledgements}
Kazuharu Harada is partially supported by JSPS KAKENHI Grant Number 25K21165.
Masataka Taguri is partially supported by JSPS KAKENHI Grant Number 24K14862.

\section*{Declaration of Generative AI and AI-assisted technologies in the writing process}
During the preparation of this work, the authors used ChatGPT (OpenAI Inc.) in order to improve English writing. After using this tool/service, the authors reviewed and edited the content as needed and take full responsibility for the content of the publication.

\clearpage
\bibliographystyle{unsrtnat}
\bibliography{refs}

\clearpage

\setcounter{page}{1}
\renewcommand{\thepage}{A\arabic{page}} 
\setcounter{page}{1} 
\renewcommand{\thesection}{Appendix \Alph{section}}

\begin{appendices}

\section*{Supplementary Materials for ``Design and Analysis Considerations for Two-Phase Observational Studies''}
\noindent Kazuharu Harada and Masataka Taguri

\section{Proofs and Derivations} \label{app:proofs}
\subsection{Proof of Propositions 1 and 2}
The observed-data likelihood consists of four components determined by the combination of $(A,\delta)$. That is, for $O_{\mathrm{ODS}}=o$,
\begin{align*}
    (A,\delta) = (1,1):&~L_{11}(o) = q(1,v,y)\pi(x)p_X(x)p_1(y\mid x)\\
    (A,\delta) = (1,0):&~L_{10}(o) = \{1-q(1,v,y)\}p_S(A=1,v,y)\\
    (A,\delta) = (0,1):&~L_{01}(o) = q(0,v,y)\{1-\pi(x)\}p_X(x)p_0(y\mid x)\\
    (A,\delta) = (0,0):&~L_{00}(o) = \{1-q(0,v,y)\}p_S(A=0,v,y)
\end{align*}
 where $p_X$ is the marginal density of $X$, and $p_a(y\mid x)$ is the conditional density of $Y^a$ given $X$. Hence, the score with respect to a regular one-dimensional parametric submodel is given by
 \begin{align*}
     U_\varepsilon(O) 
    =&~ A\delta \partial_\varepsilon\log L_{11}(Z;\varepsilon) 
    + A(1-\delta) \partial_\varepsilon\log L_{10}(O;\varepsilon) \\
    &~~~~~ + (1-A)\delta \partial_\varepsilon\log L_{01}(Z;\varepsilon) 
    + (1-A)(1-\delta) \partial_\varepsilon\log L_{00}(Z;\varepsilon).
 \end{align*}
 Since $q$ is known, each component can be rewritten as
 \begin{align*}
    \partial_\varepsilon \log L_{11}(o) =&~ \partial_\varepsilon \log p(A=1,v,w,y;\varepsilon)\\
    \partial_\varepsilon \log L_{10}(o) =&~ \partial_\varepsilon \log p_S(A=1,v,y;\varepsilon)\\
    \partial_\varepsilon \log L_{01}(o) =&~ \partial_\varepsilon \log p(A=0,v,w,y;\varepsilon)\\
    \partial_\varepsilon \log L_{00}(o) =&~ \partial_\varepsilon \log p_S(A=0,v,y;\varepsilon),
\end{align*}
where $p(A=a,v,w,y;\varepsilon) = e_a(x)p_X(x)p_a(y\mid x)$.

Here, $\partial_\varepsilon\log p_S(A,V,Y;\varepsilon)$ can be rewritten as
\begin{lemma}\label{lem:score}
    \begin{gather*}
        \partial_\varepsilon\log p_S(A,V,Y;\varepsilon) = E[\partial_\varepsilon\log p(A,V,W,Y;\varepsilon)\mid A,V,Y]
    \end{gather*}
\end{lemma}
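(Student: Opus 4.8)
The plan is to recognize this as the standard identity that the score of a marginal likelihood equals the conditional expectation of the complete-data score given the retained variables. The only variable dropped in passing from $O_1=(A,V,W,Y)$ to $S=(A,V,Y)$ is $W$, so the marginal density of $S$ is obtained by integrating the full density over $W$:
\[
    p_S(a,v,y;\varepsilon) = \int p(a,v,w,y;\varepsilon)\,dw.
\]
Because $A$, $V$, and $Y$ are all retained in the conditioning set, no summation or integration over the (possibly discrete) variables $A,Y$ is needed, which keeps the argument purely about integrating out $W$.

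First I would differentiate both sides with respect to $\varepsilon$, interchange the derivative with the integral, and use $\partial_\varepsilon p = p\,\partial_\varepsilon\log p$ to obtain
\[
    \partial_\varepsilon p_S(a,v,y;\varepsilon) = \int p(a,v,w,y;\varepsilon)\,\partial_\varepsilon\log p(a,v,w,y;\varepsilon)\,dw.
\]
Next I would divide through by $p_S(a,v,y;\varepsilon)$ and recognize the ratio $p(a,v,w,y;\varepsilon)/p_S(a,v,y;\varepsilon)$ as the conditional density $p(w\mid a,v,y;\varepsilon)$ of $W$ given $(A,V,Y)$. This gives
\[
    \partial_\varepsilon\log p_S(a,v,y;\varepsilon) = \int p(w\mid a,v,y;\varepsilon)\,\partial_\varepsilon\log p(a,v,w,y;\varepsilon)\,dw,
\]
and the right-hand side is exactly $E[\partial_\varepsilon\log p(A,V,W,Y;\varepsilon)\mid A,V,Y]$ evaluated at $(a,v,y)$, which is the claim.

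The only genuine obstacle is justifying the interchange of differentiation and integration in the first step; this is precisely where the regularity of the one-dimensional parametric submodel $P_\varepsilon$ enters (dominated convergence, or standard differentiability-under-the-integral-sign conditions), and it is exactly the regularity already invoked in Section 2 when defining scores and the tangent space. Everything else is algebraic. I would also remark that this is the ``missing-information'' relation underlying Proposition \ref{prop:tangent}: it is what allows the two non-sampled cases $L_{10}$ and $L_{00}$ to contribute the conditional-mean term $E[u_F(O_1)\mid S]$ to $u_{\text{obs}}$.
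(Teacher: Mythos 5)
Your proposal is correct and follows essentially the same argument as the paper's own proof: integrate out $W$ to get $p_S$, differentiate under the integral, use $\partial_\varepsilon p = p\,\partial_\varepsilon\log p$, and recognize the ratio $p/p_S$ as the conditional density of $W$ given $(A,V,Y)$. The only (immaterial) difference is that you carry general $\varepsilon$ through the conditional density while the paper evaluates the denominators at $\varepsilon=0$, and you make the differentiation-under-the-integral regularity condition explicit where the paper leaves it implicit.
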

\begin{proof}
    \begin{align*}
        \partial_\varepsilon\log p_S(A,V,Y;\varepsilon)
            =&~ \frac{\int\partial_\varepsilon  p(A,V,w,Y;\varepsilon)dw}{p_S(A,V,Y;\varepsilon=0)} \\
            =&~ \int\frac{\partial_\varepsilon  p(A,V,w,Y;\varepsilon)}{p(A,V,w,Y;\varepsilon=0)}\cdot\frac{p(A,V,w,Y;\varepsilon=0)}{p_S(A,V,Y;\varepsilon=0)}dw \\
            =&~ \int\partial_\varepsilon\log p(A,V,w,Y;\varepsilon)p(w\mid A,V,Y;\varepsilon=0)dw \\
            =&~ E[\partial_\varepsilon \log p(A,V,W,Y;\varepsilon)\mid A,V,Y].
    \end{align*}
\end{proof}

Then, the observed-data score can be expressed as
\begin{align*}
    U_\varepsilon(O) 
    =&~ \partial_\varepsilon \log p_S(A,V,Y;\varepsilon) \\
    &~~~+ \delta \left\{ \partial_\varepsilon \log p(A,V,W,Y;\varepsilon) - E[\partial_\varepsilon \log p(A,V,W,Y;\varepsilon)\mid A,V,Y]\right\}.
 \end{align*}

Since $\partial_\varepsilon \log p(A,V,W,Y;\varepsilon)$ is the score of a one-dimensional parametric submodel for the fully observed data, it follows from \citet{Hahn1998-lj} that,
\begin{align*}
    \partial_\varepsilon \log p(a,v,w,y;\varepsilon)
        =&~ a u_1(y\mid x;\varepsilon) + (1-a)u_0(y\mid x;\varepsilon) \\
        &~~~ + \frac{\dot \pi(x;\varepsilon)}{\pi(x;\varepsilon)\{1-\pi(x;\varepsilon)\}}(a-\pi(x;\varepsilon)) + u_X(x;\varepsilon),
\end{align*}
where
\begin{align*}
    u_1(y\mid x;\varepsilon) =&~ \partial_\varepsilon \log p_1(y\mid x;\varepsilon) \\
    u_0(y\mid x;\varepsilon) =&~ \partial_\varepsilon \log p_0(y\mid x;\varepsilon) \\
    \dot \pi(x;\varepsilon) =&~ \partial_\varepsilon \pi(x;\varepsilon) \\
    u_X(x;\varepsilon) =&~ \partial_\varepsilon \log p_X(x;\varepsilon).
\end{align*}

Consequently, by considering the closure of this submodel, the tangent space of the semiparametric model for the observed data is given by Proposition \ref{prop:tangent}.

Under the same one-dimensional parametric submodel, the WATE is given by
\begin{align*}
    \tau_h(\varepsilon) 
        =&~ C_{h,\varepsilon}^{-1}E_\varepsilon[h\{\pi(X;\varepsilon)\}(Y^1 - Y^0)]\\
        =&~ \left\{\int h\{\pi(x;\varepsilon)\} ~ p_X(x;\varepsilon) dx\right\}^{-1}\iint h\{\pi(x;\varepsilon)\}y ~ p_1(y\mid x; \varepsilon)p_X(x;\varepsilon) dydx \\
        &~ - \left\{\int h\{\pi(x;\varepsilon)\} ~ p_X(x;\varepsilon) dx\right\}^{-1}\iint h\{\pi(x;\varepsilon)\}y ~ p_0(y\mid x; \varepsilon)p_X(x;\varepsilon) dydx
\end{align*}
Differentiating this expression at  $\varepsilon = 0$ yields the pathwise derivative.

\begin{align*}
    \partial_\varepsilon\tau_h(\varepsilon)
        =&~ C_h^{-1}\left\{\partial_\varepsilon E_\varepsilon[h\{\pi(X;\varepsilon)\}(Y^1 - Y^0)] - \partial C_{h,\varepsilon}\tau_h\right\}
\end{align*}

\begin{align*}
    \partial_\varepsilon C_{h,\varepsilon} 
        =&~ \partial_\varepsilon\int h\{\pi(x;\varepsilon)\} ~ p_X(x;\varepsilon) dx \\
        =&~ \int h'(\pi(x))\dot \pi(x)p_X(x) dx + \int h\{\pi(x)\} ~ u_X(x)p_X(x) dx
\end{align*}

\begin{align*}
    \partial_\varepsilon E_\varepsilon[h\{\pi(X;\varepsilon)\}Y^a] 
    =&~ \partial_\varepsilon\iint h\{\pi(x;\varepsilon)\}y ~ p_a(y\mid x; \varepsilon)p_X(x;\varepsilon) dydx \\
    =&~ \iint h'(\pi(x))\dot \pi(x) y ~ p_a(y\mid x)p_X(x) dydx \\ 
        &~ + \iint h\{\pi(x)\}y ~ u_a(y\mid x)p_a(y\mid x)p_X(x) dydx \\
        &~ + \iint h\{\pi(x)\}y ~ u_X(x)p_a(y\mid x)p_X(x) dydx \\
    =&~ \iint h'(\pi(x))\dot \pi(x) \mu_a(x) p_X(x) dx \\ 
        &~ + \iint h\{\pi(x)\}\{y - \mu_a(x)\}u_a(y\mid x)p_a(y\mid x)p_X(x) dydx \\
        &~ + \iint h\{\pi(x)\}\mu_a(x)u_X(x)p_X(x) dydx
\end{align*}
Note that we added a redundant term $\iint h\{\pi(x)\}\mu_a(x)u_a(y\mid x)p_a(y\mid x)p_X(x) dydx = 0$ to the second term.

Therefore, the pathwise differential of WATE at $\varepsilon=0$ is given by
\begin{align*}
    \partial_\varepsilon\tau_h(\varepsilon)
        =&~ C_h^{-1}\left\{\partial_\varepsilon E_\varepsilon[h\{\pi(X;\varepsilon)\}(Y^1 - Y^0)] - \partial_\varepsilon C_{h,\varepsilon}\tau_h\right\} \\
    =&~ C_h^{-1}\iint h'(\pi(x))\dot \pi(x) \{\tau(x) - \tau_h\} p_X(x) dx \\ 
        &~ + C_h^{-1}\int h\{\pi(x)\}\{y - \mu_1(x)\}u_1(y\mid x)p_1(y\mid x)p_X(x) dydx \\
        &~ - C_h^{-1}\int h\{\pi(x)\}\{y - \mu_0(x)\}u_0(y\mid x)p_0(y\mid x)p_X(x) dydx \\
        &~ + C_h^{-1}\iint h\{\pi(x)\}\{\tau(x) - \tau_h\}u_X(x)p_X(x) dydx.
\end{align*}

Let
\begin{align*}
    \phi^{\mathrm{eff}}(O; \tau_h, \eta) 
        =&~ E[\phi_{F}(Z; \tau_h, \eta)\mid S] 
        + \delta\left\{
            \frac{\phi_{F}(Z; \tau_h, \eta)}{q(S)} - \frac{E[\phi_{F}(Z; \tau_h, \eta)\mid S]}{q(S)}
        \right\}\\
    \phi_{F}(Z; \tau_h, \eta)
        =&~ C_h^{-1}h\{\pi_1(X)\}\left\{\frac{\mathbf{1}^A_{1}}{\pi_1(X)}(Y - \mu_{1}(X)) - \frac{\mathbf{1}^A_{0}}{\pi_0(X)}(Y - \mu_{0}(X)) + \tau(X) - \tau_h\right\}  \\
    &~~~ + C_h^{-1}h'(\pi_1(X))(\tau(X) - \tau_h)(A-\pi_1(X)),
\end{align*}
$\phi^{\mathrm{eff}}(O; \tau_h, \eta)$ is clearly included in the space spanned by $u(O)$. Finally, we confirm that $\phi^{\mathrm{eff}}(O; \tau_h, \eta)$ is the EIF for WATE, namely, 
\begin{gather*}
    \partial_\varepsilon\tau_h(\varepsilon) = E[\phi^{\mathrm{eff}}(O; \tau_h, \eta)u(O)].
\end{gather*}

\begin{align*}
    &~ E[\phi^{\mathrm{eff}}(O; \tau_h, \eta)u(O)] \\
        =&~ E\left[\left\{
            E[\phi_{F}\mid S] 
        + \delta\left(
            \frac{\phi_{F}}{q(S)} - \frac{E[\phi_{F}\mid S]}{q(S)}
        \right)\right\}\left\{
            u_{S} + \delta \left(u_{F} - E[u_{F}\mid S]\right)
        \right\}
        \right] \\
        =&~ E[u_{S}E[\phi_{F}\mid S]]
            + \underbrace{E[E[\phi_{F}\mid S]\left(u_{F} - E[u_{F}\mid S]\right)]}_{=0} \\
            &~~ + \underbrace{E\left[\frac{\delta}{q(S)}\left(
            \phi_{F} - E[\phi_{F}\mid S]
        \right)u_{S}\right]}_{=0} + E\left[\frac{\delta}{q(S)}\left(
            \phi_{F} - E[\phi_{F}\mid S]
        \right)\left(u_{F} - E[u_{F}\mid S]\right)\right] \\
        =&~ E[\phi_{F} u_{F}] + E[(u_{S} - E[u_{F}\mid S])E[\phi_{F}\mid S]].
\end{align*}

Since $E[\phi_{F} u_{F}]$ is the inner product between the score function and the EIF under the fully observed data model, it coincides with the pathwise derivative of the WATE. 
Moreover, recalling that $u_F,u_S$ are constructed as score functions of regular parametric submodels, Lemma \ref{lem:score} implies $u_S = E[u_F\mid S]$, and therefore the second term $E[(u_{S} - E[u_{F}\mid S])E[\phi_{F}\mid S]]$ is equal to zero. 
Combining these results shows that $\partial_\varepsilon\tau_h(\varepsilon) = E[\phi^{\mathrm{eff}}(O; \tau_h, \eta)u(O)]$.

The semiparametric efficiency bound of the target parameter is given by the variance of the EIF as
\begin{align*}
    &~\text{Var}[\phi^{\mathrm{eff}}(O; \tau_h, \eta)]\\
        =&~ E\left[\left\{
            \frac{\delta}{q(S)}\phi_{F}(Z;\tau_h, \eta) 
            + \left(1 - \frac{\delta}{q(S)}\right)E[\phi_{F}(Z;\tau_h, \eta)\mid S]
        \right\}^2\right] \\
        =&~ E\left[
            \frac{1}{q(S)}\phi_{F}(Z;\tau_h,\eta)^2 
            + \left\{q(S)\left(1 - \frac{1}{q(S)}\right)^2 + 2\left(1 - \frac{1}{q(S)}\right) + 1 - q(S)\right\}E[\phi_{F}(Z;\tau_h,\eta)\mid S]^2
        \right]\\
        =&~ E\left[
            \frac{1}{q(S)}\phi_{F}(Z;\tau_h,\eta)^2 
            + \left(1 - \frac{1}{q(S)}\right)E[\phi_{F}(Z;\tau_h,\eta)\mid S]^2
        \right]\\
        =&~ \text{Var}(\phi_{F}(Z;\tau_h,\eta)) + E\left[\left(\frac{1}{q(S)} - 1\right)\text{Var}(\phi_{F}(Z;\tau_h,\eta)\mid S)\right],
\end{align*}
where 
\begin{align*}
    \text{Var}(\phi_{F}(Z;\tau_h,\eta)) 
        =&~ C_h^{-2}E\left[
            h^2\{\pi(X)\}\left\{\frac{\sigma_1^2(X)}{\pi(X)} + \frac{\sigma_0^2(X)}{1-\pi(X)}\right\}\right] \\
            &~~~ + C_h^{-2}E[h^2\{\pi(X)\}\{\tau(X)-\tau_h\}^2] \\
            &~~~+ C_h^{-2}E[h'(\pi(X))^2\pi(X)\{1-\pi(X)\}\{\tau(X)-\tau_h\}^2],
\end{align*}
and $\text{Var}(\phi_{F}(Z;\tau_h,\eta) \mid S)$ takes the variance with respect to $(W,Y)$ for $S = (A,V)$, and to $W$ for $S = (A,V,Y)$.

\subsection{Proof of Propositions 3 and 4}
The proofs of Propositions 3 and 4 follow the standard theory of M-estimation \citep[e.g.,][]{Van_der_Vaart2000-bl, Stefanski2002-wx}.

By simplifying the estimating equations in Definitions \ref{def:ipsw} and \ref{def:enr}, we write them as
\begin{gather*}
    E_n\left[\begin{array}{c}
        \psi_{\theta,\eta}(O;\hat\theta,\hat\eta) \\
        \psi_{\eta}(O;\hat\eta)
    \end{array}
    \right] = 0
\end{gather*}
where $\theta$ denotes the parameter of interest and $\eta$ denotes modeled finite-dimensional nuisance parameters.

Let $\theta^*, \eta^*$ be the unique solutions to
\begin{gather*}\label{eq:esteq_unbiasedness}
    E\left[\begin{array}{c}
        \psi_{\theta,\eta}(O;\theta^*,\eta^*) \\
        \psi_{\eta}(O;\eta^*)
    \end{array}
    \right] = 0
\end{gather*}
and assume that $\hat\eta \overset{p}{\rightarrow}\eta^*$ and $\hat\theta \overset{p}{\rightarrow}\theta^*$ hold regardless of whether the working models are correctly specified.

As shown in the main text, when the fifth estimating equation of the enriched estimator is correctly specified, if either the propensity score is correctly modeled, or the weighting function $h$ is linear in the propensity score and either the propensity score or the outcome regression for $Y^1, Y^0$ is correctly specified, then $\theta^*=\theta$ holds.

Under consistency, if $\psi_{\theta,\eta}$ and $\psi_{\eta}$ are sufficiently smooth and the interchange of differentiation and integration is valid, a Taylor expansion exists in a neighborhood of $(\theta^*, \eta^*)$ as
\begin{align*}
    0 &= E_n\left[\begin{array}{c}
        \psi_{\theta,\eta}(O;\hat\theta,\hat\eta) \\
        \psi_{\eta}(O;\hat\eta)
    \end{array}
    \right] \\
    &= E_n\left[\begin{array}{c}
        \psi_{\theta,\eta}(O;\theta^*,\eta^*) \\
        \psi_{\eta}(O;\eta^*)
    \end{array}\right] + E_n\left[\begin{array}{cc}
        \partial_{\theta^\top}\psi_{\theta,\eta}|_{\theta^*,\eta^*} & \partial_{\eta^\top}\psi_{\theta,\eta}|_{\theta^*,\eta^*} \\
        \partial_{\theta^\top}\psi_{\eta}|_{\eta^*} & \partial_{\eta^\top}\psi_{\eta}|_{\eta^*}
    \end{array}\right]\left(\begin{array}{c}
        \hat\theta - \theta^* \\
        \hat\eta - \eta^*
    \end{array}\right) + o\left(\left\|\begin{array}{c}
        \hat\theta-\theta^*  \\
        \hat\eta-\eta^*
    \end{array}\right\|\right).
\end{align*}
Here, if the matrix
\begin{gather*}
    \mathbf{J}_n 
    = E_n\left[\begin{array}{cc}
        \partial_{\theta^\top}\psi_{\theta,\eta}|_{\theta^*,\eta^*} & \partial_{\eta^\top}\psi_{\theta,\eta}|_{\theta^*,\eta^*} \\
        \partial_{\theta^\top}\psi_{\eta}|_{\eta^*} & \partial_{\eta^\top}\psi_{\eta}|_{\eta^*}
    \end{array}\right]
\end{gather*}
is nonsingular, we obtain
\begin{gather*}
    \sqrt{n}\left(\begin{array}{c}
        \hat\theta - \theta^* \\
        \hat\eta - \eta^*
    \end{array}\right) 
        = -\mathbf{J}_n^{-1}\cdot \sqrt{n}E_n\left[\begin{array}{c}
        \psi_{\theta,\eta}(O;\theta^*,\eta^*) \\
        \psi_{\eta}(O;\eta^*)
    \end{array}\right] + o\left(\mathbf{J}_n^{-1}\sqrt{n}\left\|\begin{array}{c}
        \hat\theta-\theta^*  \\
        \hat\eta-\eta^*
    \end{array}\right\|\right).
\end{gather*}
For both the IPSW and enriched estimators, when $m_{F}$ is taken to be of IPTW-type or DR-type, the nonsingularity of $J_{11}$ is readily verified. Under the modeling assumptions, the diagonal blocks associated with $\eta$ are also nonsingular. Together with $J_{21}=0$, this yields an (asymptotically) block upper-triangular Jacobian, so that the nonsingularity of $J_{11}$ is sufficient to ensure that both $\mathbf{J}_n$ and its limit $\mathbf{J}$ are themselves nonsingular.

Taking norms of both sides, the first term on the right-hand side is $O_p(1)$ by Slutsky’s theorem and the central limit theorem, and if we assume $\|\mathbf{J}_n\| = O_p(1)$, then
\begin{gather*}
    \left\|\sqrt{n}\left(\begin{array}{c}
        \hat\theta - \theta^* \\
        \hat\eta - \eta^*
    \end{array}\right) \right\|
        \le \|\mathbf{J}^{-1}_n\|\left\{O_p(1) + o\left(\sqrt{n}\left\|\begin{array}{c}
        \hat\theta-\theta^*  \\
        \hat\eta-\eta^*
    \end{array}\right\|\right)\right\} = O_p(1).
\end{gather*}
Hence, again by Slutsky’s theorem, we obtain the following asymptotically linear representation:
\begin{gather*}
    \sqrt{n}\left(\begin{array}{c}
        \hat\theta - \theta^* \\
        \hat\eta - \eta^*
    \end{array}\right) 
        = -\sqrt{n}\mathbf{J}^{-1}E_n\left[\begin{array}{c}
        \psi_{\theta,\eta}(O;\theta^*,\eta^*) \\
        \psi_{\eta}(O;\eta^*)
    \end{array}\right] + o_p\left(1\right).
\end{gather*}
Therefore, by the central limit theorem,
\begin{gather*}
    \sqrt{n}\left(\begin{array}{c}
        \hat\theta - \theta^* \\
        \hat\eta - \eta^*
    \end{array}\right) 
        \overset{d}{\rightarrow} N(0,\mathbf{J}^{-1}\mathbf{K}\mathbf{J}^{-1}),\\
        \text{where}~~~\mathbf{K} = E\left[\begin{array}{cc}
         \psi_{\theta,\eta}\psi_{\theta,\eta}^\top & \psi_{\theta,\eta}\psi^\top_{\eta}\\
        \psi_{\eta}\psi_{\theta,\eta}^\top & \psi_{\eta}\psi_{\eta}^\top
    \end{array}\right],
\end{gather*}
which establishes the asymptotic distribution in Propositions 3 and 4. 
The corresponding statements for the case in which $m_F$ is specified explicitly and for the correctly specified working model follow by direct substitution into the above expansion and routine algebra.

\subsection{Proof of Proposition 5}
We first derive the solution to a general optimization problem, of which the optimization problem in this paper is a special case.
\begin{align*}
    \min_{\bs\theta = (\theta_1,\ldots,\theta_K)} &~ \sum_{k=1}^K{w_k}\cdot\frac{c_k}{\theta_k} \\
    \text{subject to} &~ \sum_{k=1}^K w_k\theta_k \le d, 0 < \theta_k \le 1~~\text{for all}~k. \\
\end{align*}
where $w_k > 0$ are weights summing to one, and $c_k > 0$ and $d > 0$ are constants.

Since the constraint $0 < \theta_k \le 1$ can be verified a posteriori, we introduce the Lagrange multiplier $\lambda$, and the Lagrangian is given by
\begin{gather*}
    \mathcal{L}(\bs\theta) = \sum_{k=1}^K{w_k}\cdot\frac{c_k}{\theta_k} 
        + \lambda\left( \sum_{k=1}^K w_k\theta_k - d \right)
\end{gather*}
The first-order condition for each $k$ is
\begin{gather*}
    -{w_k}\cdot\frac{c_k}{\theta_k^2} + \lambda w_k = 0,
\end{gather*}
which implies
\begin{gather*}
    \theta_k = \sqrt{\frac{c_k}{\lambda}}.
\end{gather*}
Substituting this into the constraint yields
\begin{gather*}
    \sum_{k=1}^K w_k\sqrt{\frac{c_k}{\lambda}} = d \\
    \sqrt{\lambda} = d^{-1}\sum_{k=1}^K w_k\sqrt{c_k}.
\end{gather*}
Hence, the solution is
\begin{gather*}
    \theta_k = \frac{d\sqrt{c_k}}{\sum_{j=1}^K w_j\sqrt{c_j}}.
\end{gather*}
which is clearly an interior solution. 
In order for $\theta_k \le 1$ to hold for all $k$, it is necessary that
\begin{gather*}
    \max_k\frac{d\sqrt{c_k}}{\sum_{j=1}^K w_j\sqrt{c_j}} \le 1 \\
    d \le \frac{\sum_{j=1}^K w_j\sqrt{c_j}}{\max_k\sqrt{c_k}}.
\end{gather*}
In this paper, $d$ corresponds to $P(\delta=1)$, and throughout the subsequent discussion, we assume that the overall sampling fraction is sufficiently small so that this inequality is always satisfied.

The optimal value of the objective function is
\begin{gather*}
    \sum_{k=1}^K{w_k}c_k\cdot\frac{\sum_{j=1}^K w_j\sqrt{c_j}}{d\sqrt{c_k}}
    = d^{-1}\left\{\sum_{k=1}^K{w_k}\sqrt{c_k}\right\}^2.
\end{gather*}

Proposition 5 can be readily verified by substituting the corresponding quantities into the above optimization problem.

\subsection{Proof of Proposition \ref{prop:DML}}
In this section, following the basic framework of \citet{Chernozhukov2018-cc}, we prove Proposition \ref{prop:DML} by invoking the results of \citet{Yiming2025-do}.

\subsubsection{Key Assumptions}
We first restate Assumptions 3.1 and 3.2 of \citet{Chernozhukov2018-cc} to fit the problem setting considered in this study. These can also be regarded as observed-data versions of Assumptions B.1.3 and B.1.4 of \citet{Yiming2025-do}. 

Let $C_1\ge C_0 > 0$ denote finite constants, and let $\{\delta_n\}_{n\ge 1}$ and $\{\Delta_n\}_{n\ge 1}$ be sequences of positive constants converging to zero with $\delta_n \ge n^{-1/2}$. 

\begin{assumption}[Assumption 3.1 in \citet{Chernozhukov2018-cc}, Assumption B.1.3 in \citet{Yiming2025-do}] \label{assum:Cherno31}
    For $n\ge 3$, the following assumptions hold.
    \begin{enumerate}
        \item[(a)] The parameter $\tau_h$ obeys $E[m(O;\tau_h,\eta)] = 0$ .
        \item[(b)] The score function is linear in the form of $m(O;\tau_h,\eta) = \tau_hD(O;\eta) - N(O;\eta)$.
        \item[(c)] The mapping $\eta \mapsto E[m(O;\tau_h,\eta)]$ is twice continuously Gateaux-differentiable.
        \item[(d)] The estimating function obeys the (near) Neyman orthogonality at the truth $(\tau_h,\eta)$ with respect to the nuisance realization set $\mathcal{T}_n\subset\mathcal{T}$, where $\mathcal{T}$ is a convex set, for 
        \begin{gather*}
            \lambda_n := \sup_{\tilde\eta\in\mathcal{T}_n} \left|
                \partial_\eta E[m(O;\tau_h,\eta)][\tilde\eta-\eta]
            \right|\le\delta_nn^{-1/2}.
        \end{gather*}
        \item [(e)] The identification condition holds in the sense that $C_0 \le J_0 := E[D(O;\eta)] \le C_1$.
    \end{enumerate}
\end{assumption}

\begin{assumption}[Assumption 3.2 in \citet{Chernozhukov2018-cc}, Assumption B.1.4 in \citet{Yiming2025-do}] \label{assum:Cherno32}
    For $n\ge 3$ and $q > 2$, the following conditions hold.
    \begin{enumerate}
        \item[(a)] Given a random subset $I$ of size $n/K$, the estimator of the nuisance parameter $\hat\eta := \hat\eta((O_{i})_{i\in I^c})$ belongs to the realization set $\mathcal{T}_n$ with probability at least $1 - \Delta_n$, where $\mathcal{T}_n$ contains $\eta$ and is constrained by the next conditions.
        \item[(b)] The moment conditions hold:
        \begin{align*}
            m_n :=&~ \sup_{\tilde\eta\in\mathcal{T}_n}\left(
                E[|m(O;\tau_h,\tilde\eta)|^q]
            \right)^{1/q} \le C_1, \\
            m'_n :=&~ \sup_{\tilde\eta\in\mathcal{T}_n}\left(
                E[|D(O;\tilde\eta)|^q]
            \right)^{1/q} \le C_1.
        \end{align*}
        \item[(c)] The following conditions on the statistical rates $r_n,r'_n$, and $\lambda'_n$ hold:
        \begin{align*}
            r_n :=&~ \sup_{\tilde\eta\in\mathcal{T}_n} \left|
                E[m(O;\tau_h,\tilde\eta)] - E[m(O;\tau_h,\eta)]
            \right| \le \delta_n, \\
            r'_n :=&~ \sup_{\tilde\eta\in\mathcal{T}_n} \left(E\left|
                m(O;\tau_h,\tilde\eta) - m(O;\tau_h,\eta)
            \right|^2\right)^{1/2} \le \delta_n,\\
            \lambda'_n :=&~ \sup_{r\in(0,1),\tilde\eta\in\mathcal{T}_n} \left|
                \partial_r^2E[m(O;\tau_h,\eta + r(\tilde\eta - \eta_0))]
            \right| \le \delta_n/\sqrt{n}.
        \end{align*}
        \item[(d)] The variance of the estimating function $m$ is non-degenerate: all eigenvalues of the matrix $E[m(O; \tau_h,\eta)^{\otimes 2}]$ are bounded from below by $C_0$.
    \end{enumerate}
\end{assumption}

\subsubsection{Proof of Proposition \ref{prop:DML}}
The problem setting considered in this study is an extension of \citet{Yiming2025-do} to two-phase sampling. We verify that Assumptions 3.1 and 3.2 hold, and thereby establish Proposition \ref{prop:DML}.

First, Assumption \ref{assum:Cherno31} (a), (b), and (c) are immediate from the definition.
Moreover, under the two-phase sampling assumption with known sampling probability $q(S)$, it holds for any $(\tilde\tau_h,\tilde\eta)$ that
\begin{gather*}
E\left[m(O;\tilde\tau_h,\tilde\eta)\right] = E\left[m_{F}(Z;\tilde\tau_h,\tilde\eta)\right].
\end{gather*}
Therefore, the rate conditions for $\lambda_n$ in Assumption \ref{assum:Cherno31}(d) and for $r_n$ and $\lambda'_n$ in Assumption \ref{assum:Cherno32}(c) reduce to the same conditions as in \citet{Yiming2025-do}, and are thus assumed to hold; the remaining $L_2$-type rate $r'_n$ in Assumption \ref{assum:Cherno32}(c) is verified separately below.
Assumption \ref{assum:Cherno31}(e) likewise reduces to the corresponding full-data condition, since $E[D(O;\tilde\eta)] = E[D_{F}(Z;\tilde\eta)]$ holds for any $\tilde\eta$.

Next, we show Assumption \ref{assum:Cherno32} (a) and (b).
To verify Assumption \ref{assum:Cherno32}(a), let $\mathcal{T}_n^F$ denote the full-data realization set for the nuisance functions considered in \citet{Yiming2025-do}, and define the observed-data realization set $\mathcal{T}_n$ as the collection of induced nuisance functions $(\pi,\mu_0,\mu_1,g_N,g_D)$ with $(\pi,\mu_0,\mu_1)\in\mathcal{T}_n^F$. Since $g_N(S)=E\{N_F(Z;\eta)\mid S\}$ and $g_D(S)=E\{D_F(Z;\eta)\mid S\}$ are deterministic functionals of the full-data nuisance functions, the cross-fitted observed-data nuisance estimator belongs to $\mathcal{T}_n$ with probability at least $1-\Delta_n$ whenever the corresponding full-data nuisance estimator belongs to $\mathcal{T}_n^F$ with probability at least $1-\Delta_n$.

To verify Assumption \ref{assum:Cherno32}(b), note that
\begin{gather*}
    m = \frac{\delta}{q(S)}m_{F}+\left(1-\frac{\delta}{q(S)}\right)g_m(S),
    \qquad
    D = \frac{\delta}{q(S)}D_F+\left(1-\frac{\delta}{q(S)}\right)g_D(S),
\end{gather*}
where $g_m(S)=E[m_{F}\mid S]$. Under the positivity condition $q(S)\ge q_0>0$, Minkowski's inequality and Jensen's inequality yield
\begin{gather*}
\|m\|_q \le 2q_0^{-1}\|m_{F}\|_q,
\qquad
\|D\|_q \le 2q_0^{-1}\|D_F\|_q.    
\end{gather*}
Hence the uniform $L_q$ moment bounds for $m$ and $D$ follow directly from the corresponding moment bounds for $m_{F}$ and $D_F$ established in \citet{Yiming2025-do}.

To verify the $r_n'$ condition in Assumption \ref{assum:Cherno32}(c), define
\begin{gather*}
    \Delta_F(\hat\eta) := m_{F}(Z;\tau_h,\hat\eta)-m_{F}(Z;\tau_h,\eta),
    \qquad
    \Delta_g(\hat\eta) := E\{\Delta_F(\hat\eta)\mid S\}.
\end{gather*}
Then
\begin{gather*}
    m(O;\tau_h,\hat\eta)-m(O;\tau_h,\eta) 
    = \frac{\delta}{q(S)}\Delta_F(\hat\eta) + \left(1-\frac{\delta}{q(S)}\right)\Delta_g(\hat\eta).
\end{gather*}
Under the positivity condition $q(S)\ge q_0>0$, we have
\begin{gather*}
    E\left[\left\{
    m(O;\tau_h,\hat\eta)-m(O;\tau_h,\eta)
    \right\}^2\right] 
    \le
    \frac{4}{q_0^2}E\left[\left\{
        m_{F}(Z;\tau_h,\hat\eta)-m_{F}(Z;\tau_h,\eta)
    \right\}^2\right],
\end{gather*}
where we used Jensen's inequality to bound
\begin{gather*}
    E\left|E\{\Delta_F(\hat\eta)\mid S\}\right|^2 
    \le E|\Delta_F(\hat\eta)|^2.
\end{gather*}
Therefore,
\begin{gather*}
    r_n^{\prime} \le \frac{2}{q_0} r_n^{\prime F},
\end{gather*}
so the required $L_2$ rate for the observed-data score follows directly from the corresponding rate for the full-data score established in \citet{Yiming2025-do}.

Finally, we verify the Assumption \ref{assum:Cherno32}(d).
Let $g_m(S;\eta):=E[m_{F}(Z;\tau_h,\eta)\mid S]$.
By the definition of $m$,
\begin{align*}
m(O;\tau_h,\eta)
    =&~ \tau_h D(O;\eta)-N(O;\eta) \\
    =&~ \frac{\delta}{q(S)}m_{F}(Z;\tau_h,\eta) 
        + \left(1-\frac{\delta}{q(S)}\right)g_m(S;\eta) \\
    =&~ g_m(S;\eta) + \frac{\delta}{q(S)}\{m_{F}(Z;\tau_h,\eta)-g_m(S;\eta)\}.
\end{align*}
Hence,
\begin{gather*}
    E[m(O;\tau_h,\eta)\mid S]=g_m(S;\eta),
\end{gather*}
and
\begin{gather*}
    \text{Var}\{m(O;\tau_h,\eta)\mid S\}
    = \frac{1}{q(S)}\text{Var}\{m_{F}(Z;\tau_h,\eta)\mid S\}.
\end{gather*}
Therefore, by the law of total variance,
\begin{align*}
\text{Var}\{m(O;\tau_h,\eta)\}
    =&~ \text{Var}\!\left[E\{m(O;\tau_h,\eta)\mid S\}\right]
    +E\!\left[\text{Var}\{m(O;\tau_h,\eta)\mid S\}\right] \\
=&~ \text{Var}\{g_m(S;\eta)\}
+E\!\left[\frac{1}{q(S)}\text{Var}\{m_{F}(Z;\tau_h,\eta)\mid S\}\right].
\end{align*}
Since $0<q(S)\le 1$ almost surely under two-phase sampling,
\begin{gather*}
    \frac{1}{q(S)}\ge 1,
\end{gather*}
and thus
\begin{align*}
\text{Var}\{m(O;\tau_h,\eta)\}
    \ge&~ \text{Var}\{g_m(S;\eta)\} + E\!\left[\text{Var}\{m_{F}(Z;\tau_h,\eta)\mid S\}\right] \\
    =&~ \text{Var}\{m_{F}(Z;\tau_h,\eta)\}.
\end{align*}
In particular, at the truth $\eta$,
\begin{gather*}
    E\!\left[m(O;\tau_h,\eta)^{\otimes 2}\right]
    = \text{Var}\{m(O;\tau_h,\eta)\}
    \ge \text{Var}\{m_{F}(Z;\tau_h,\eta)\}
    = E\!\left[m_{F}(Z;\tau_h,\eta)^{\otimes 2}\right].
\end{gather*}
Therefore, if there exists a constant $C_0>0$ such that
\begin{gather*}
    E\!\left[m_{F}(Z;\tau_h,\eta)^{\otimes 2}\right]\ge C_0,
\end{gather*}
which is exactly the full-data condition verified in \citet{Yiming2025-do}, then
\begin{gather*}
    E\!\left[m(O;\tau_h,\eta)^{\otimes 2}\right]\ge C_0.
\end{gather*}
This verifies Assumption \ref{assum:Cherno32}(d).

The verification of Assumptions \ref{assum:Cherno31} and \ref{assum:Cherno32} is now complete. In particular, the observed-data estimating function under two-phase sampling satisfies the same first-order regularity conditions as the fully-observed-data estimating function considered in \citet{Yiming2025-do}. Therefore, the general DML arguments of \citet{Chernozhukov2018-cc}, together with the fully-observed-data result of \citet{Yiming2025-do}, apply to the present observed-data setting. This completes the proof of Proposition \ref{prop:DML}.

\subsection{Optimal sampling probability for the IPSW estimator}
The objective function for the IPSW estimator is
\begin{gather*}
    \sum_{k=1}^K p_kq_k^{-1}\left\{
        \text{Var}(\tilde\phi_{F}(Z;\tau_h,\eta)\mid S=k) 
        + E[\tilde\phi_{F}(Z;\tau_h,\eta)\mid S=k]^2
    \right\},
\end{gather*}
where $\tilde\phi_{F}$ is an influence function for the WATE under fully observed data.
The optimal solution therefore depends not only on the conditional variance of the influence function within each stratum, but also on the variability of its conditional mean.

\begin{proposition}
    The optimal sampling probability solving \eqref{eq:optimA} is given by
    \begin{gather*}
        q_k = \frac{{\bar q}\sqrt{\sigma^2_k + \xi^2_k}}{\sum_{k=1}^K p_k\sqrt{\sigma^2_k + \xi^2_k}},
    \end{gather*}
    where $\sigma_k^2 = \text{Var}(\tilde\phi_F(Z;\tau_h,\eta)\mid S=k)$ and $\xi_k = E[\tilde\phi_F(Z;\tau_h,\eta)\mid S=k]$. The corresponding optimal value is
    \begin{gather*}
        {\bar q}^{-1}\left\{\sum_{k=1}^Kp_k\sqrt{\sigma^2_k + \xi^2_k}\right\}.
    \end{gather*}
\end{proposition}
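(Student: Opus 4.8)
The plan is to reduce this proposition to the general constrained minimization already solved in the proof of Proposition 5, the only genuinely new ingredient being the correct identification of the objective function. First I would verify that the asymptotic variance of the IPSW estimator, viewed as a function of the design $(q_1,\ldots,q_K)$, is exactly $\sum_{k=1}^K p_k q_k^{-1}(\sigma_k^2 + \xi_k^2)$ with no additive term that is independent of $q$. This is where the contrast with Proposition 5 lives. Writing the IPSW influence function in its purely multiplicative form $\varphi^{\text{ipsw}}_w = -(\delta/q(S))\tilde\varphi_F$, and using that $\delta$ depends on $O_1$ only through $S$ so that $E[\delta \mid O_1] = q(S)$ together with $\delta^2=\delta$, gives
\[
\text{Var}(\varphi^{\text{ipsw}}_w) = E\left[\frac{\delta}{q(S)^2}\tilde\varphi_F^2\right] = E\left[\frac{1}{q(S)}\tilde\varphi_F^2\right].
\]
Decomposing $E[\tilde\varphi_F^2 \mid S] = \text{Var}(\tilde\varphi_F \mid S) + E[\tilde\varphi_F \mid S]^2 = \sigma^2(S) + \xi^2(S)$ and stratifying over $S \in \{1,\ldots,K\}$ produces precisely the stated objective. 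The essential point is that, unlike the enriched/efficient estimator whose influence function subtracts $E[\varphi_F \mid S]$ and thereby leaves the $q$-independent remainder $\text{Var}(E[\varphi_F \mid S])$, the IPSW variance carries the conditional-mean contribution $\xi_k^2$ inside the $q_k^{-1}$ factor.

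Next I would invoke the generic Lagrangian computation already carried out in the proof of Proposition 5. With weights $w_k = p_k$, constants $c_k = \sigma_k^2 + \xi_k^2 > 0$, and budget $d = \bar q$, the problem is literally the template
\[
\min_{(\theta_1,\ldots,\theta_K)} \sum_k w_k \frac{c_k}{\theta_k}\quad\text{subject to}\quad \sum_k w_k \theta_k \le d,
\]
whose interior minimizer is $\theta_k = d\sqrt{c_k}/\sum_j w_j\sqrt{c_j}$. Substituting back yields the claimed $q_k = \bar q\sqrt{\sigma_k^2 + \xi_k^2}/\sum_j p_j\sqrt{\sigma_j^2 + \xi_j^2}$, and plugging this design into the objective gives the optimal value $\bar q^{-1}\{\sum_k p_k\sqrt{\sigma_k^2 + \xi_k^2}\}^2$. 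Since the objective is strictly convex in each $q_k$ on $(0,1]$ and the constraint is linear, the unique first-order stationary point is the global minimum; I would also record that feasibility ($0 < q_k \le 1$) holds under the same smallness assumption on $\bar q$ already imposed in the proof of Proposition 5.

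I do not expect a hard obstacle, since the optimization itself is a verbatim special case of an established result. The one step demanding care is the variance identification in the first paragraph: one must keep track of the fact that the full-data influence-function component entering the IPSW variance is the Jacobian-corrected object $\tilde\varphi_F = J_{11}^{-1}\{\Psi_F + J_{12}J_{22}^{-1}u_e + J_{13}J_{33}^{-1}u_1 + J_{14}J_{44}^{-1}u_0\}$, that it is mean-zero overall (each score and the centered $\Psi_F$ integrate to zero) yet has a generally nonzero conditional mean $\xi(S) = E[\tilde\varphi_F \mid S]$, and that no $q$-independent term can sneak into $\text{Var}(\varphi^{\text{ipsw}}_w)$ precisely because of the multiplicative $\delta/q(S)$ structure. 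Confirming this last point is the only substantive check before the optimization machinery of Proposition 5 applies mechanically.
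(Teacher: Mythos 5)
Your proposal is correct and follows essentially the same route as the paper: identify the IPSW asymptotic variance as $\sum_{k} p_k q_k^{-1}(\sigma_k^2+\xi_k^2)$ --- noting that the multiplicative $\delta/q(S)$ structure keeps the conditional-mean term $\xi_k^2$ inside the $q_k^{-1}$ factor, unlike the enriched estimator --- and then plug $w_k=p_k$, $c_k=\sigma_k^2+\xi_k^2$, $d=\bar q$ into the general Lagrangian solution already derived for Proposition 5. One remark: your optimal value $\bar q^{-1}\bigl\{\sum_k p_k\sqrt{\sigma_k^2+\xi_k^2}\bigr\}^2$ carries a square that the proposition's statement omits; your version is the correct one (it matches the paper's own general derivation, where the optimum is $d^{-1}\{\sum_k w_k\sqrt{c_k}\}^2$), so the statement as printed contains a typo that your computation effectively catches.
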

Thus, under the optimal design for the IPSW estimator, not only the within-stratum variance but also the magnitude of the conditional expectation of the influence function (i.e., how heterogeneous the causal effect is across sampling strata) affects the optimal sampling probability.

\section{Closed-form Solution to the Primary Estimating Equations}\label{sec:closed}

Solving the main estimating equation of the IPSW estimating equations, we have the following solutions:
\begin{align*}
    \hat\mu_w^{a,\text{siw}}
        =&~ \left\{\sum_{i=1}^n\frac{\delta_i\mathbf{1}^A_{ai}}{q_i\hat \pi_{ai}}\hat h_i\right\}^{-1}\sum_{i=1}^n \frac{\delta_i\mathbf{1}^A_{ai}}{q_i\hat \pi_{ai}}\hat h_iY_i\\
    \hat\tau_h^{\text{siw}} 
        =&~ \hat\mu_w^{1,\text{siw}} - \hat\mu_w^{0,\text{siw}},
\end{align*}

\begin{align*}
    \hat\tau_h^{\text{sdr}}
        =&~ \left\{\sum_{i=1}^n\frac{\delta_i}{q_i}\{\hat h_i + h'(\hat\pi_i)(A_i-\hat\pi_i)\}\right\}^{-1} \\
        &~~~\times \left\{\sum_{i=1}^n \frac{\delta_i}{q_i}\hat h_i\left\{
            \frac{\mathbf{1}^A_{1i}}{\hat \pi_i}(Y_i - \hat\mu_{1i})
            - \frac{\mathbf{1}^A_{0i}}{1-\hat\pi_i}(Y_i - \hat\mu_{0i})
        \right\} + \{\hat h_i + h'(\hat\pi_i)(A_i-\hat\pi_i)\}\hat\tau_{i}\right\}.
\end{align*}

Solving the main estimating equation of the enriched estimating equations, we have the following solutions:
\begin{align*}
    \hat\mu_w^{a,\text{eiw}}
        =&~ \left\{\sum_{i=1}^n\frac{\delta_i}{q_i}\frac{\mathbf{1}^A_{ai}}{\hat \pi_{ai}}\hat h_i + \left(1 - \frac{\delta_i}{q_i}\right)\hat E\left[\frac{\mathbf{1}^A_{ai}}{\hat \pi_{ai}}\hat h_i\mid S_i\right]\right\}^{-1} \\
        &~~~\times \left\{\sum_{i=1}^n \frac{\delta_i}{q_i}\frac{\mathbf{1}^A_{ai}}{\hat \pi_{ai}}\hat h_iY_i + \left(1 - \frac{\delta_i}{q_i}\right)\hat E\left[\frac{\mathbf{1}^A_{ai}}{\hat \pi_{ai}}\hat h_iY_i\mid S_i\right]\right\}\\
    \hat\tau_h^{\text{eiw}} 
        =&~ \hat\mu_w^{1,\text{eiw}} - \hat\mu_w^{0,\text{eiw}},
\end{align*}
To construct the augmentation term of the EIW estimator, it is necessary to estimate
$E\left[\frac{\mathbf{1}^A_{ai}}{\hat \pi_{ai}}\hat h_iY_i \mid S_i\right]$ and $E\left[\frac{\mathbf{1}^A_{ai}}{\hat \pi_{ai}}\hat h_i \mid S_i\right]$.

\begin{align*}
    \hat\tau_h^{\text{edr}}
        =&~ \left\{\sum_{i=1}^n\frac{\delta_i}{q_i}\{\hat h_i + h'(\hat\pi_i)(A_i-\hat\pi_i)\} + \left(1 - \frac{\delta_i}{q_i}\right)\hat E\left[\hat h_i + h'(\hat\pi_i)(A_i-\hat\pi_i)\mid S_i\right]\right\}^{-1} \\
        &~\times \left\{\sum_{i=1}^n \frac{\delta_i}{q_i}\left\{\frac{\mathbf{1}^A_{1i}}{\hat \pi_i}\hat h_i(Y_i - \hat\mu_{1i}) - \frac{\mathbf{1}^A_{0i}}{1-\hat\pi_i}\hat h_i(Y_i - \hat\mu_{0i}) + \{\hat h_i + h'(\hat\pi_i)(A_i-\hat\pi_i)\}\hat\tau_{i}\right\} \right.\\
        &~~~~~~~~\left. + \left(1 - \frac{\delta_i}{q_i}\right)\hat E\left[\frac{\mathbf{1}^A_{1i}}{\hat \pi_i}\hat h_i(Y_i - \hat\mu_{1i}) - \frac{\mathbf{1}^A_{0i}}{1-\hat\pi_i}\hat h_i(Y_i - \hat\mu_{0i}) + \{\hat h_i + h'(\hat\pi_i)(A_i-\hat\pi_i)\}\hat\tau_{i}\mid S_i\right]\right\}.
\end{align*}
To construct the augmentation term of the EDR estimator, it is necessary to estimate
$E\left[\hat h_i + h'(\hat\pi_i)(A_i-\hat\pi_i)\mid S_i\right]$ and
\begin{gather*}
    E\left[\frac{\mathbf{1}^A_{1i}}{\hat \pi_i}\hat h_i(Y_i - \hat\mu_{1i}) - \frac{\mathbf{1}^A_{0i}}{1-\hat\pi_i}\hat h_i(Y_i - \hat\mu_{0i}) + \{\hat h_i + h'(\hat\pi_i)(A_i-\hat\pi_i)\}\hat\tau_{i}\mid S_i\right].
\end{gather*}

\clearpage
\section{Detailed Simulation Settings}\label{sec:detailsim}
\subsection{Data-Generating Process for DGP~1 (Comparative Study)}\label{app:dgp01}

In each replication, $n$ independent samples $Z^\star=(A,V,W,Y^1,Y^0)$ are generated as follows, where $A \in \{0,1\}$ is the treatment indicator, $V=(V_1,\ldots,V_8)\in\{0,1\}^8$ and $W\in\mathbb{R}$ are baseline covariates, and $(Y^1,Y^0)$ are the potential outcomes.

\paragraph{Covariates.}
The binary covariates $V_1,\ldots,V_8\distiid\mathrm{Bernoulli}(0.5)$ are drawn independently.
A continuous covariate is generated as
\[
    W = -1 + 0.5\times(V_1 + V_3 + V_4 + V_7) + \varepsilon_W,\qquad \varepsilon_W \sim N(0,\,0.25^2).
\]

\paragraph{Treatment.}
\[
    A\mid X \sim \mathrm{Bernoulli}(P(A=1\mid X)),\quad
    P(A=1\mid X) = \mathrm{expit}\{-2.10 + 0.5\times(V_1 + V_2 + V_4 + V_6) + W\}.
\]

\paragraph{Potential outcomes.}
\begin{align*}
    Y^1 &\sim \mathrm{Bernoulli}(P(Y^1=1\mid X)),\quad
         P(Y^1=1\mid X) = \mathrm{expit}\{-0.59 + 0.5\times(V_1 + V_2 + V_3 + V_5) + 1.5W\},\\
    Y^0 &\sim \mathrm{Bernoulli}(P(Y^0=1\mid X)),\quad
         P(Y^0=1\mid X) = \mathrm{expit}\{-1.41 + W\}.
\end{align*}
The observed outcome is $Y = AY^1 + (1-A)Y^0$.

Table~\ref{tab:V1toV8} summarises the dependence structure between $V_1$--$V_8$ and the variables $(A,W,Y^1,Y^0)$.
Since none of $V_1$--$V_8$ is directly associated with $Y^0$, any $V_j$ related to $Y^1$ or $W$ contributes to increased effect heterogeneity.

\begin{table}[htbp]
    \centering
    \caption{Presence or absence of associations between the binary covariate and other variables. A check mark (\checkmark) indicates that the covariate is included in the generating mechanism of the corresponding column variable.}
    \vspace{2mm}
    \begin{tabular}{ccccc}\hline
        \rule{0pt}{2.5ex}
              & $Y^1$ & $Y^0$ & $A$ & $W$\\\hline
        $V_1$ & \checkmark &  & \checkmark & \checkmark \\
        $V_2$ & \checkmark &  & \checkmark &  \\
        $V_3$ & \checkmark &  &  & \checkmark \\
        $V_4$ &  &  & \checkmark & \checkmark \\
        $V_5$ & \checkmark &  &  &  \\
        $V_6$ &  &  & \checkmark &  \\
        $V_7$ &  &  &  & \checkmark \\
        $V_8$ &  &  &  &  \\\hline
    \end{tabular}
    \label{tab:V1toV8}
\end{table}

\paragraph{Phase-1 sample and sampling design.}
In phase 1, under ODS, the stratum variable is $S=(A,V_{\text{obs}},Y)$; under non-ODS, $S=(A,V_{\text{obs}})$.
Among $V_1$--$V_8$, only a single variable is designated as $V_{\text{obs}}$ in each scenario, yielding eight strata under ODS and four under non-ODS.

Phase-2 sampling is conducted via two procedures.
The first is a stratified SRSWOR procedure applied to the phase-1 sample, which draws an equal number of observations from each stratum defined by $S$; the sampling probability $q(S)$ is computed after phase-2 selection.
The second uses Poisson sampling with stratum-specific probabilities precomputed from a large reference dataset of $n = 10^8$ observations, calibrated to yield equal expected counts per stratum.
Asymptotically, both procedures produce equivalent sampling probabilities.
The large reference dataset was also used to compute the true values of the target parameters:
$\tau_{\mathrm{ATE}}=0.165$, $\tau_{\mathrm{ATT}}=0.212$, $\tau_{\mathrm{ATU}}=0.146$, $\tau_{\mathrm{ATO}}=0.198$.

\subsection{Data-Generating Process for DGP~2 (DML Validation)}\label{app:dgp02}

\paragraph{Latent variables.}
Let $Z_1,\ldots,Z_{10}\distiid\mathrm{Bernoulli}(0.5)$ be ten independent
binary latent variables, and let $V\sim\mathrm{Bernoulli}(0.5)$ be a binary
phase-1 covariate, all mutually independent.

\paragraph{Treatment.}
Define $\alpha=(1,-1,0.5,-0.5,0.25,-0.25,0,0,0,0)^{\top}$.  The treatment
indicator is drawn as
\begin{gather*}
  A\mid Z,V\;\sim\;\mathrm{Bernoulli}\!\left(\pi(Z,V)\right),\qquad
  \pi(Z,V)=\mathrm{expit}\!\bigl(\alpha^{\top}(Z-0.5\cdot\mathbf{1})+(V-0.5)\bigr).
\end{gather*}
Only $Z_1,\ldots,Z_6$ affect treatment selection; $Z_7,\ldots,Z_{10}$ do not.

\paragraph{Outcome.}
Set $\mu_Y=\sum_{j=1}^{10}(Z_j-0.5)+(V-0.5)$, with
$\mathrm{Var}(\mu_Y)=10\times\tfrac{1}{4}+\tfrac{1}{4}=\tfrac{11}{4}$.
With outcome $R^2$ fixed at $r^2=0.75$, define
$\sigma^2=\mathrm{Var}(\mu_Y)\,(1-r^2)/r^2=\tfrac{11}{12}$.
Generate $\widetilde{Y}\sim N(\mu_Y,\sigma^2)$ and set
\begin{gather*}
  Y\mid Z,V\;\sim\;\mathrm{Bernoulli}\!\left(\mathrm{expit}(\widetilde{Y}/\sigma)\right).
\end{gather*}
Because $\mu_Y$ does not depend on $A$, the treatment has no causal effect:
$\tau_{\mathrm{ATE}}=\tau_{\mathrm{ATT}}=\tau_{\mathrm{ATU}}=\tau_{\mathrm{ATO}}=0$.

\paragraph{Phase-1 sample and sampling design.}
Every unit in the phase-1 sample of size $n$ has $S=(A,V,Y)$ observed (ODS).
Because all three variables are binary, the stratum space has $2^3=8$ cells.
Phase-2 sampling is Poisson with stratum-specific probabilities
$q(S=s)$ calibrated from a reference sample of size $10^6$ so that an equal
expected number of observations is allocated to each stratum, subject to the
overall sampling fraction $\bar{q}=0.25$.

\paragraph{Observed phase-2 covariates.}
Two scenarios are considered (see main text, Section~\ref{sec:sim}).
In the \emph{linear} scenario $W_j=Z_j$ for $j=1,\ldots,10$.
In the \emph{nonlinear} scenario:
\begin{align*}
  W_k &= Z_{2k-1}+2\,Z_{2k}\;\in\{0,1,2,3\}, \quad k=1,\ldots,5
    & &\text{(pair encodings, invertible),}\\
  W_6 &= Z_1Z_3,\quad W_7=Z_2Z_4,\quad W_8=Z_5Z_7
    & &\text{(second-order products),}\\
  W_9 &= Z_1Z_5Z_9,\quad W_{10}=Z_2Z_6Z_{10}
    & &\text{(third-order products).}
\end{align*}
The pair encodings together allow complete reconstruction of $Z_1,\ldots,Z_{10}$
by a sufficiently flexible model; the interaction terms enrich the feature space
without adding identifiable information beyond the pair encodings.

\clearpage
\section{Additional Simulation Results}\label{sec:addsim}

\subsection{Asymptotic Behavior (DGP~1)}
Figure \ref{fig:coverage} presents the simulated coverage probabilities of the 95\% confidence intervals for the EDR estimator, constructed using the sample variance of the EIF. Overall, the coverage tended to fall below the nominal level in small samples and improved as the sample size increased. The undercoverage was most pronounced under ODS combined with Poisson sampling. Among the target parameters, the ATO exhibited the most favorable coverage. Neither the proportion of the phase-2 sample size nor the use of bias correction for the point estimator had a noticeable impact on the coverage.

\begin{figure}[htbp]
\centering
\includegraphics[width=0.9\linewidth]{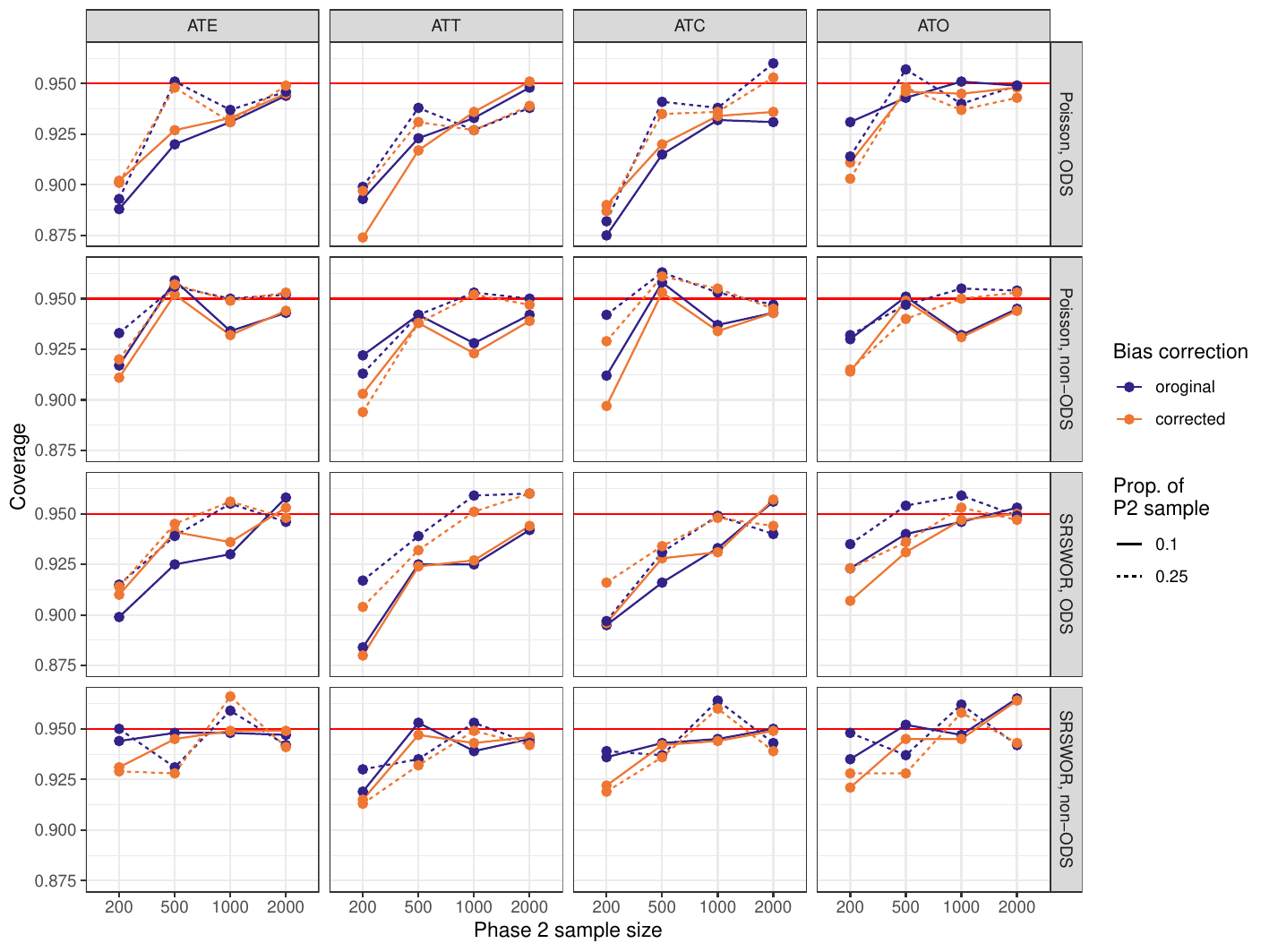}
\caption{Coverage proportion of 95\% confidence intervals of the EDR estimator based on the corresponding EIF when $V_{\text{obs}}=V_1$.}
\label{fig:coverage}
\end{figure}

Figures \ref{fig:asymp_empSE_pois} and \ref{fig:asymp_RMSE_pois} show how the empirical SE and RMSE, respectively, change as the phase-2 sample size increases. Similar to the bias, both the empirical SE and the RMSE decreased monotonically with the sample size. Interestingly, under the current simulation settings, almost no difference in performance was observed between the IPTW-type estimator and the DR-type estimator when the target parameter was the ATO.

\begin{figure}[htbp]
\centering
\includegraphics[width=0.9\linewidth]{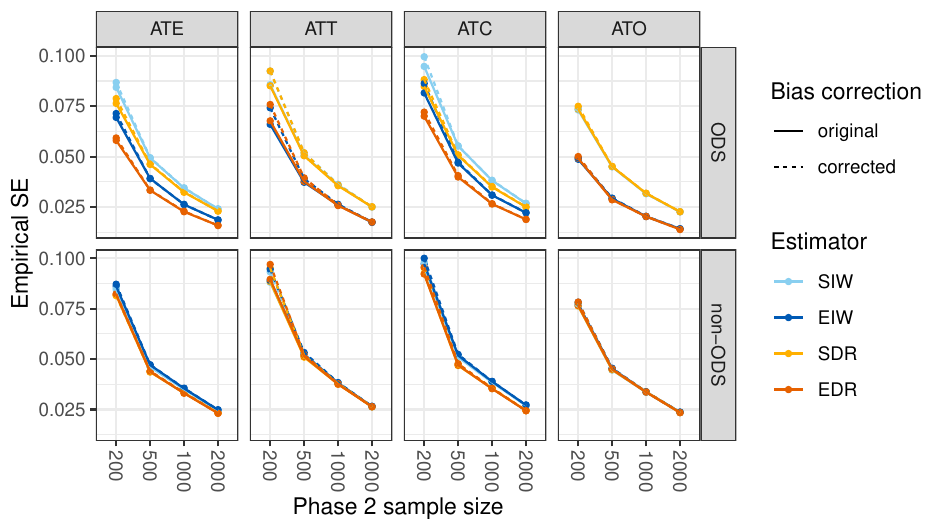}
\caption{Empirical standard error of each estimator by sample size under Poisson sampling with $V_{\text{obs}}=V_1$ and $n=10m$.}
\label{fig:asymp_empSE_pois}
\end{figure}

\begin{figure}[htbp]
\centering
\includegraphics[width=0.9\linewidth]{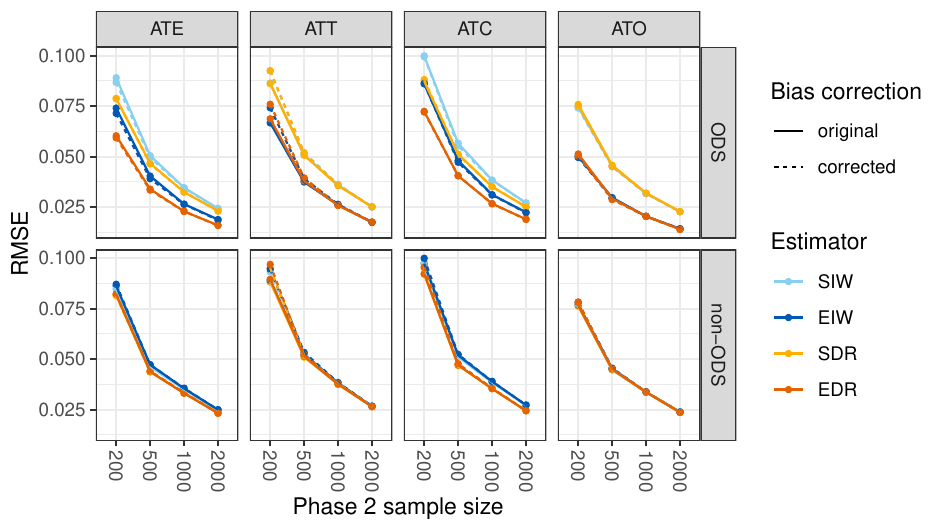}
\caption{Root mean squared error (RMSE) of each estimator by sample size under Poisson sampling with $V_{\text{obs}}=V_1$ and $n=10m$.}
\label{fig:asymp_RMSE_pois}
\end{figure}

\clearpage
\subsection{Comparison in performance (DGP~1)}
In this section, we summarize the comparative estimation performance of the proposed estimators for target parameters other than the ATE (Tables \ref{tab:tab:ATT_n10000_m1000_poisson_ODS} to \ref{tab:tab:ATO_n10000_m1000_poisson_ODS}). We then focus on the ATE and report the results under non-ODS (Table \ref{tab:tab:ATE_n10000_m1000_poisson_non-ODS}) as well as under stratified SRSWOR (Table \ref{tab:tab:ATE_n10000_m1000_srswor_ODS}).

The differences in estimation performance among the methods exhibited a similar pattern across all target parameters as observed for the ATE, indicating that enrichment was effective in improving performance. Overall, the DR-type estimator outperformed the IPTW-type estimator. Among the original (i.e., non–bias-corrected) estimators, the bias was most significant for the ATU and smallest for the ATO under the present simulation settings, but in all cases it was substantially reduced by bias correction. On the other hand, under non-ODS, finite-sample bias was minimal for the original estimators, and no meaningful improvement from enrichment was observed. Moreover, under stratified SRSWOR, as shown in Proposition \ref{prop:srswor}, there was no gain from enrichment, apart from negligible deviations attributable to the randomness of the bias correction procedure.

\begin{table}[tbp]
\centering
\caption{\label{tab:tab:ATT_n10000_m1000_poisson_ODS}Performance of each estimator when targeting the ATT under $n = 10{,}000$, $m = 1{,}000$, outcome-dependent Poisson sampling. For each cell of $V_{\text{obs}}$ and estimator, the three numbers shown (from top to bottom) represent the relative RMSE, relative empirical SE, and bias.}
\centering
\begin{tabular}[t]{ccccccccc}
\toprule
\multicolumn{1}{c}{ } & \multicolumn{4}{c}{Original} & \multicolumn{4}{c}{Bias-corrected} \\
\cmidrule(l{3pt}r{3pt}){2-5} \cmidrule(l{3pt}r{3pt}){6-9}
$V_{\text{obs}}$ & SIW & EIW & SDR & EDR & SIW & EIW & SDR & EDR\\
\midrule
$V_1$ & 5.06 & 3.67 & 5.02 & 3.63 & 5.11 & 3.71 & 5.06 & 3.65\\
 & 5.05 & 3.67 & 5.01 & 3.62 & 5.11 & 3.71 & 5.06 & 3.65\\
 & -0.25 & -0.18 & -0.26 & -0.21 & 0.01 & 0.02 & 0.06 & 0.06\vspace{2mm}\\
$V_2$ & 5.10 & 3.84 & 5.15 & 3.87 & 5.14 & 3.89 & 5.20 & 3.92\\
 & 5.09 & 3.83 & 5.14 & 3.86 & 5.14 & 3.89 & 5.20 & 3.92\\
 & -0.25 & -0.18 & -0.30 & -0.23 & 0.12 & 0.14 & 0.12 & 0.15\vspace{2mm}\\
$V_3$ & 5.12 & 3.72 & 5.09 & 3.67 & 5.12 & 3.74 & 5.08 & 3.66\\
 & 5.09 & 3.71 & 5.04 & 3.64 & 5.12 & 3.74 & 5.08 & 3.66\\
 & -0.41 & -0.25 & -0.48 & -0.33 & -0.09 & 0.03 & -0.09 & 0.02\vspace{2mm}\\
$V_4$ & 4.99 & 3.70 & 4.96 & 3.69 & 4.99 & 3.74 & 4.95 & 3.71\\
 & 4.94 & 3.68 & 4.91 & 3.66 & 4.98 & 3.74 & 4.94 & 3.71\\
 & -0.50 & -0.33 & -0.54 & -0.38 & -0.23 & -0.11 & -0.22 & -0.10\vspace{2mm}\\
$V_5$ & 5.25 & 3.95 & 5.28 & 4.00 & 5.29 & 4.00 & 5.32 & 4.05\\
 & 5.22 & 3.92 & 5.24 & 3.96 & 5.29 & 4.00 & 5.32 & 4.05\\
 & -0.42 & -0.35 & -0.47 & -0.40 & -0.08 & -0.05 & -0.06 & -0.03\vspace{2mm}\\
$V_6$ & 5.06 & 3.91 & 5.07 & 3.92 & 5.07 & 3.91 & 5.09 & 3.93\\
 & 5.04 & 3.88 & 5.05 & 3.88 & 5.08 & 3.92 & 5.11 & 3.94\\
 & -0.38 & -0.40 & -0.39 & -0.42 & -0.03 & -0.10 & 0.01 & -0.05\vspace{2mm}\\
$V_7$ & 5.13 & 3.98 & 5.10 & 3.90 & 5.17 & 4.03 & 5.14 & 3.93\\
 & 5.11 & 3.96 & 5.07 & 3.87 & 5.17 & 4.03 & 5.14 & 3.93\\
 & -0.39 & -0.34 & -0.41 & -0.34 & -0.08 & -0.07 & -0.04 & -0.02\vspace{2mm}\\
$V_8$ & 5.16 & 4.19 & 5.11 & 4.12 & 5.21 & 4.24 & 5.17 & 4.18\\
 & 5.15 & 4.17 & 5.09 & 4.10 & 5.21 & 4.24 & 5.17 & 4.18\\
 & -0.26 & -0.23 & -0.31 & -0.27 & 0.09 & 0.08 & 0.10 & 0.09\vspace{2mm}\\
\bottomrule
\end{tabular}
\end{table}
\begin{table}[tbp]
\centering
\caption{\label{tab:tab:ATC_n10000_m1000_poisson_ODS}Performance of each estimator when targeting the ATU under $n = 10{,}000$, $m = 1{,}000$, outcome-dependent Poisson sampling. For each cell of $V_{\text{obs}}$ and estimator, the three numbers shown (from top to bottom) represent the relative RMSE, relative empirical SE, and bias.}
\centering
\begin{tabular}[t]{ccccccccc}
\toprule
\multicolumn{1}{c}{ } & \multicolumn{4}{c}{Original} & \multicolumn{4}{c}{Bias-corrected} \\
\cmidrule(l{3pt}r{3pt}){2-5} \cmidrule(l{3pt}r{3pt}){6-9}
$V_{\text{obs}}$ & SIW & EIW & SDR & EDR & SIW & EIW & SDR & EDR\\
\midrule
$V_1$ & 5.67 & 4.60 & 5.18 & 3.95 & 5.65 & 4.58 & 5.19 & 3.94\\
 & 5.63 & 4.56 & 5.16 & 3.92 & 5.65 & 4.58 & 5.19 & 3.94\\
 & -0.44 & -0.42 & -0.31 & -0.33 & 0.13 & 0.09 & 0.07 & -0.00\vspace{2mm}\\
$V_2$ & 6.07 & 5.10 & 5.51 & 4.41 & 6.02 & 5.05 & 5.49 & 4.40\\
 & 6.01 & 5.04 & 5.47 & 4.38 & 6.02 & 5.05 & 5.49 & 4.40\\
 & -0.60 & -0.54 & -0.43 & -0.37 & 0.02 & 0.02 & 0.06 & 0.06\vspace{2mm}\\
$V_3$ & 5.64 & 4.66 & 5.17 & 3.95 & 5.55 & 4.61 & 5.11 & 3.92\\
 & 5.53 & 4.59 & 5.09 & 3.89 & 5.55 & 4.61 & 5.11 & 3.92\\
 & -0.75 & -0.59 & -0.63 & -0.46 & -0.15 & -0.05 & -0.15 & -0.04\vspace{2mm}\\
$V_4$ & 5.57 & 4.55 & 5.18 & 4.01 & 5.53 & 4.51 & 5.17 & 4.00\\
 & 5.51 & 4.50 & 5.13 & 3.98 & 5.53 & 4.52 & 5.17 & 4.01\\
 & -0.58 & -0.45 & -0.48 & -0.36 & 0.01 & 0.08 & -0.07 & -0.00\vspace{2mm}\\
$V_5$ & 5.96 & 5.04 & 5.41 & 4.35 & 5.86 & 4.94 & 5.37 & 4.31\\
 & 5.83 & 4.92 & 5.31 & 4.26 & 5.85 & 4.94 & 5.36 & 4.31\\
 & -0.84 & -0.74 & -0.70 & -0.60 & -0.20 & -0.16 & -0.18 & -0.14\vspace{2mm}\\
$V_6$ & 6.01 & 4.93 & 5.56 & 4.40 & 6.01 & 4.91 & 5.58 & 4.42\\
 & 5.99 & 4.91 & 5.53 & 4.37 & 6.01 & 4.91 & 5.59 & 4.42\\
 & -0.39 & -0.34 & -0.38 & -0.32 & 0.23 & 0.22 & 0.12 & 0.11\vspace{2mm}\\
$V_7$ & 5.59 & 4.67 & 5.23 & 4.06 & 5.52 & 4.59 & 5.23 & 4.05\\
 & 5.51 & 4.58 & 5.17 & 4.00 & 5.53 & 4.59 & 5.23 & 4.05\\
 & -0.66 & -0.62 & -0.54 & -0.47 & -0.05 & -0.07 & -0.06 & -0.05\vspace{2mm}\\
$V_8$ & 5.92 & 5.03 & 5.59 & 4.63 & 5.87 & 4.99 & 5.61 & 4.67\\
 & 5.86 & 4.97 & 5.55 & 4.60 & 5.87 & 4.99 & 5.60 & 4.67\\
 & -0.57 & -0.51 & -0.39 & -0.31 & 0.08 & 0.07 & 0.13 & 0.13\vspace{2mm}\\
\bottomrule
\end{tabular}
\end{table}
\begin{table}[tbp]
\centering
\caption{\label{tab:tab:ATO_n10000_m1000_poisson_ODS}Performance of each estimator when targeting the ATO under $n = 10{,}000$, $m = 1{,}000$, outcome-dependent Poisson sampling. For each cell of $V_{\text{obs}}$ and estimator, the three numbers shown (from top to bottom) represent the relative RMSE, relative empirical SE, and bias.}
\centering
\begin{tabular}[t]{ccccccccc}
\toprule
\multicolumn{1}{c}{ } & \multicolumn{4}{c}{Original} & \multicolumn{4}{c}{Bias-corrected} \\
\cmidrule(l{3pt}r{3pt}){2-5} \cmidrule(l{3pt}r{3pt}){6-9}
$V_{\text{obs}}$ & SIW & EIW & SDR & EDR & SIW & EIW & SDR & EDR\\
\midrule
$V_1$ & 4.66 & 3.00 & 4.69 & 3.00 & 4.65 & 3.00 & 4.68 & 2.99\\
 & 4.64 & 2.98 & 4.67 & 2.98 & 4.65 & 3.00 & 4.68 & 2.99\\
 & -0.27 & -0.21 & -0.30 & -0.25 & 0.06 & 0.06 & 0.07 & 0.05\vspace{2mm}\\
$V_2$ & 4.80 & 3.15 & 4.81 & 3.17 & 4.78 & 3.15 & 4.79 & 3.16\\
 & 4.78 & 3.14 & 4.79 & 3.15 & 4.78 & 3.15 & 4.79 & 3.15\\
 & -0.28 & -0.21 & -0.31 & -0.22 & 0.06 & 0.07 & 0.06 & 0.07\vspace{2mm}\\
$V_3$ & 4.69 & 2.98 & 4.70 & 2.95 & 4.65 & 2.96 & 4.66 & 2.92\\
 & 4.64 & 2.94 & 4.65 & 2.91 & 4.65 & 2.96 & 4.66 & 2.92\\
 & -0.48 & -0.31 & -0.49 & -0.31 & -0.16 & -0.04 & -0.14 & -0.04\vspace{2mm}\\
$V_4$ & 4.60 & 3.04 & 4.62 & 3.01 & 4.54 & 3.01 & 4.55 & 2.97\\
 & 4.55 & 3.01 & 4.56 & 2.97 & 4.54 & 3.02 & 4.55 & 2.97\\
 & -0.48 & -0.31 & -0.51 & -0.33 & -0.15 & -0.04 & -0.13 & -0.03\vspace{2mm}\\
$V_5$ & 4.82 & 3.21 & 4.85 & 3.21 & 4.79 & 3.18 & 4.81 & 3.18\\
 & 4.78 & 3.17 & 4.80 & 3.17 & 4.79 & 3.18 & 4.80 & 3.18\\
 & -0.42 & -0.33 & -0.44 & -0.33 & -0.09 & -0.06 & -0.07 & -0.04\vspace{2mm}\\
$V_6$ & 4.76 & 3.24 & 4.78 & 3.21 & 4.77 & 3.24 & 4.78 & 3.21\\
 & 4.76 & 3.23 & 4.78 & 3.20 & 4.78 & 3.25 & 4.79 & 3.22\\
 & -0.27 & -0.26 & -0.28 & -0.26 & 0.08 & 0.03 & 0.09 & 0.05\vspace{2mm}\\
$V_7$ & 4.71 & 3.19 & 4.73 & 3.13 & 4.68 & 3.17 & 4.69 & 3.10\\
 & 4.68 & 3.15 & 4.69 & 3.09 & 4.68 & 3.17 & 4.69 & 3.10\\
 & -0.39 & -0.34 & -0.41 & -0.32 & -0.06 & -0.06 & -0.05 & -0.04\vspace{2mm}\\
$V_8$ & 4.63 & 3.25 & 4.65 & 3.23 & 4.62 & 3.24 & 4.64 & 3.22\\
 & 4.62 & 3.24 & 4.63 & 3.22 & 4.62 & 3.24 & 4.64 & 3.21\\
 & -0.22 & -0.17 & -0.25 & -0.17 & 0.12 & 0.12 & 0.12 & 0.12\vspace{2mm}\\
\bottomrule
\end{tabular}
\end{table}

\begin{table}[tbp]
\centering
\caption{\label{tab:tab:ATE_n10000_m1000_poisson_non-ODS}Performance of each estimator when targeting the ATE under $n = 10{,}000$, $m = 1{,}000$, non-outcome-dependent Poisson sampling. For each cell of $V_{\text{obs}}$ and estimator, the three numbers shown (from top to bottom) represent the relative RMSE, relative empirical SE, and bias.}
\centering
\begin{tabular}[t]{ccccccccc}
\toprule
\multicolumn{1}{c}{ } & \multicolumn{4}{c}{Original} & \multicolumn{4}{c}{Bias-corrected} \\
\cmidrule(l{3pt}r{3pt}){2-5} \cmidrule(l{3pt}r{3pt}){6-9}
$V_{\text{obs}}$ & SIW & EIW & SDR & EDR & SIW & EIW & SDR & EDR\\
\midrule
$V_1$ & 5.55 & 5.62 & 5.26 & 5.24 & 5.57 & 5.64 & 5.30 & 5.28\\
 & 5.55 & 5.62 & 5.26 & 5.24 & 5.57 & 5.64 & 5.29 & 5.28\\
 & -0.05 & -0.07 & 0.07 & 0.06 & 0.05 & 0.03 & 0.05 & 0.05\vspace{2mm}\\
$V_2$ & 5.11 & 5.12 & 5.01 & 5.02 & 5.11 & 5.12 & 5.03 & 5.04\\
 & 5.12 & 5.12 & 5.01 & 5.02 & 5.11 & 5.12 & 5.03 & 5.04\\
 & 0.03 & 0.02 & 0.12 & 0.12 & 0.11 & 0.11 & 0.13 & 0.13\vspace{2mm}\\
$V_3$ & 5.07 & 5.16 & 4.74 & 4.73 & 5.08 & 5.18 & 4.76 & 4.75\\
 & 5.07 & 5.16 & 4.74 & 4.73 & 5.08 & 5.18 & 4.76 & 4.75\\
 & -0.06 & -0.08 & 0.01 & 0.01 & 0.01 & -0.01 & 0.00 & 0.00\vspace{2mm}\\
$V_4$ & 5.39 & 5.41 & 5.07 & 5.07 & 5.38 & 5.40 & 5.07 & 5.08\\
 & 5.36 & 5.38 & 5.05 & 5.06 & 5.37 & 5.38 & 5.06 & 5.06\\
 & -0.35 & -0.36 & -0.25 & -0.26 & -0.27 & -0.27 & -0.25 & -0.26\vspace{2mm}\\
$V_5$ & 5.39 & 5.41 & 5.17 & 5.16 & 5.41 & 5.43 & 5.19 & 5.18\\
 & 5.39 & 5.41 & 5.17 & 5.17 & 5.41 & 5.43 & 5.19 & 5.18\\
 & -0.02 & -0.03 & -0.01 & -0.00 & 0.03 & 0.03 & -0.01 & -0.01\vspace{2mm}\\
$V_6$ & 5.59 & 5.62 & 5.29 & 5.30 & 5.61 & 5.63 & 5.31 & 5.32\\
 & 5.59 & 5.62 & 5.29 & 5.30 & 5.61 & 5.63 & 5.31 & 5.32\\
 & 0.03 & 0.03 & 0.14 & 0.15 & 0.10 & 0.10 & 0.15 & 0.16\vspace{2mm}\\
$V_7$ & 5.41 & 5.44 & 5.16 & 5.16 & 5.41 & 5.43 & 5.17 & 5.17\\
 & 5.40 & 5.43 & 5.16 & 5.16 & 5.41 & 5.43 & 5.17 & 5.16\\
 & -0.15 & -0.18 & -0.10 & -0.10 & -0.09 & -0.10 & -0.10 & -0.10\vspace{2mm}\\
$V_8$ & 5.98 & 5.99 & 5.65 & 5.66 & 5.98 & 5.98 & 5.66 & 5.67\\
 & 5.97 & 5.98 & 5.65 & 5.66 & 5.97 & 5.98 & 5.66 & 5.67\\
 & -0.26 & -0.27 & -0.20 & -0.20 & -0.20 & -0.21 & -0.20 & -0.20\vspace{2mm}\\
\bottomrule
\end{tabular}
\end{table}
\begin{table}[tbp]
\centering
\caption{\label{tab:tab:ATE_n10000_m1000_srswor_ODS}Performance of each estimator when targeting the ATE under $n = 10{,}000$, $m = 1{,}000$, outcome-dependent Stratified SRSWOR. For each cell of $V_{\text{obs}}$ and estimator, the three numbers shown (from top to bottom) represent the relative RMSE, relative empirical SE, and bias.}
\centering
\begin{tabular}[t]{ccccccccc}
\toprule
\multicolumn{1}{c}{ } & \multicolumn{4}{c}{Original} & \multicolumn{4}{c}{Bias-corrected} \\
\cmidrule(l{3pt}r{3pt}){2-5} \cmidrule(l{3pt}r{3pt}){6-9}
$V_{\text{obs}}$ & SIW & EIW & SDR & EDR & SIW & EIW & SDR & EDR\\
\midrule
$V_1$ & 3.89 & 3.89 & 3.46 & 3.46 & 3.81 & 3.80 & 3.45 & 3.43\\
 & 3.81 & 3.81 & 3.41 & 3.41 & 3.82 & 3.80 & 3.45 & 3.43\\
 & -0.51 & -0.51 & -0.41 & -0.41 & -0.01 & -0.05 & -0.05 & -0.10\vspace{2mm}\\
$V_2$ & 4.19 & 4.19 & 3.77 & 3.77 & 4.14 & 4.12 & 3.79 & 3.77\\
 & 4.13 & 4.13 & 3.74 & 3.74 & 4.13 & 4.12 & 3.78 & 3.76\\
 & -0.43 & -0.43 & -0.31 & -0.31 & 0.16 & 0.11 & 0.15 & 0.09\vspace{2mm}\\
$V_3$ & 4.18 & 4.18 & 3.74 & 3.74 & 4.10 & 4.10 & 3.72 & 3.70\\
 & 4.11 & 4.11 & 3.68 & 3.68 & 4.10 & 4.10 & 3.72 & 3.70\\
 & -0.51 & -0.51 & -0.44 & -0.44 & 0.06 & 0.01 & 0.01 & -0.05\vspace{2mm}\\
$V_4$ & 3.78 & 3.78 & 3.41 & 3.41 & 3.73 & 3.72 & 3.41 & 3.40\\
 & 3.72 & 3.72 & 3.37 & 3.37 & 3.73 & 3.72 & 3.41 & 3.40\\
 & -0.43 & -0.43 & -0.35 & -0.35 & 0.09 & 0.04 & 0.03 & -0.03\vspace{2mm}\\
$V_5$ & 4.20 & 4.20 & 3.82 & 3.82 & 4.19 & 4.18 & 3.85 & 3.83\\
 & 4.17 & 4.17 & 3.80 & 3.80 & 4.18 & 4.17 & 3.84 & 3.83\\
 & -0.35 & -0.35 & -0.30 & -0.30 & 0.24 & 0.19 & 0.19 & 0.12\vspace{2mm}\\
$V_6$ & 4.23 & 4.23 & 3.82 & 3.82 & 4.13 & 4.13 & 3.79 & 3.79\\
 & 4.13 & 4.13 & 3.75 & 3.75 & 4.13 & 4.13 & 3.80 & 3.79\\
 & -0.60 & -0.60 & -0.47 & -0.47 & -0.02 & -0.07 & 0.01 & -0.06\vspace{2mm}\\
$V_7$ & 4.12 & 4.12 & 3.75 & 3.75 & 4.05 & 4.04 & 3.73 & 3.72\\
 & 4.05 & 4.05 & 3.70 & 3.70 & 4.05 & 4.04 & 3.73 & 3.72\\
 & -0.49 & -0.49 & -0.42 & -0.42 & 0.08 & 0.03 & 0.03 & -0.03\vspace{2mm}\\
$V_8$ & 4.38 & 4.38 & 4.00 & 4.00 & 4.25 & 4.24 & 3.94 & 3.93\\
 & 4.25 & 4.25 & 3.89 & 3.89 & 4.25 & 4.24 & 3.94 & 3.93\\
 & -0.69 & -0.69 & -0.59 & -0.59 & -0.08 & -0.13 & -0.08 & -0.14\vspace{2mm}\\
\bottomrule
\end{tabular}
\end{table}

\clearpage
\subsection{Efficiency gains from enrichment (DGP~1)}
\subsubsection{Under Poisson sampling}
In this section, we evaluate the effect of sample size (Figures \ref{fig:gain_cor_200_pois} to \ref{fig:gain_cor_2000_pois}) and the effect of bias correction (Figure \ref{fig:gain_org_1000_pois}) on the \%Gain from enrichment under Poisson sampling. Under non-ODS, no clear improvement in efficiency was observed in any of the scenarios, whereas under ODS, we observed a marked improvement, consistent with the case of $n=1000$. The \%Gain under ODS tended to be larger when the sample size was larger, and the bias correction did not affect efficiency.

\begin{figure}[htbp]
\centering
\includegraphics[width=0.95\linewidth]{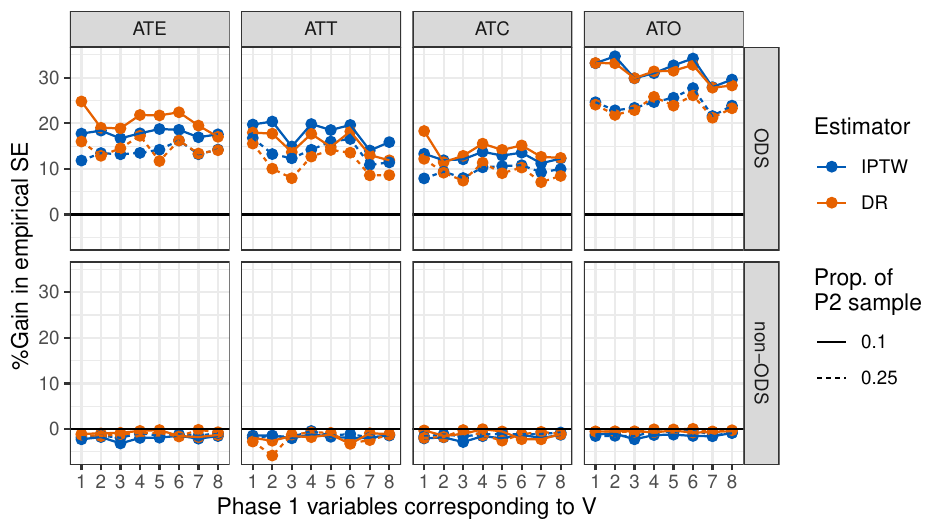}
\caption{Efficiency gains from enrichment by the choice of $V_{\text{obs}}$ under Poisson sampling. The estimator is bias-corrected, and the phase 2 sample size is 200.}
\label{fig:gain_cor_200_pois}
\end{figure}

\begin{figure}[htbp]
\centering
\includegraphics[width=0.95\linewidth]{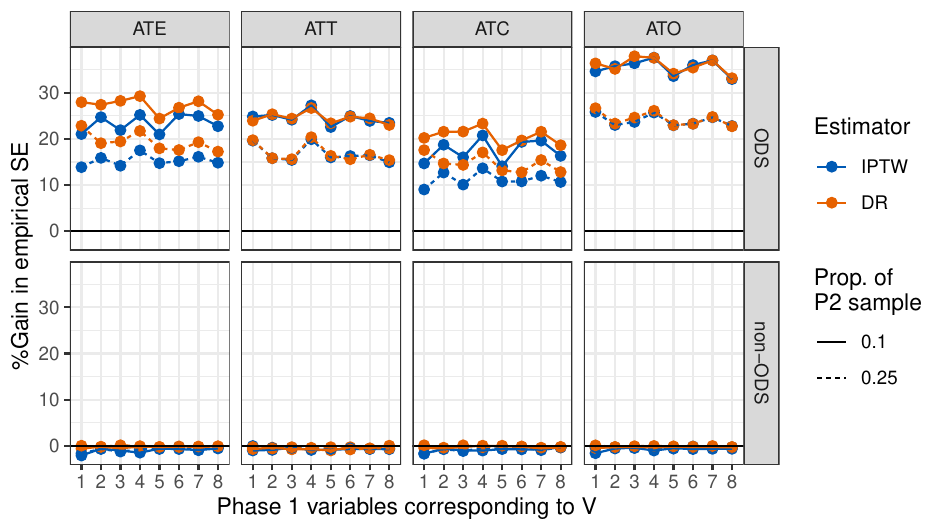}
\caption{Efficiency gains from enrichment by the choice of $V_{\text{obs}}$ under Poisson sampling. The estimator is bias-corrected, and the phase 2 sample size is 500.}
\label{fig:gain_cor_500_pois}
\end{figure}

\begin{figure}[htbp]
\centering
\includegraphics[width=0.95\linewidth]{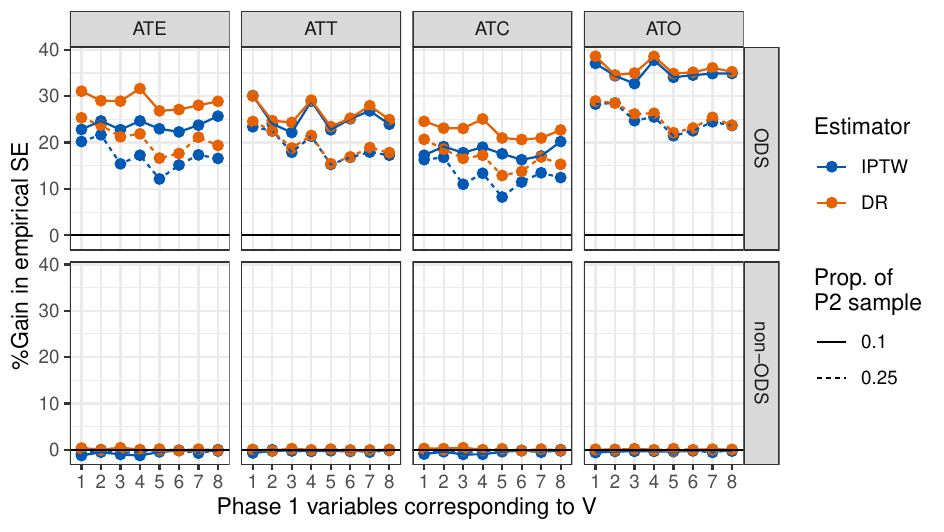}
\caption{Efficiency gains from enrichment by the choice of $V_{\text{obs}}$ under Poisson sampling. The estimator is bias-corrected, and the phase 2 sample size is 2{,}000.}
\label{fig:gain_cor_2000_pois}
\end{figure}

\begin{figure}[htbp]
\centering
\includegraphics[width=0.95\linewidth]{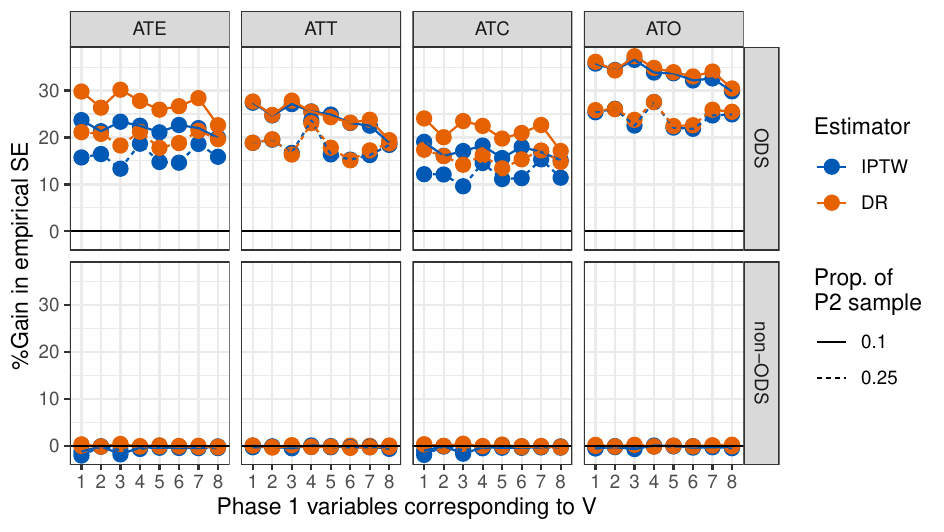}
\caption{Efficiency gains from enrichment by the choice of $V_{\text{obs}}$ under Poisson sampling. The estimator is NOT bias-corrected, and the phase 2 sample size is 1{,}000.}
\label{fig:gain_org_1000_pois}
\end{figure}

\clearpage
\subsubsection{Under stratified SRSWOR}
We report the \%Gain under stratified SRSWOR only for the case of $m=1000$ (Figures \ref{fig:gain_cor_1000_srs} and \ref{fig:gain_org_1000_srs}). When bias correction is applied and ODS is performed, a very small gain from enrichment can be observed. In contrast, for estimators without bias correction, the \%Gain is exactly zero. As shown in Proposition \ref{prop:srswor}, when stratified SRSWOR is used with a finite, discrete phase 1 variables, the \%Gain is zero in general; the small positive \%Gain observed here is likely attributable to the artificial randomness introduced by the delete-$d$ jackknife. In any case, enrichment provides little benefit under stratified SRSWOR.

\begin{figure}[htbp]
\centering
\includegraphics[width=0.95\linewidth]{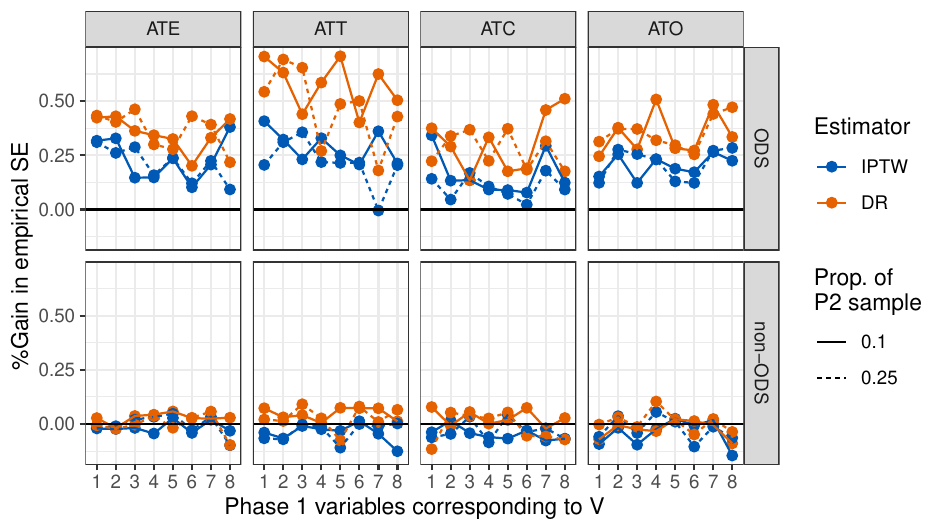}
\caption{Efficiency gains from enrichment by the choice of $V_{\text{obs}}$ under stratified SRSWOR. The estimator is bias-corrected, and the phase 2 sample size is 1000.}
\label{fig:gain_cor_1000_srs}
\end{figure}

\begin{figure}[htbp]
\centering
\includegraphics[width=0.95\linewidth]{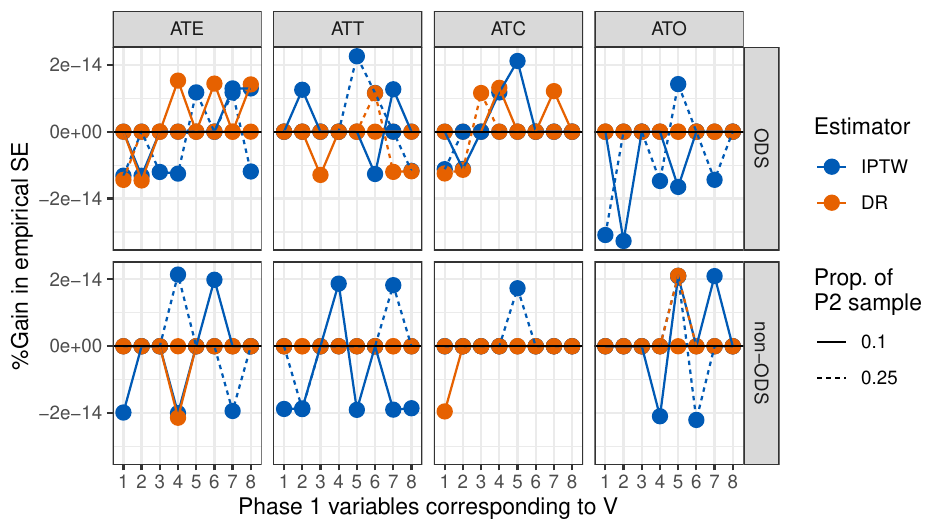}
\caption{Efficiency gains from enrichment by the choice of $V_{\text{obs}}$ under stratified SRSWOR. The estimator is NOT bias-corrected, and the phase 2 sample size is 1000.}
\label{fig:gain_org_1000_srs}
\end{figure}

\clearpage
\subsection{Full Results for DML Validation Study (DGP~2)}\label{sec:addsim_dgp2}

Tables~\ref{tab:dgp2_lin_full} and \ref{tab:dgp2_nonlin_full} present the
complete Monte Carlo results for the DML-EDR estimator under DGP~2,
covering both the linear and nonlinear scenarios.
For each combination of phase-1 sample size $n$ and number of cross-fitting
folds $K$, we report the signed Monte Carlo bias (Bias), empirical standard
error (EmpSE), and empirical coverage of the 95\% Wald confidence interval
(Cov) for all four estimands (ATE, ATT, ATU, ATO).
The summary in Section~\ref{sec:sim2_res} focuses on coverage and the
MeanSE/EmpSE ratio; the tables below supply the corresponding bias and
efficiency information.


\begin{table}[tbp]
  \centering
  \caption{Full simulation results for the DML-EDR estimator under DGP~2,
    \textbf{linear} scenario ($W_j = Z_j$).
    All true parameter values are zero.
    Bias: signed Monte Carlo bias;
    EmpSE: empirical standard error;
    Cov: empirical coverage (\%) of the 95\% Wald confidence interval
    (990 replications per configuration).}
  \label{tab:dgp2_lin_full}
  \small
  \setlength{\tabcolsep}{5pt}
  \begin{tabular}{lr rrr c rrr}\toprule
    & & \multicolumn{3}{c}{$K = 1$}
    & \phantom{x}
    & \multicolumn{3}{c}{$K = 5$} \\
    \cmidrule{3-5}\cmidrule{7-9}
    Estimand & $n$ & Bias & EmpSE & Cov(\%)
                  && Bias & EmpSE & Cov(\%) \\\midrule
    \multirow[t]{4}{*}{ATE}
      & 1{,}000 & $ 0.003$ & $0.056$ & $87.8$ && $-0.003$ & $0.066$ & $97.9$ \\
      & 2{,}000 & $ 0.001$ & $0.038$ & $92.5$ && $-0.002$ & $0.041$ & $96.9$ \\
      & 4{,}000 & $ 0.002$ & $0.028$ & $94.1$ && $ 0.000$ & $0.030$ & $97.1$ \\
      & 8{,}000 & $ 0.000$ & $0.020$ & $94.8$ && $-0.001$ & $0.020$ & $96.4$ \\\addlinespace
    \multirow[t]{4}{*}{ATT}
      & 1{,}000 & $ 0.002$ & $0.062$ & $90.8$ && $ 0.006$ & $0.074$ & $97.1$ \\
      & 2{,}000 & $ 0.000$ & $0.043$ & $92.4$ && $ 0.003$ & $0.046$ & $97.0$ \\
      & 4{,}000 & $ 0.002$ & $0.032$ & $93.9$ && $ 0.003$ & $0.033$ & $96.2$ \\
      & 8{,}000 & $ 0.000$ & $0.022$ & $94.4$ && $ 0.001$ & $0.022$ & $95.2$ \\\addlinespace
    \multirow[t]{4}{*}{ATU}
      & 1{,}000 & $ 0.005$ & $0.063$ & $84.1$ && $-0.013$ & $0.081$ & $96.4$ \\
      & 2{,}000 & $ 0.002$ & $0.042$ & $92.0$ && $-0.007$ & $0.047$ & $96.8$ \\
      & 4{,}000 & $ 0.001$ & $0.031$ & $93.0$ && $-0.003$ & $0.033$ & $95.8$ \\
      & 8{,}000 & $ 0.000$ & $0.022$ & $94.2$ && $-0.002$ & $0.022$ & $95.7$ \\\addlinespace
    \multirow[t]{4}{*}{ATO}
      & 1{,}000 & $-0.004$ & $0.056$ & $94.5$ && $-0.003$ & $0.059$ & $96.0$ \\
      & 2{,}000 & $-0.003$ & $0.037$ & $96.4$ && $-0.002$ & $0.039$ & $96.6$ \\
      & 4{,}000 & $ 0.000$ & $0.028$ & $95.6$ && $ 0.000$ & $0.028$ & $95.8$ \\
      & 8{,}000 & $ 0.000$ & $0.019$ & $95.6$ && $ 0.000$ & $0.019$ & $95.6$ \\
    \bottomrule
  \end{tabular}
\end{table}

\begin{table}[tbp]
  \centering
  \caption{Full simulation results for the DML-EDR estimator under DGP~2,
    \textbf{nonlinear} scenario (transformed covariates $W$).
    All true parameter values are zero.
    Bias: signed Monte Carlo bias;
    EmpSE: empirical standard error;
    Cov: empirical coverage (\%) of the 95\% Wald confidence interval
    (990 replications per configuration).}
  \label{tab:dgp2_nonlin_full}
  \small
  \setlength{\tabcolsep}{5pt}
  \begin{tabular}{lr rrr c rrr}\toprule
    & & \multicolumn{3}{c}{$K = 1$}
    & \phantom{x}
    & \multicolumn{3}{c}{$K = 5$} \\
    \cmidrule{3-5}\cmidrule{7-9}
    Estimand & $n$ & Bias & EmpSE & Cov(\%)
                  && Bias & EmpSE & Cov(\%) \\\midrule
    \multirow[t]{4}{*}{ATE}
      & 1{,}000 & $ 0.012$ & $0.056$ & $76.2$ && $ 0.024$ & $0.079$ & $96.5$ \\
      & 2{,}000 & $ 0.004$ & $0.039$ & $88.7$ && $ 0.008$ & $0.045$ & $98.3$ \\
      & 4{,}000 & $ 0.003$ & $0.029$ & $93.0$ && $ 0.003$ & $0.031$ & $97.0$ \\
      & 8{,}000 & $ 0.001$ & $0.020$ & $94.8$ && $ 0.000$ & $0.020$ & $97.0$ \\\addlinespace
    \multirow[t]{4}{*}{ATT}
      & 1{,}000 & $ 0.016$ & $0.061$ & $85.1$ && $ 0.036$ & $0.092$ & $97.0$ \\
      & 2{,}000 & $ 0.006$ & $0.044$ & $90.9$ && $ 0.017$ & $0.053$ & $96.7$ \\
      & 4{,}000 & $ 0.004$ & $0.033$ & $91.4$ && $ 0.008$ & $0.035$ & $95.7$ \\
      & 8{,}000 & $ 0.000$ & $0.022$ & $94.6$ && $ 0.002$ & $0.023$ & $96.4$ \\\addlinespace
    \multirow[t]{4}{*}{ATU}
      & 1{,}000 & $ 0.008$ & $0.064$ & $72.1$ && $ 0.011$ & $0.101$ & $95.8$ \\
      & 2{,}000 & $ 0.003$ & $0.043$ & $86.8$ && $-0.002$ & $0.053$ & $97.7$ \\
      & 4{,}000 & $ 0.002$ & $0.032$ & $91.8$ && $-0.002$ & $0.034$ & $97.0$ \\
      & 8{,}000 & $ 0.001$ & $0.022$ & $93.9$ && $-0.002$ & $0.023$ & $97.2$ \\\addlinespace
    \multirow[t]{4}{*}{ATO}
      & 1{,}000 & $ 0.001$ & $0.058$ & $92.2$ && $ 0.017$ & $0.062$ & $93.9$ \\
      & 2{,}000 & $-0.001$ & $0.038$ & $96.1$ && $ 0.005$ & $0.041$ & $96.6$ \\
      & 4{,}000 & $ 0.000$ & $0.028$ & $95.4$ && $ 0.003$ & $0.029$ & $95.9$ \\
      & 8{,}000 & $ 0.000$ & $0.019$ & $95.7$ && $ 0.000$ & $0.019$ & $96.6$ \\
    \bottomrule
  \end{tabular}
\end{table}

\clearpage
\section{Additional Information of Real Data Analysis}\label{sec:add_real}

%
%

\begingroup
\footnotesize
\setlength{\tabcolsep}{2pt}
\setlength{\LTcapwidth}{\linewidth}

\begin{longtable}{@{}
  p{3.5cm}
  p{2.5cm}
  >{\raggedleft\arraybackslash}p{2.2cm}
  >{\raggedleft\arraybackslash}p{2.2cm}
  >{\raggedleft\arraybackslash}p{2.2cm}
  >{\raggedleft\arraybackslash}p{0.9cm}
@{}}

\caption{Baseline characteristics by RHC status
  (Connors et al., 1996; $n = 5{,}735$).
  Continuous variables: mean (SD);
  categorical and binary variables: $n$ (\%).
  SMD: standardised mean difference (RHC $-$ No~RHC).}
\label{tab:table1}\\

\toprule
& & \multicolumn{1}{c}{No RHC} & \multicolumn{1}{c}{RHC}
  & \multicolumn{1}{c}{Overall} & \\
Variable & Level / Stat
  & \multicolumn{1}{c}{($n{=}3{,}551$)}
  & \multicolumn{1}{c}{($n{=}2{,}184$)}
  & \multicolumn{1}{c}{($n{=}5{,}735$)}
  & \multicolumn{1}{c}{SMD} \\
\midrule
\endfirsthead

\multicolumn{6}{l}{\textit{Table~\ref{tab:table1} continued}}\\[2pt]
\toprule
& & \multicolumn{1}{c}{No RHC} & \multicolumn{1}{c}{RHC}
  & \multicolumn{1}{c}{Overall} & \\
Variable & Level / Stat
  & \multicolumn{1}{c}{($n{=}3{,}551$)}
  & \multicolumn{1}{c}{($n{=}2{,}184$)}
  & \multicolumn{1}{c}{($n{=}5{,}735$)}
  & \multicolumn{1}{c}{SMD} \\
\midrule
\endhead

\midrule
\multicolumn{6}{r}{\textit{continued on next page}}\\
\endfoot

\bottomrule
\endlastfoot

\multicolumn{6}{l}{\textit{Outcome}} \\[1pt]
30-day mortality & $n$ (\%)
  & 1088 (30.6\%) & 830 (38.0\%) & 1918 (33.4\%) & 0.156 \\[5pt]

\multicolumn{6}{l}{\textit{Phase-1 variables ($V$)}} \\[1pt]
Age group, $n$ (\%)
  & \quad $<55$      & 1162 (32.7\%) &  706 (32.3\%) & 1868 (32.6\%) & $-0.008$ \\
  & \quad 55--70     & 1080 (30.4\%) &  782 (35.8\%) & 1862 (32.5\%) &  0.115   \\
  & \quad $\geq\!70$ & 1309 (36.9\%) &  696 (31.9\%) & 2005 (35.0\%) & $-0.105$ \\[3pt]
Sex, $n$ (\%)
  & \quad Female     & 1637 (46.1\%) &  906 (41.5\%) & 2543 (44.3\%) & $-0.093$ \\
  & \quad Male       & 1914 (53.9\%) & 1278 (58.5\%) & 3192 (55.7\%) &  0.093   \\[3pt]
Disease cluster, $n$ (\%)
  & \quad Sepsis/MOSF    & 2349 (66.2\%) & 1767 (80.9\%) & 4116 (71.8\%) &  0.339   \\
  & \quad Cardiac/Resp   &  646 (18.2\%) &  267 (12.2\%) &  913 (15.9\%) & $-0.167$ \\
  & \quad Other          &  556 (15.7\%) &  150 (6.9\%)  &  706 (12.3\%) & $-0.281$ \\[5pt]

\multicolumn{6}{l}{\textit{Demographics ($W$)}} \\[1pt]
Age (years)       & Mean (SD) & 61.8 (17.3) & 60.8 (15.6) & 61.4 (16.7) & $-0.061$ \\[3pt]
Race, $n$ (\%)
  & \quad White    & 2753 (77.5\%) & 1707 (78.2\%) & 4460 (77.8\%) &  0.015   \\
  & \quad Black    &  585 (16.5\%) &  335 (15.3\%) &  920 (16.0\%) & $-0.031$ \\
  & \quad Other    &  213 (6.0\%)  &  142 (6.5\%)  &  355 (6.2\%)  &  0.021   \\[3pt]
Education (years) & Mean (SD) & 11.6 (3.1) & 11.9 (3.2) & 11.7 (3.2) & 0.091 \\[3pt]
Income, $n$ (\%)
  & \quad $<$\$11k      & 2081 (58.6\%) & 1145 (52.4\%) & 3226 (56.3\%) & $-0.125$ \\
  & \quad \$11--25k     &  713 (20.1\%) &  452 (20.7\%) & 1165 (20.3\%) &  0.015   \\
  & \quad \$25--50k     &  500 (14.1\%) &  393 (18.0\%) &  893 (15.6\%) &  0.107   \\
  & \quad $>$\$50k      &  257 (7.2\%)  &  194 (8.9\%)  &  451 (7.9\%)  &  0.060   \\[3pt]
Insurance, $n$ (\%)
  & \quad Private           &  967 (27.2\%) &  731 (33.5\%) & 1698 (29.6\%) &  0.136   \\
  & \quad Private+Medicare  &  746 (21.0\%) &  490 (22.4\%) & 1236 (21.6\%) &  0.035   \\
  & \quad Medicare          &  947 (26.7\%) &  511 (23.4\%) & 1458 (25.4\%) & $-0.076$ \\
  & \quad Medicare+Medicaid &  251 (7.1\%)  &  123 (5.6\%)  &  374 (6.5\%)  & $-0.059$ \\
  & \quad Medicaid          &  454 (12.8\%) &  193 (8.8\%)  &  647 (11.3\%) & $-0.127$ \\
  & \quad No insurance      &  186 (5.2\%)  &  136 (6.2\%)  &  322 (5.6\%)  &  0.043   \\[5pt]

\multicolumn{6}{l}{\textit{Disease characteristics ($W$)}} \\[1pt]
Primary disease cat., $n$ (\%)
  & \quad MOSF/Sepsis    &  527 (14.8\%) &  700 (32.1\%) & 1227 (21.4\%) &  0.415   \\
  & \quad ARF            & 1581 (44.5\%) &  909 (41.6\%) & 2490 (43.4\%) & $-0.059$ \\
  & \quad CHF            &  247 (7.0\%)  &  209 (9.6\%)  &  456 (8.0\%)  &  0.095   \\
  & \quad COPD           &  399 (11.2\%) &   58 (2.7\%)  &  457 (8.0\%)  & $-0.342$ \\
  & \quad Coma           &  341 (9.6\%)  &   95 (4.3\%)  &  436 (7.6\%)  & $-0.207$ \\
  & \quad Cirrhosis      &  175 (4.9\%)  &   49 (2.2\%)  &  224 (3.9\%)  & $-0.145$ \\
  & \quad MOSF/Malig.    &  241 (6.8\%)  &  158 (7.2\%)  &  399 (7.0\%)  &  0.018   \\
  & \quad Lung cancer    &   34 (1.0\%)  &    5 (0.2\%)  &   39 (0.7\%)  & $-0.095$ \\
  & \quad Colon cancer   &    6 (0.2\%)  &    1 (0.0\%)  &    7 (0.1\%)  & $-0.038$ \\[3pt]
Secondary disease cat., $n$ (\%)
  & \quad MOSF/Sepsis    & 3269 (92.1\%) & 2092 (95.8\%) & 5361 (93.5\%) &  0.157   \\
  & \quad MOSF/Malig.    &  171 (4.8\%)  &   58 (2.7\%)  &  229 (4.0\%)  & $-0.114$ \\
  & \quad Coma           &   70 (2.0\%)  &   20 (0.9\%)  &   90 (1.6\%)  & $-0.089$ \\
  & \quad Cirrhosis      &   27 (0.8\%)  &   11 (0.5\%)  &   38 (0.7\%)  & $-0.032$ \\
  & \quad Lung cancer    &   13 (0.4\%)  &    2 (0.1\%)  &   15 (0.3\%)  & $-0.057$ \\
  & \quad Colon cancer   &    1 (0.0\%)  &    1 (0.0\%)  &    2 (0.0\%)  &  0.009   \\[3pt]
Cancer status, $n$ (\%)
  & \quad None           & 2652 (74.7\%) & 1727 (79.1\%) & 4379 (76.4\%) &  0.104   \\
  & \quad Non-metastatic &  638 (18.0\%) &  334 (15.3\%) &  972 (16.9\%) & $-0.072$ \\
  & \quad Metastatic     &  261 (7.4\%)  &  123 (5.6\%)  &  384 (6.7\%)  & $-0.070$ \\[3pt]
DNR order on day 1 & $n$ (\%)
  & 499 (14.1\%) & 155 (7.1\%) & 654 (11.4\%) & $-0.228$ \\[5pt]

\multicolumn{6}{l}{\textit{Severity scores ($W$)}} \\[1pt]
SUPPORT 2-mo.\ survival & Mean (SD) & 0.61 (0.19) & 0.57 (0.20) & 0.59 (0.20) & $-0.199$ \\
APACHE~III score        & Mean (SD) & 50.9 (18.8) & 60.7 (20.3) & 54.7 (20.0) &  0.506   \\
Glasgow Coma Score      & Mean (SD) & 22.2 (31.4) & 19.0 (28.3) & 21.0 (30.3) & $-0.109$ \\[5pt]

\multicolumn{6}{l}{\textit{Vital signs and laboratory values on day 1 ($W$)}} \\[1pt]
Mean arterial BP (mmHg) & Mean (SD) & 84.9 (38.9)  &  68.2 (34.2) &  78.5 (38.1) & $-0.448$ \\
Heart rate (bpm)        & Mean (SD) & 112.9 (40.9) & 118.9 (41.5) & 115.2 (41.2) &  0.147   \\
Respiratory rate (/min) & Mean (SD) &  29.0 (14.0) &  26.7 (14.2) &  28.1 (14.1) & $-0.166$ \\
Temperature ($^{\circ}$C) & Mean (SD) & 37.6 (1.7) &  37.6 (1.8) &  37.6 (1.8) & $-0.021$ \\
PaO$_2$/FiO$_2$         & Mean (SD) & 240.6 (116.7) & 192.4 (105.5) & 222.3 (115.0) & $-0.428$ \\
WBC ($\times10^{3}/\mu$L) & Mean (SD) & 15.3 (11.4) & 16.3 (12.6) & 15.7 (11.9) &  0.085   \\
Albumin (g/dL)          & Mean (SD) &  3.2 (0.7)   &   3.0 (0.9)  &   3.1 (0.8)  & $-0.239$ \\
Hematocrit (\%)         & Mean (SD) & 32.7 (8.8)   &  30.5 (7.4)  &  31.9 (8.4)  & $-0.264$ \\
Bilirubin (mg/dL)       & Mean (SD) &  2.0 (4.4)   &   2.7 (5.3)  &   2.3 (4.8)  &  0.148   \\
Creatinine (mg/dL)      & Mean (SD) &  1.9 (2.0)   &   2.5 (2.1)  &   2.1 (2.1)  &  0.270   \\
Sodium (mEq/L)          & Mean (SD) & 137.0 (7.7)  & 136.3 (7.6)  & 136.8 (7.7)  & $-0.092$ \\
Potassium (mEq/L)       & Mean (SD) &  4.1 (1.0)   &   4.1 (1.0)  &   4.1 (1.0)  & $-0.027$ \\
PaCO$_2$ (mmHg)         & Mean (SD) & 40.0 (14.2)  &  36.8 (11.0) &  38.8 (13.2) & $-0.241$ \\
pH                      & Mean (SD) & 7.39 (0.11)  &  7.38 (0.11) &  7.39 (0.11) & $-0.121$ \\
Urine output (mL/day)   & Mean (SD) & 2050 (981)   &  2056 (1169) &  2052 (1056) &  0.006   \\
Weight (kg)             & Mean (SD) &  65.0 (29.5) &   72.4 (27.7) &  67.8 (29.1) &  0.254   \\[5pt]

\multicolumn{6}{l}{\textit{Functional status ($W$)}} \\[1pt]
ADL deficit score & Mean (SD) & 0.37 (1.16) & 0.18 (0.81) & 0.30 (1.05) & $-0.179$ \\[5pt]

\multicolumn{6}{l}{\textit{Comorbid illness history ($W$)}} \\[1pt]
Cardiac history          & $n$ (\%) &  567 (16.0\%) &  446 (20.4\%) & 1013 (17.7\%) &  0.116   \\
Congestive heart failure & $n$ (\%) &  596 (16.8\%) &  425 (19.5\%) & 1021 (17.8\%) &  0.070   \\
Dementia                 & $n$ (\%) &  413 (11.6\%) &  151 (6.9\%)  &  564 (9.8\%)  & $-0.163$ \\
Psychiatric              & $n$ (\%) &  286 (8.1\%)  &  100 (4.6\%)  &  386 (6.7\%)  & $-0.143$ \\
Chronic pulmonary dis.   & $n$ (\%) &  774 (21.8\%) &  315 (14.4\%) & 1089 (19.0\%) & $-0.192$ \\
Renal disease            & $n$ (\%) &  149 (4.2\%)  &  106 (4.9\%)  &  255 (4.4\%)  &  0.032   \\
Liver disease            & $n$ (\%) &  265 (7.5\%)  &  136 (6.2\%)  &  401 (7.0\%)  & $-0.049$ \\
GI bleeding              & $n$ (\%) &  131 (3.7\%)  &   54 (2.5\%)  &  185 (3.2\%)  & $-0.070$ \\
Malignancy               & $n$ (\%) &  872 (24.6\%) &  444 (20.3\%) & 1316 (22.9\%) & $-0.101$ \\
Immunosuppression        & $n$ (\%) &  907 (25.5\%) &  636 (29.1\%) & 1543 (26.9\%) &  0.080   \\
Transfer from hospital   & $n$ (\%) &  335 (9.4\%)  &  327 (15.0\%) &  662 (11.5\%) &  0.170   \\
Acute MI history         & $n$ (\%) &  105 (3.0\%)  &   95 (4.3\%)  &  200 (3.5\%)  &  0.074   \\[5pt]

\multicolumn{6}{l}{\textit{Admission diagnosis subcategories ($W$)}} \\[1pt]
Respiratory      & $n$ (\%) & 1481 (41.7\%) &  632 (28.9\%) & 2113 (36.8\%) & $-0.270$ \\
Cardiovascular   & $n$ (\%) & 1007 (28.4\%) &  924 (42.3\%) & 1931 (33.7\%) &  0.295   \\
Neurological     & $n$ (\%) &  575 (16.2\%) &  118 (5.4\%)  &  693 (12.1\%) & $-0.353$ \\
Gastrointestinal & $n$ (\%) &  522 (14.7\%) &  420 (19.2\%) &  942 (16.4\%) &  0.121   \\
Renal            & $n$ (\%) &  147 (4.1\%)  &  148 (6.8\%)  &  295 (5.1\%)  &  0.116   \\
Metabolic        & $n$ (\%) &  172 (4.8\%)  &   93 (4.3\%)  &  265 (4.6\%)  & $-0.028$ \\
Haematological   & $n$ (\%) &  239 (6.7\%)  &  115 (5.3\%)  &  354 (6.2\%)  & $-0.062$ \\
Sepsis           & $n$ (\%) &  515 (14.5\%) &  516 (23.6\%) & 1031 (18.0\%) &  0.234   \\
Trauma           & $n$ (\%) &   18 (0.5\%)  &   34 (1.6\%)  &   52 (0.9\%)  &  0.104   \\
Orthopaedic      & $n$ (\%) &    3 (0.1\%)  &    4 (0.2\%)  &    7 (0.1\%)  &  0.027   \\

\end{longtable}
\endgroup


\end{appendices}

\end{document}